\newtheorem{theorem}{Theorem}[section]
\newtheorem{lemma}[theorem]{Lemma}
\newtheorem{proposition}[theorem]{Proposition}
\newcommand{\proofendsign}{\hfill\hbox{\vrule width 7pt depth 0pt height 7pt} \par\vspace{10pt}}
\newenvironment{proof}{\par\noindent {\it Proof:} \hspace{7pt}}%
{\proofendsign}
\newcommand{\bbbone}{{\mathchoice {\rm 1\mskip-4mu l} {\rm 1\mskip-4mu l}    {\rm 1\mskip-4.5mu l} {\rm 1\mskip-5mu l}}}
\newcommand{\bC}{{\mathbb C}}
\newcommand{\bM}{{\mathbb M}}
\newcommand{\bN}{{\mathbb N}}
\newcommand{\bR}{{\mathbb R}}
\newcommand{\bX}{{\mathbb X}}
\newcommand{\bZ}{{\mathbb Z}}
\newcommand{\al}{\alpha}
\newcommand{\be}{\beta}
\newcommand{\de}{\delta}
\newcommand{\ep}{\epsilon}
\newcommand{\et}{\eta}
\newcommand{\ze}{\zeta}
\renewcommand{\th}{\theta} 
\newcommand{\ka}{\kappa}
\newcommand{\la}{\lambda}
\newcommand{\rh}{\rho}
\newcommand{\si}{\sigma}
\newcommand{\ps}{\psi}
\newcommand{\om}{\omega}
\newcommand{\vth}{\vartheta}
\newcommand{\veps}{\varepsilon}
\newcommand{\Th}{\Theta}
\newcommand{\La}{\Lambda}
\newcommand{\Si}{\Sigma}
\newcommand{\bGa}{{{\rm I}\kern-.16em \Gamma}}
\newcommand{\psq}{{\bar\psi}}
\newcommand{\cA}{{\cal A}}
\newcommand{\cB}{{\cal B}}
\newcommand{\cF}{{\cal F}}
\newcommand{\cI}{{\cal I}}
\newcommand{\cJ}{{\cal J}}
\newcommand{\cM}{{\cal M}}
\newcommand{\cO}{{\cal O}}
\newcommand{\del}{\partial}
\newcommand{\gtoas}[1]{{\quad\mathop{\longrightarrow}\limits_{#1}\quad}}
\newcommand{\ve}[1]{{\bf #1}}
\newcommand{\abs}[1]{{\left\vert #1 \right\vert}}
\newcommand{\norm}[1]{{\left\Vert #1 \right\Vert}}
\newcommand{\Ref}[1]{$(\ref{#1})$}
\newcommand{\True}[1]{\; \bbbone\left( #1 \right) \;}
\newcommand{\sfrac}[2]{{\textstyle \frac{#1}{#2}}}
\newcommand{\vol}{{\hbox{ vol }}}
\newcommand{\const}{\hbox{ \rm const }}
\newcommand{\half}{\frac{1}{2}}
\newif\ifintrmk
\newcommand{\sst}{\scriptstyle}
\newcommand{\tst}{\textstyle}
\def\suffix{ps}
\def\ifundefined#1{\expandafter\ifx\csname#1\endcsname\relax}
\def\figdir{}\fi
\newdimen\pswidth  \newdimen\xleft
\newdimen\psheight \newdimen\ytop \newdimen\ybot
\newdimen\vpos \newtoks\labeL 
\newread\labeLfile \newdimen\xcoord \newdimen\ycoord
\newif\ifdoit 
\newbox\labox
\newdimen\xdvikwid 
\newdimen\xdvikht
\newdimen\pspoints
\newdimen\rwi
\def\readdim#1{\global\read\labeLfile to \temp
\global #1=\temp pt}
\def\figcrop#1{\par
\openin\labeLfile=\figdir#1.lbl                                              
\global\read\labeLfile to\firstline\message{#1}               
\global\read\labeLfile to\temp
\readdim{\ybot}
\readdim{\xleft}
\readdim{\ytop}
\global\read\labeLfile to\justx
\global\read\labeLfile to\justy
\global\read\labeLfile to\labeL
\readdim{\pswidth}
\global\advance\pswidth by -\xleft
\readdim{\psheight}
\global\advance\ybot by -\psheight
\global\advance\psheight by -\ytop
\global\read\labeLfile to\justx
\global\read\labeLfile to\justy
\global\read\labeLfile to\labeL
\vbox to\psheight{\vfill
\ifnum\system=1
\ifnum\system=2
\ifnum\system=3
                                                 \fi         
\ifnum\system=4
\ifnum\system=1
\hbox to \pswidth{\kern-\xleft\special{postscriptfile \figdir#1.\suffix }\hfil}\fi
\ifnum\system=2
\hbox to \pswidth{\kern-\xleft\special{ps: plotfile \figdir#1.\suffix }\hfil}\fi
\ifnum\system=3
\hbox to \pswidth{\kern-\xleft\includegraphics{\figdir#1.\suffix}\hfil}\fi
\ifnum\system=4
\hbox to \pswidth{\kern-\xleft\includegraphics{\figdir#1.\suffix}\hfil}\fi
\ifnum\system=5
\hbox to \pswidth{\kern-\xleft\includegraphics{\figdir#1.\suffix}\hfil}\fi 
\ifnum\system=6
   \xdvikwid=\pswidth
   \xdvikht=\psheight
   {\global\divide\xdvikwid by \pspoints}
   {\global\divide\xdvikht by \pspoints}
   \rwi=\xdvikwid
    {\global\multiply\rwi by 10}
\hbox to \pswidth{\kern-\xleft\includegraphics{\figdir#1.\suffix\space}\hfil}\fi                   
\vskip -\baselineskip
\vskip -\ybot 
\vskip-\psheight %
\hbox to\pswidth  {\hss}%
\parindent=0pt\offinterlineskip                                       
\vpos=0 pt%
\loop\readdim{\xcoord}                                 
\ifdim \xcoord < -999pt \doitfalse\else\doittrue\fi                        
\ifdoit \advance \xcoord by -\xleft
\readdim{\ycoord}
\advance \ycoord by -\ytop                              
\global\read\labeLfile to\justx                                       
\global\read\labeLfile to\justy                                       
\global\read\labeLfile to\labeL
\global\setbox\labox=\hbox{\labeL\hskip-0.3em}%
\advance\vpos by-\ycoord                                              
\vskip-\vpos \vpos=\ycoord                                         
\hbox to\pswidth{\hskip\xcoord %
\hbox to 0pt{\ifnum\justx>0\hss\fi%
\vbox to0pt{%
\ifnum\justy<2\vss\fi%
\copy\labox\kern0pt%
\ifnum\justy>0\vss\fi}%
\ifnum\justx<2\hss\fi}%
\hss}%
\repeat%
\advance\vpos by-\psheight%
\vskip-\vpos %
}\closein\labeLfile}
\def\figplace#1#2#3{
\openin\labeLfile=\figdir#1.lbl
\ifeof \labeLfile
       \immediate\write16{***Can't find \figdir#1.lbl; Skipping it.***}
\else  \closein\labeLfile
       \null\hskip#2\raise #3 \hbox{\figcrop{#1}}
\fi
}
\def\figput#1{
\openin\labeLfile=\figdir#1.lbl
\ifeof \labeLfile
       \immediate\write16{***Can't find \figdir#1.lbl; Skipping it.***}
\else  \closein\labeLfile
       \hbox{\figcrop{#1}}
\fi
}
\def\figdir{fig/}
\newcommand{\set}[2]{\big\{ \ #1\ \big|\ #2\ \big\}}
\newcommand{\E}{{\rm e}}
\newcommand{\I}{{\rm i}}
\newcommand{\dd}{{\rm d}}
\newcommand{\Feb}{f_\beta}
\newcommand{\Beb}{b_\beta}
\newcommand{\Deb}{\delta_\beta}
\newcommand{\rx}{{\rm x}}
\newcommand{\ry}{{\rm y}}
\newcommand{\Ha}{{\bf H1}}
\newcommand{\Hb}{{\bf H2}}
\newcommand{\Hc}{{\bf H3}}
\newcommand{\Hd}{{\bf H4}}
\newcommand{\He}{{\bf H5}}
\newcommand{\Hf}{{\bf H6}}
\renewcommand{\const}{\mbox{ \rm const }}
\newcommand{\bx}{{\bf x}}
\newcommand{\bk}{{\bf k}}
\newcommand{\bzer}{{\bf 0}}
\newcommand{\bp}{{\bf p}}
\newcommand{\bq}{{\bf q}}
\newcommand{\cst}[2]{{\rm const}^{#1}_{#2}}
\newcommand{\bnabla}{\nabla}
\renewcommand{\vol}{\mbox{ \rm  Vol }}
\newcommand{\tze}{\tilde\zeta}
\newcommand{\mM}{\bM}
\newcommand{\rv}{{\rm v}}
\newcommand{\dimp}{{\bar\veps}}
\newcommand{\pmb}[1]{\setbox0=\hbox{#1}       
     \kern-.025em\copy0\kern-\wd0
     \kern.05em\copy0\kern-\wd0
     \kern-.025em\box0}             
\begin{document}

\title{Singular Fermi Surfaces II. \\ The Two--Dimensional Case}

\author{Joel Feldman$^1$\footnote{feldman@math.ubc.ca; supported by NSERC of Canada}
 and Manfred Salmhofer$^{1,2}$\footnote{salmhofer@itp.uni-leipzig.de; supported by DFG grant Sa-1362, an ESI senior research fellowship, and NSERC of Canada} \\
\small  $^1$ Mathematics Department, The University of British Columbia \\
\small 1984 Mathematics Road, Vancouver, B.C., Canada V6T 1Z2
\\
\small $^2$ Theoretische Physik, Universit\"{a}t Leipzig \\
\small Postfach 100920, 04009 Leipzig, Germany
}

\date{\today}

\maketitle

\begin{abstract}
\noindent
We consider many--fermion systems with singular Fermi surfaces, 
which contain {\em Van Hove points} 
where the gradient of the band function $\bk \mapsto e(\bk)$ vanishes. 
In a previous paper, we have treated the case of spatial dimension $d \ge 3$. 
In this paper, we focus on the more singular case $d=2$
and establish properties of the fermionic self--energy 
to all orders in perturbation theory. 
We show that there is an asymmetry 
between the spatial and frequency derivatives of the self--energy. 
The derivative with respect to the Matsubara frequency 
diverges at the Van Hove points, but, surprisingly, 
the self--energy is $C^1$  in the spatial momentum 
to all orders in perturbation theory, 
provided the Fermi surface is curved away from the Van Hove points.
In a prototypical example, 
the second spatial derivative behaves similarly to the 
first frequency derivative. 
We discuss the physical significance of these findings.
\end{abstract}

\tableofcontents
\goodbreak

\section{Introduction}

In this paper, we continue our analysis of (all--order) perturbative 
properties of the self--energy and the correlation functions 
in fermionic systems with a fixed non--nested singular Fermi surface.
That is, the Fermi surface contains {\em Van Hove points}, 
where the gradient of the dispersion function vanishes, 
but satisfies a no--nesting condition away from these points. 
In a previous paper \cite{SFS1}, we treated the case of spatial dimensions $d \ge 3$. 
Here we focus on the two--dimensional case, where the effects
of the Van Hove points are strongest. 

We have given a general introduction to the problem and some 
of the main questions in \cite{SFS1}. 
As discussed in \cite{SFS1}, the no--nesting hypothesis is 
natural from a theoretical point of view, because it 
separates effects coming from the saddle points and nesting 
effects. Moreover, generically, nesting and Van Hove effects do not 
occur at the same Fermi level. 
In the following, we discuss those aspects of the problem that 
are specific to two dimensions. 
As already discussed in \cite{SFS1}, the effects caused by saddle
points of the dispersion function lying on the Fermi  surface are 
believed to be strongest in two dimensions (we follow the usual 
jargon of calling the level set the Fermi ``surface'' even though 
it is a curve in $d=2$). 
Certainly, the Van Hove singularities in the density of states of the noninteracting
system are strongest in $d=2$.  As concerns many--body properties, 
we have shown in \cite{SFS1} that for $d \ge 3$, the overlapping loop 
estimates of \cite{FST1} carry over essentially unchanged, which implies
differentiability of the self--energy and hence a 
quasiparticle weight ($Z$--factor) close to 1
to all orders in renormalized perturbation theory.  

In this paper, we show that for $d=2$, there are more drastic changes.
Namely, there is an asymmetry between the derivatives of the self--energy
$\Sigma (q_0,\bq) $ with respect to the frequency variable $q_0$ 
and the spatial momentum $\bq$. 
We prove that the spatial gradient $\nabla \Sigma$ is a bounded function
to all orders in perturbation theory if the Fermi surface satisfies a
no--nesting condition. By explicit calculation, we show that for a standard
saddle point singularity, even the second--order contribution 
$\partial_0\Sigma_2 (q_0,\bq_s)$ diverges as $(\log |q_0|)^2$ at any 
Van Hove point 
$\bq_s$ (if that point is on the Fermi surface).

This asymmetric behaviour is unlike the behaviour in all other cases
that are under mathematical control: in one dimension, both $\del_0 \Sigma_2$
and $\del_1 \Sigma_2$ diverge like $\log |q_0|$ at the Fermi point. 
This is the first indication for vanishing of the $Z$--factor and the 
occurrence of anomalous decay exponents in this model. The point is, 
however, that once a suitable $Z$--factor is extracted, $ Z \del_1 \Sigma_2$
remains of order 1
in one dimension, while for two--dimensional singular Fermi surfaces,
the $\bp$--dependent function $Z(\bp) \nabla \Sigma (\bp)$ vanishes 
at the Van Hove points.
In higher dimensions $d \ge 2$, and with a regular
Fermi surface fulfilling a no-nesting condition very similar to 
that required here, $\Sigma$ is continuously differentiable 
both in $q_0$ and in $\bq$. Thus it is really the Van Hove points 
on the Fermi surface that are responsible for the asymmetry. In the last section 
of this paper, we point out some possible (but as yet unproven) consequences 
of this behaviour. 

Our analysis is partly motivated by the two--dimensional Hubbard model,
a lattice fermion model with a local interaction and a dispersion relation 
$\bk \mapsto e(\bk)$ which, in suitable energy units, reads
\begin{equation}\label{hubdisp}
e(\bk) =
-\cos k_1 - \cos k_2 
+ \theta (1+ \cos k_1 \cos k_2) - \mu .
\end{equation}
The parameter $\mu $ is the chemical potential, used to adjust the particle density,
and $\theta$ is a ratio of hopping parameters.  As we shall explain now, 
the most interesting parameter range is $\mu \approx 0$ and 
$$
0 < \theta < 1 .
$$
The zeroes of the gradient of $e$ are at $(0,0)$, $(\pi,\pi)$ and at $(\pi,0)$, $(0,\pi)$.
The first two are extrema, and the last two are the saddle points relevant for van Hove
singularities (VHS). For $\mu =0$, both saddle points are on the Fermi surface. 
For $\theta = 1$ the Fermi surface degenerates to the pair of lines
$\{k_1 =0\} \cup \{ k_2 = 0\}$, so we assume that $\theta < 1$. 
For $\theta =0$ and $\mu=0$, the Fermi surface becomes the so--called Umklapp surface
$
U = \{ k: k_1 \pm k_2 = \pm\pi\},
$ 
which is nested since it has flat sides.
For $ 0 < \theta < 1$  the Fermi surface
at $\mu =0$ has nonzero curvature away from the Van Hove points $(\pi,0)$ and
$(0,\pi)$. Viewed from the point $(\pi,\pi)$, it encloses a 
strictly convex region (as a subset of $\bR^2$). 
There is ample evidence that in the Hubbard  model, 
it is the parameter range $\theta > 0$ and  electron density near to the van Hove density
($\mu \approx 0$) that is relevant  for high--$T_c$ superconductivity 
(see, e.g.\ \cite{Markiewicz,BorKor,FRS,HM,HSFR}). 
In this parameter region, an important kinematic property is that
the two saddle points at $(\pi,0)$ and $(0,\pi)$ are connected by 
the vector $Q=(\pi,\pi)$, which has the property that $2Q=0$ mod $2\pi \bZ^2$.
This modifies the leading order flow of the 
four--point function strongly (Umklapp scattering, \cite{FRS, HM,HSFR}).
The bounds we discuss here hold both in presence and absence of Umklapp 
scattering. 

The interaction of the fermions is given by $\lambda \hat v$, 
where $\lambda$ is the coupling constant and $\hat v$ is the 
Fourier transform of the two--body potential defining a density--density interaction. 
For the special case
of the Hubbard model, two fermions interact only if they are at the same 
lattice point, so that $\hat v (\bk ) =1$.
Despite the simplicity of the Hamiltonian, little is known rigorously about the 
low--temperature phase diagram of the Hubbard model, even for small $|\la|$. 
In this paper, we do perturbation theory to all orders, 
i.e.\ we treat $\la $ as a formal expansion parameter.
For a discussion of the relation of perturbation theory to all orders
to renormalization group flows obtained from truncations of the 
RG hierarchies, see the Introduction of \cite{SFS1}.

Although our analysis is motivated by the Hubbard model, 
it applies to a much more general class of models.
In this paper, we shall need only that the band function $e$ 
has enough derivatives, as stated below, 
and a similar condition on the interaction. 
In fact, the interaction is allowed to be more general than just 
a density--density interaction: it may depend on frequencies, 
as well as the spin of the particles. See \cite{FST1,FST2,FST3,FST4} for details.
As far as the singular points of $e$ are concerned, we require that they are nondegenerate. 
The precise assumptions on $e$ will be stated in detail below. 

We add a few remarks to put these assumptions into perspective.
No matter if we start with a lattice model or a 
periodic Schr\" odinger operator describing Bloch electrons
in a crystal potential, the band function given by the 
Hamiltonian for the one--body problem is, under very mild conditions, 
a smooth, even analytic function. 
In such a class of functions, the occurrence 
of degenerate critical points is nongeneric, i.e.\ measure zero. 
In other words, if 
$$
e(\bk_s +  R \bk) 
=
- \veps_1 k_1^2 + \veps_2 k_2^2 + \ldots
$$
around a Van Hove point $\bk_s$ (here $R$ is a rotation that diagonalizes the Hessian 
at $\bk_s$),  getting even one of the two prefactors $\veps _i$
to vanish in a Taylor expansion requires a fine--tuning of the hopping parameters, 
in addition to the condition that the VH points are on $S$. 
Thus, in a one--body theory, an {\em extended VHS}, where
the critical point becomes degenerate because, say, $\veps_1$ vanishes,
is nongeneric. On the other hand, experiments suggest
\cite{BorKor} that $\veps_1$ is very small in some materials, 
which seem to be modeled well by Hubbard--type band functions.
On the theoretical side, in a  renormalized expansion with counterterms,
it is not the dispersion relation of the noninteracting system, but that
of the interacting system, which appears in all fermionic covariances. 
It is thus an important theoretical question to decide what 
effects the interaction has on the dispersion relation and
in particular whether an extended VHS can be caused by the interaction. 
We shall discuss this question further in Section \ref{discussion}.

\section{Main Results} 
In this section we state our hypotheses on the dispersion function 
and the Fermi surface, and then state our main result. 

\subsection{Hypotheses on the dispersion relation}
We make the following hypotheses on the dispersion relation $e$ 
and its Fermi surface $\cF=\set{\bk}{e(\bk)=0}$ in $d=2$.

\begin{itemize}

\item[\Ha] $\set{\bk}{|e(\bk)|\le 1}$ is compact. 

\item[\Hb] $e(\bk)$ is $C^r$ with $r\ge 7$.

\item[\Hc] $e(\tilde\bk)=0$ and $\bnabla e(\tilde\bk)=\bzer$ simultaneously
      only for finitely many $\tilde\bk$'s, called {\em Van Hove points} or {\em singular points}.

\item[\Hd] If $\tilde\bk$ is a singular point then         
            $\big[\sfrac{\partial^2\hfill}{\partial\bk_i\partial\bk_j} e(\tilde\bk)
                \big]_{1\le i,j\le d}$ is nonsingular and has  
                one positive eigenvalue and one negative eigenvalue.

\item[\He] There is at worst polynomial flatness. This means the following.
               Let $\tilde\bk\in\cF$. Suppose that 
               $k_2-\tilde k_2=f(k_1-\tilde k_1)$ is a $C^{r-2}$
               curve contained in $\cF$ in a neighbourhood of $\tilde\bk$.
               (If $\tilde\bk$ is a singular point, there can be two such
               curves.)
               Then some derivative of $f(x)$ at $x=0$ of order at least 
               two and at most $r-2$ does not vanish. Similarly if the
               roles of the first and second coordinates are exchanged.

\item[\Hf] There is at worst polynomial nesting. This means the following.
               Let $\tilde\bk\in\cF$ and $\tilde\bp\in\cF$ with 
               $\tilde\bk\ne\tilde\bp$. Suppose that 
               $k_2-\tilde k_2=f(k_1-\tilde k_1)$ is a $C^{r-2}$
               curve contained in $\cF$ in a neighbourhood of $\tilde\bk$
               and $k_2-\tilde p_2=g(k_1-\tilde p_1)$ is a $C^{r-2}$
               curve contained in $\cF$ in a neighbourhood of $\tilde\bp$.
               Then some derivative of $f(x)-g(x)$ at $x=0$ of order 
               at most $r-2$ does not vanish. Similarly if the roles of the
               first and second coordinates are exchanged.
               If $e(\bk)$ is not even, we further assume a similar
               nonvanishing when $f$ gives a curve in $\cF$ in a neighbourhood
               of any $\tilde\bk\in\cF$ and $g$ gives a curve in $-\cF$ in
               a neighbourhood of any $\tilde\bp\in-\cF$.
\end{itemize}

\noindent
We denote by $n_0$ the largest nonflatness or nonnesting order plus one,
and assume that 
$$
r \ge 2n_0+1
$$
The Fermi surface for the Hubbard model with $0<\theta<1$ and $\mu=0$, 
when viewed from $(\pi,\pi)$, encloses a convex region. 
See the figure below. It has nonzero
curvature except at the singular points. If one writes the equation of 
(one branch of) the Fermi surface near the singular point $(0,\pi)$ in the form 
$k_2-\pi=f(k_1)$, then $f^{(3)}(0)\ne 0$.  So this Fermi surface
satisfies the nonflatness and no--nesting conditions with $n_0=4$.
$$
\figput{hubbardpipi}
$$

\subsection{Main theorem}

In the following, we state our main results about the fermionic self--energy.
A discussion will be given at the end of the paper, in Section \ref{discussion}.

\begin{theorem}\label{th:si}
Let $\cB = \bR^2/2\pi\bZ^2$ and 
$e \in C^7(\cB,\bR)$. Assume that the Fermi surface 
$S = \{ \bk \in \cB : e(\bk ) =0\}$ contains points where $\nabla e(\bk) =0$, 
and that the Hessian of $e$ at these points is nonsingular. 
Moreover, assume that away from these points, the Fermi surface
can have at most finite--order tangencies with its (possibly reflected)
translates and is at most polynomially flat. [These hypotheses have 
been spelled out in detail in \Ha--\Hf\  above.] 
As well, the interaction $v$ is assumed to be short--range, so that its Fourier
transform $\hat v$ is $C^2$. 

Then there is a counterterm function $K \in C^1 (\cB,\bR)$, 
given as a formal power series $K = \sum_{r\ge 1} K_r \la^r$
in the coupling constant $\la$, such that the renormalized expansion 
for all Green functions, at temperature zero, is finite to all orders in $\la$. 

\begin{enumerate}

\item\label{erschtns}
The self--energy is given as a formal power series 
$\Si = \sum_{r \ge 1} \Si_r \la^r$, 
where for all $r \in \bN$ and all $\omega \in \bR$, the function
$\bk \mapsto \Si_r (\omega ,\bk)\in C^1(\cB, \bC)$. 
Specifically, we have 
\begin{eqnarray}
\norm{\Si_r}_\infty 
&\le& 
\const
\nonumber\\
\norm{\nabla \Si_r}_\infty
&\le&
\const
\nonumber
\end{eqnarray}
with the constants depending on $r$. 
Moreover, the function 
$\omega \mapsto \Si_r (\omega ,\bk)$ is $C^1$ in $\omega$ for all 
$k \in \cB \setminus \overline{V}$, where $V $ denotes the integer lattice 
generated by all van Hove points, and the bar means the closure in $\cB$. 

\item\label{zweitns}
For $e$ given by the normal form $e(\bk) = k_1 k_2$, which has a van Hove point 
at $\bk ={\bf 0}$, the second order contribution $\Sigma_2$ to the self--energy obeys 
\begin{eqnarray*}
\mbox{ Im } \del_\om \Si_2(\om,0)
&=&
- a_1\; (\log |\om|)^2 
+ O (|\log|\om||)\\
\mbox{\rm  Re }\del_{k_1}\del_{k_2}\Si_2(\om,0)
&=&
\hphantom{-}
a_2 \; (\log |\om|)^2 + O (|\log|\om||)
\end{eqnarray*}
$\del_{k_1}^2\Si_2(\om,0)$ and $\del_{k_2}^2\Si_2(\om,0)$
grow at most linearly in $\log |\om|$.
The explicit values of $a_1> 0$ and $a_2 > 0$ are given
in Lemmas \ref{lowerbound} and \ref{xietderiv}, below. 

\end{enumerate}
\end{theorem}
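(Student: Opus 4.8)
The plan is to treat the two parts of the theorem by very different methods: Part 1 is the hard analytic estimate requiring multiscale power counting, while Part 2 is an explicit second‑order computation. For Part 1, I would first set up renormalized perturbation theory in the standard Feldman--Salmhofer--Trubowitz framework: introduce a scale decomposition of the propagator near the Fermi surface, with the $j$‑th slice supported on $|e(\bk)| \sim M^{-j}$, and define the counterterm $K$ order by order by the usual fixing condition so that the renormalized Fermi surface stays fixed. The boundedness $\|\Si_r\|_\infty \le \const$ is essentially the statement that the value of any renormalized two‑legged graph is summable over scales; near a regular point of $S$ the volume‑improvement/overlapping‑loop estimates of \cite{FST1} give this, and near a Van Hove point one uses the sectorization adapted to the saddle geometry (as in \cite{SFS1}) plus the polynomial‑flatness hypothesis \He\ to control the degenerate directions. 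The key new point — boundedness of $\nabla\Si_r$ in $d=2$ — comes from the observation that a spatial derivative hitting an internal line produces $\nabla_{\bk} C_j(\bk) \sim M^{j}\,|\nabla e(\bk)|\,C_j(\bk)$, so the dangerous factor $M^{j}$ is compensated by the smallness of $|\nabla e|$ near the Van Hove point; the no‑nesting hypothesis \Hf\ is exactly what is needed so that the one‑loop (bubble) subintegral still yields a volume gain uniform in the external spatial momentum, making the resulting series over $j$ convergent. The frequency statement is the complementary fact: $\partial_\omega C_j \sim M^{j} C_j$ with \emph{no} compensating gradient factor, so the best one gets is $\sum_j M^{-j}\cdot M^{j} \cdot(\text{log factors})$, which is only finite away from $\overline V$; this is proved by the same sectorized power counting, now with the Van Hove slices excised, i.e. restricting $\bk \in \cB\setminus\overline V$.

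For the $\nabla\Si_r$ bound I would organize the induction on $r$ using the tree/forest expansion for the renormalized amplitude, so that the derivative is distributed by Leibniz either onto a single propagator, onto a sector cutoff, or onto a renormalization (counterterm) vertex, and in each case estimate the resulting amplitude by the sup over scales of the sectorized bound. The main obstacle — and the reason $d=2$ is special — is keeping track of the interplay between the two independent small quantities, $M^{-j}$ (distance to $S$) and $|\nabla e|$ (distance to the Van Hove point within $S$), in a single two‑dimensional integral: one must show that after the spatial derivative, the worst graphs behave like the first frequency derivative of the \emph{regular} case, i.e. the extra $M^j$ is beaten. Concretely this needs a careful change of variables near each Van Hove point into the normal coordinates of \Hd, a dyadic decomposition of each of the two coordinate axes, and the polynomial‑flatness/nesting input \He--\Hf\ to bound the remaining one‑dimensional integrals by powers of $j$; I expect this bookkeeping, uniform in the external momentum, to be the technical heart of the proof.

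Part 2 is a direct calculation and does not need the machinery above. For $e(\bk)=k_1 k_2$ the second‑order self‑energy is the standard sunset/particle‑hole graph, $\Si_2(\omega,\bk) = \const\,\lambda^2 \iint \hat v(\bp)\,\hat v(\bq)\, C(\omega_1,\bk_1)C(\omega_2,\bk_2)C(\omega_3,\bk_3)\,(\cdots)$ with the momenta constrained; after doing the Matsubara sums one is left with a two‑dimensional spatial integral of a product of three propagators $1/(i\omega_a - e(\bk_a))$. Setting $\bk=0$ and differentiating once in $\omega$ (resp. twice in $\bk$) brings down extra denominators, and I would evaluate the resulting integral by passing to the hyperbolic coordinates adapted to $e=k_1k_2$; the logarithmic density of states of the saddle produces one $\log|\omega|$ from the radial integration and a second from the angular (flat‑direction) integration, giving the $(\log|\omega|)^2$ leading term, while the genuinely first‑derivative‑type subleading pieces give the $O(|\log|\omega||)$ remainder. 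The coefficients $a_1,a_2>0$ emerge as explicit finite angular integrals; positivity is checked by inspection. The mixed second derivative $\partial_{k_1}\partial_{k_2}$ is the one that mirrors $\partial_\omega$ because $\partial_{k_1}\partial_{k_2}(i\omega - k_1k_2)^{-1}$ has the same degree of singularity as $\partial_\omega(i\omega-k_1k_2)^{-1}$, whereas $\partial_{k_1}^2$ and $\partial_{k_2}^2$ annihilate the worst term (since $e$ is linear in each variable separately) and hence only grow like $\log|\omega|$. The detailed evaluation is deferred to Lemmas \ref{lowerbound} and \ref{xietderiv}; the only subtlety is justifying the interchange of differentiation and integration near the singularity, which follows from the same dyadic cutoff argument used in Part 1 restricted to second order.
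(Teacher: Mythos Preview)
Your outline captures the correct overall architecture (scale decomposition, renormalized tree expansion, Leibniz distribution of the derivative) and correctly identifies one of the two mechanisms behind the boundedness of $\nabla\Sigma_r$: when the differentiated propagator sits near a Van Hove point, the factor $\nabla e$ in the numerator of $\partial_\bq C_{j_3}$ is small and compensates the extra $M^{-j_3}$. But this is only half the argument, and the other half is where the real work lies. Away from the Van Hove points (more precisely, when the momentum through the differentiated line is at distance $\ge M^{\eta j_3}$ from all singular points), $|\nabla e|$ is \emph{not} small, and you need a genuine volume improvement to beat the extra $M^{-j_3}$. The overlapping--loop estimates of \cite{FST1} and the sectorization of \cite{SFS1} that you invoke do \emph{not} survive in $d=2$ with Van Hove points; the paper replaces them by a new ingredient, a \emph{length--of--overlap} estimate (Proposition~\ref{le:length2}): for all but a set of translations $\bp$ of measure $O(\delta^2)$, the length of $\cF\cap\{|e(\bp\pm\,\cdot\,)|\le M^j\}$ is at most $(M^j/\delta)^{1/n_0}$. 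This is what converts the no--nesting/no--flatness hypotheses \He--\Hf\ into a usable volume gain, and it is proved via the normal form of Lemma~\ref{le:normalform} plus a case analysis (Lemmas~\ref{le:4.2} and \ref{le:curve2}). Your proposal does not contain this estimate or an equivalent, so the bound on $\nabla\Sigma_r$ does not close.

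There is a second gap in the all--orders part. To apply any two--loop volume argument to a general 1PI two--legged graph, you must know that the line $\ell_3$ carrying the derivative always lies on \emph{two distinct} momentum loops $\Lambda_{\ell_1},\Lambda_{\ell_2}$ generated by tree--complement lines $\ell_1,\ell_2$ of scale $\le j_{\ell_3}$. This is a purely graph--theoretic statement (Lemmas~\ref{le:overlap} and \ref{le:overlapScale}) using 1PI--ness and the compatibility of the spanning tree with the scale forest; without it you cannot set up the $(\bk,\bp)$ integral to which the length--of--overlap bound applies. Your Leibniz/tree--expansion sketch does not supply this. For Part~2, your heuristic for why $\partial_{k_1}\partial_{k_2}$ mirrors $\partial_\omega$ while $\partial_{k_i}^2$ is milder is sound, but the actual computation in the paper is not done in hyperbolic coordinates: it proceeds by explicit Matsubara summation, a sign decomposition of $[-1,1]^4$ (Appendix~\ref{bloodysigns}), and then shows that several contributions ($\zeta_2,\zeta_3$, and the imaginary parts of $\bX_1,\bX_3$) vanish in the $\beta\to\infty$ limit by symmetry and dominated--convergence arguments; the leading $(\log|\omega|)^2$ emerges only after nontrivial cancellations (e.g.\ \eqref{eieiei}) that a naive power count misses by one logarithm.
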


\medskip\noindent
Theorem \ref{th:si} is a statement about the zero temperature limit
of $\Sigma$. That is, $\Sigma_r (\om,\bk)$ and its derivatives are computed 
at a positive temperature $T=\beta^{-1}$, where they are $C^2$ in $\omega$ 
and $\bq$, and then the limit $\beta \to \infty$ is taken. 
(Because only one--particle--irreducible graphs contribute to $\Sigma_r$,
it is indeed a regular function of $\omega$ for all $\omega \in \bR$ at any
inverse temperature $\beta < \infty$.) 

The bounds to all orders stated in item \ref{erschtns} of Theorem \ref{th:si} 
generalize to low positive temperatures in an obvious way: 
the length--of--overlap estimates and the singularity analysis 
done below
only use the spatial geometry of the Fermi surface for $e$, 
which is unaffected by the temperature. The other changes 
are merely to replace some derivatives with respect to frequency 
by finite differences, which only leads to trivial changes. 

Our explicit computation of the asymptotics in the model case of  
item \ref{zweitns} of Theorem \ref{th:si} uses that  several contributions
to these derivatives vanish in the limit $\beta \to \infty$, 
and that certain cancellations occur in the remaining terms. 
For this reason, the result stated in item \ref{zweitns} is a result at zero temperature.
(In particular, the coefficients in the $O(\log |\omega|)$ terms are just 
numbers.)
However, we do not expect any significant change in the asymptotics at low 
temperature and small $\omega$ to occur. 
That is, we expect the low--temperature asymptotics to contain only terms 
whose supremum over $|\omega| \ge \pi/\beta$ is at most of order 
$(\log \beta)^2$, and the square of the logarithm to be present. 

\subsection{Heuristic explanation of the asymmetry}
\newcommand{\pmap}{P}

We refer to the different behaviour of $\nabla \Sigma$ (which is bounded to 
all orders)
and $\del_\om \Sigma$ (which is $\log^2$--divergent in second order) 
as the asymmetry in the derivatives of $\Sigma$. 
In the case of a regular Fermi surface, no--nesting implies that $\Sigma_r$ is in $C^{1+\delta}$ 
with a H\" older exponent $\delta $ that depends on the no--nesting
assumption. A similar bound was shown in \cite{SFS1} for Fermi surfaces 
with singularities in $d \ge 3$ dimensions. In \cite{SFS1}, we formulated
a slight generalization of the no-nesting hypothesis of \cite{FST1}, 
and again proved a volume improvement estimate, which implies the 
above--mentioned H\" older continuity of the first derivatives. 

In the more special case of a regular Fermi surface with strictly positive 
curvature, we have given, in  \cite{FST2,FST3}, bounds on certain second 
derivatives  of the self--energy with respect to momentum. We briefly review 
that discussion for the second--order contribution, to motivate why there is a 
difference between the spatial and the frequency derivatives. 

For simplicity, we assume a local interaction, 
and consider the infrared part of the two--loop contribution
$$
I(q_0,\bq) = 
\big\langle 
C(\omega_1,e(\bp_1)) C(\omega_2,e(\bp_2)) 
C(\omega, \tilde e) 
\big\rangle
$$
where $\omega=q_0 + v_1 \omega_1 + v_2 \omega_2$, $v_i= \pm 1$ and 
$$
\tilde e = e( \bq+v_1 \bp_1 + v_2 \bp_2)
$$
The angular brackets denote integration of
$\bp_1$ and $\bp_2$ over the two-dimensional Brillouin zone
and Matsubara summation of $\omega_1$ and $\omega_2$, 
over the set $\frac{\pi}{\beta} (2 \bZ +1)$. By infrared part we mean 
that the fermion propagators are of the form
$$
C(\om,E) 
=
\frac{U(\om^2+E^2)}{\I \om - E}
$$
where $U $ is a suitable cutoff function that is supported in a small,
fixed neighbourhood of zero.

The third denominator depends on the external momentum $(q_0,\bq)$ 
and derivatives with respect to the external momentum 
increase the power of that  denominator, which
may lead to bad behaviour as $\beta\rightarrow\infty$. 
The main idea why some derivatives behave better than expected
by simple counting of powers (see \cite{FST2}) is that in dimension two and 
higher, there are, in principle, enough integrations to make a change
of variables so that $\tilde e$, $e(\bp_1)$ and $e(\bp_2)$ all become
integration variables. This puts all dependence on the 
external variable $\bq$ into the Jacobian $J$ of this change 
of variables. If $J$ were $C^k$ with uniform bounds, 
$I(q_0,\bq)$ would be $C^k$ in the spatial momentum $\bq$, and
(by integration by parts) also in $q_0$. 
However, $J$ always has singularities, and the leading contributions to 
the derivatives of  $I(q_0,\bq)$ come from the vicinity of these singularities. 
It was proven in \cite{FST2,FST3} that if the Fermi surface is regular
and has strictly positive curvature, these
singularities of the Jacobian are harmless, provided derivatives are
taken tangential to the Fermi surface. 

To explain the change of variables, we first show it for the case
without van Hove singularities and then discuss the changes required
when van Hove singularities are present. 

In a neighborhood of the Fermi surface, we introduce coordinates 
$\rho $ and $\theta$, so that $\bp = \pmap (\rho,\theta)$. The
coordinates are chosen such that $\rho = e(\bp)$ and that 
$\pmap (\rho,\theta+\pi)$ is the antipode of $\pmap (\rho,\theta)$. 
(We are assuming that the Fermi surface is strictly convex -- see \cite{FST2}.) 
Doing this for $\bp_1$ and $\bp_2$, with corresponding Jacobian 
$J = \det \pmap'$, we have 
$$
I(q_0,\bq) = \left\lbrack
\int d \theta_1\; d\theta_2\; 
J(\rho_1,\theta_1) J(\rho_2,\theta_2)\;
C(\omega, \tilde e)
\right\rbrack_{1,2}
$$
where $\lbrack F \rbrack_{1,2}$ now denotes multiplying $F$ by 
$C(\omega_1,\rho_1) C(\omega_2,\rho_2)$ and integrating
over $\rho_1$ and $\rho_2$ and summing over the frequencies. 
To remove the $\bq$--dependence from $C(\om,\tilde e)$, 
one now wants to change variables from $\theta_1$ or $\theta_2$ to $\tilde e$. 
This works except near points where 
$$
\frac{\partial \tilde e}{\partial \theta_1}
=
\frac{\partial \tilde e}{\partial \theta_2}
= 0.
$$
These equations determine the singularities of the Jacobian.
The detailed analysis of their solutions is in \cite{FST2}. 
Essentially, if one requires that the momenta $\bp_1$ and $\bp_2$ are
on the FS, i.e.\ $\rho_1= \rho_2=0$, that $\bq=\pmap (0,\theta_\bq)$ is on 
the FS {\em and} that the sum 
$\bq+v_1 \pmap(0,\theta_1) + v_2 \pmap(0,\theta_2)$ is on the FS,
then the only solutions are 
$\theta_1 \in \{\theta_\bq,\theta_\bq+\pi\}$ and 
$\theta_2 \in \{\theta_\bq,\theta_\bq+\pi\}$.
(The general case of momenta near to the Fermi surface is then 
treated by a deformation argument which requires that 
$\nabla e \ne 0$ and that the curvature be nonzero.)
A detailed analysis of the singularity  in $J$, in which strict convexity
enters again, then implies that the self--energy is regular.

The conditions needed for the above argument fail at the Van Hove points.
But, introducing a partition of unity on the Fermi surface, they still 
hold away from the singular points. So the only contributions that may fail to 
have derivatives come from  $\bq+v_1 \bp_1 + v_2 \bp_2$ 
in a small neighbourhood of the singular points. 
When a derivative with respect to $q_0$ is taken, the integrand contains
a factor of $-\I (\I \om - \tilde e)^{-2}$.
When a derivative with respect to $\bq$ is taken, the integrand contains 
a factor $\nabla e ( \bq+v_1 \bp_1 + v_2 \bp_2) (\I \om - \tilde e)^{-2}$. 
Because we are in a small neighbourhood of the singular point, 
the numerator, $\nabla e ( \bq+v_1 \bp_1 + v_2 \bp_2)$, in the latter expression is small, and vanishes at the 
singular point. This suggests  that the first derivative with respect to $\bq$ 
may be better behaved than the first derivative with respect to 
$q_0$, as is indeed the case -- see item \ref{erschtns} 
of Theorem \ref{th:si}.  A second derivative with respect to
$\bq$ may act on the numerator, $\nabla e ( \bq+v_1 \bp_1 + v_2 \bp_2)$,
and eliminate its zero. This suggests that the second derivative with respect 
to $\bq$ behaves like the first derivative with respect to $q_0$, 
as is indeed the case -- see item \ref{zweitns} of Theorem \ref{th:si}.

The above heuristic  discussion is only to provide a motivation as to why
the asymmetry in the derivatives of $\Sigma_2$ occurs. The proof does
not make use of the idea of the change of variables to $\tilde e$, 
but rather of length--of--overlap estimates, which partially replace
the overlapping loop estimates, away from the singular points. 
This allows us to show the convergence of the first $\bq$--derivative under
conditions \Ha--\Hf, which are significantly weaker than
strict convexity, and it also allows us to treat the 
situation with Umklapp scattering, which had to be excluded in second
order in \cite{FST2}, and which is the reason for the restriction on the 
density in \cite{FST2}.

\section{Fermi Surface} 

In this section we prove bounds on the size
of the overlap of the Fermi surface with translates of a tubular neighbourhood
of the Fermi surface. These bounds make precise the geometrical 
idea that for non--nested surfaces (here: curves), the non--flatness
condition \He\ strongly restricts such lengths of overlap. 

\subsection{Normal form for $e(\bk)$ near a singular point}\label{ssecnormform}

\begin{lemma} \label{le:normalform}
Let $d=2$ and assume \Hb--\He\ with $r\ge n_0+1$. Assume that $\tilde \bk = \bzer$ is a singular
point of $e$. 
Then there are
\begin{itemize}
\item[$\circ$] integers $2\le \nu_1,\nu_2 < n_0$,
\item[$\circ$] a constant, nonsingular matrix $A$ and 
\item[$\circ$] $\!C^{r-2-\max\{\nu_1,\nu_2\}}$ functions $a(\bk)$, 
$b(\bk)$ and $c(\bk)$ that are bounded and bounded away from zero
\end{itemize}
such that in a neighbourhood of the origin
\begin{equation}\label{normalform1}
e(A\bk)=a(\bk)\big(k_1-k_2^{\nu_1}b(\bk)\big)\big(k_2-k_1^{\nu_2} c(\bk)\big) .
\end{equation}
\end{lemma}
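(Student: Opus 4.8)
The plan is to produce the factorization \eqref{normalform1} by a sequence of normalizations: first diagonalize the Hessian, then use the implicit function theorem to locate the two branches of $\mathcal F$ through the origin, then rectify them one at a time by a triangular change of coordinates, and finally read off the nonflatness exponents $\nu_1,\nu_2$ from hypothesis \He. The matrix $A$ will be the composition of a linear map diagonalizing the Hessian with the final linear rectification; the functions $a,b,c$ will emerge from the intermediate steps, and tracking the number of derivatives lost at each stage gives the regularity class $C^{r-2-\max\{\nu_1,\nu_2\}}$.

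First I would change coordinates linearly so that the Hessian of $e$ at $\bzer$ is anti-diagonal; since by \Hd\ it is nonsingular with one positive and one negative eigenvalue, after rescaling we may assume $e(\bk) = k_1 k_2 + O(|\bk|^3)$, with $e\in C^r$. Next, apply the Morse–Palais-type splitting only partially: the equation $\partial_{k_2} e(k_1,k_2)=0$ has, near $\bzer$, a unique $C^{r-1}$ solution $k_2 = \kappa(k_1)$ with $\kappa(0)=0$, by the implicit function theorem (the relevant second derivative is $\partial^2_{k_2}e(\bzer)$, which need not vanish — if it does, one solves the analogous equation for $k_1$ instead, and the two cases are symmetric). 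Then $e(k_1,k_2) = e(k_1,\kappa(k_1)) + \tfrac12 h(k_1,k_2)\,(k_2-\kappa(k_1))^2$ for a $C^{r-3}$ function $h$ bounded away from zero. Now the set $\{e=0\}$ near $\bzer$ splits into $\{k_2=\kappa(k_1)\}$ — no, more carefully: one should instead factor $e$ using that $\mathcal F$ has (at most) two branches through $\bzer$. The cleaner route is: since the Hessian has signature $(1,1)$, the two tangent directions of $\mathcal F$ at $\bzer$ are distinct; choose the linear part of $A$ so these become the coordinate axes. Then by the implicit function theorem one branch is a $C^{r-2}$ graph $k_1 = \phi(k_2)$ with $\phi(0)=\phi'(0)=0$ and the other a $C^{r-2}$ graph $k_2 = \psi(k_1)$ with $\psi(0)=\psi'(0)=0$. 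Writing $e(A\bk)$ and dividing out the two factors $(k_1-\phi(k_2))$ and $(k_2-\psi(k_1))$ — using that near $\bzer$ the zero set of $e(A\bk)$ is exactly the union of these two graphs, each a simple (transverse, hence multiplicity-one) zero of $e$ — yields $e(A\bk) = a(\bk)\,(k_1-\phi(k_2))\,(k_2-\psi(k_1))$ with $a$ bounded and bounded away from zero; a division/Hadamard-lemma argument controls the regularity, costing derivatives bounded by the orders of $\phi,\psi$.

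It remains to bring $\phi,\psi$ into the stated form $\phi(k_2)=k_2^{\nu_1}b(\bk)$, $\psi(k_1)=k_1^{\nu_2}c(\bk)$ with $2\le\nu_1,\nu_2<n_0$ and $b,c$ bounded away from zero. Here hypothesis \He\ enters: the branch $k_1=\phi(k_2)$ is a $C^{r-2}$ curve in $\mathcal F$ through the singular point, so some derivative $\phi^{(\nu_1)}(0)$ with $2\le\nu_1\le r-2$ is nonzero while $\phi'(0)=0$ (the first derivative vanishes by our choice of axes, and $\phi(0)=0$); taking $\nu_1$ minimal, Taylor's theorem with integral remainder gives $\phi(k_2)=k_2^{\nu_1}b_0(k_2)$ with $b_0(0)=\phi^{(\nu_1)}(0)/\nu_1!\neq 0$ and $b_0\in C^{r-2-\nu_1}$; absorbing the (nonvanishing) dependence we get the factor $k_2-k_2^{\nu_1}b(\bk)$ after interchanging — wait, the stated form has $k_1-k_2^{\nu_1}b(\bk)$, which matches $k_1=\phi(k_2)$ directly. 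Symmetrically for $\psi$. That $\nu_1,\nu_2<n_0$ is exactly the definition of $n_0$ as the largest nonflatness order plus one. Finally I would collect the derivative count: the linear step is free, the implicit-function step costs $2$, and the division/Taylor steps cost $\max\{\nu_1,\nu_2\}$ more, giving $a,b,c\in C^{r-2-\max\{\nu_1,\nu_2\}}$, and one checks $r\ge n_0+1$ suffices for this to be nonnegative and for the curves to be $C^{r-2}$ as \He\ requires.

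The main obstacle I anticipate is the bookkeeping in the division step — showing that after removing the two graph factors the quotient $a$ is genuinely bounded away from zero uniformly in a neighborhood, and that it has the claimed regularity rather than one derivative less. One has to argue that $e(A\bk)$ vanishes to exactly first order transversally along each branch (which follows from the Hessian having signature $(1,1)$, so that $\bzer$ is not a higher-order zero in the "off-curve" directions), and then apply a two-variable Hadamard lemma twice; the loss of derivatives must be attributed carefully to the Taylor expansions of $\phi$ and $\psi$, not to the division itself, to land on $C^{r-2-\max\{\nu_1,\nu_2\}}$ rather than something worse. A secondary subtlety is the degenerate case where a coordinate axis happens to be tangent to the wrong branch or where $\partial^2_{k_i}e(\bzer)=0$, handled by the symmetry built into the statement ("similarly if the roles of the first and second coordinates are exchanged").
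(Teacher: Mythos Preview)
Your route can be made to work, but there is a gap where you invoke ``the implicit function theorem'' to produce the branches $k_1=\phi(k_2)$ and $k_2=\psi(k_1)$: the implicit function theorem does not apply to $e$ at $\bzer$, since $\nabla e(\bzer)=\bzer$. Your earlier Morse--Palais step does not rescue this either: in your anti-diagonal normalization $e=k_1k_2+O(|\bk|^3)$ both pure second derivatives $\partial_{k_1}^2 e(\bzer)$ and $\partial_{k_2}^2 e(\bzer)$ vanish, so neither of the ``two cases'' you offer is available. To extract the branches you actually need a blow-up (set $k_2=k_1 v$, divide $e$ by $k_1^2$ via Hadamard, then apply IFT to the resulting $C^{r-2}$ quotient) or, equivalently, the Morse lemma---and this is exactly where the two lost derivatives in your otherwise correct regularity count $C^{r-2}$ come from.

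The paper starts from the Morse lemma: the $C^{r-2}$ Morse chart factors $e$ as $\la_1(x_1-\tilde\la x_2)(x_1+\tilde\la x_2)$, and taking $A$ to be the inverse of the matrix of linear parts of the two factors yields $e(A\bk)=\la_1\big(k_1-\tilde x_1(A\bk)\big)\big(k_2-\tilde x_2(A\bk)\big)$ directly, with the $\tilde x_i$ vanishing to second order---no division step needed. Each $\tilde x_i$ is then split as $\tilde x_1(A\bk)=k_1 f_1(\bk)+k_2^{\nu_1}g_1(k_2)$ by restricting to the axis $k_1=0$ (this is where \He\ is used to bound $\nu_1<n_0$), and the factor $(1-f_1)$ is absorbed into $a$. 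So the paper's $b(\bk)=g_1(k_2)/(1-f_1(\bk))$ depends on both variables, unlike your one-variable $b_0(k_2)$; both fit the stated lemma. The comparison: once you fill your gap with the Morse lemma, it already hands you the factorization for free, making your two Hadamard divisions and the transverse multiplicity-one check redundant. Your route does buy the minor simplification that $b,c$ would be functions of a single variable, but at the cost of more work.
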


\begin{proof}
Let $\la_1$ and $\la_2$ be the eigenvalues of 
$\big[\sfrac{\partial^2\hfill}{\partial\bk_i\partial\bk_j} e(\bzer)
                \big]_{1\le i,j\le 2}$ and set $\tilde\la=\sqrt{|\la_2/\la_1|}$.
By the Morse lemma \cite[Lemma 1.1 of Chapter 6]{MorselemmaH}, 
there is a $C^{r-2}$ diffeomorphism $\bx(\bk)$ 
with $\bx(\bzer)=\bzer$
such that
\begin{eqnarray}
e(\bk)
&=&
\la_1 x_1(\bk)^2+\la_2 x_2(\bk)^2
\nonumber\\
&=&
\la_1\big(\,x_1(\bk)^2-\tilde\la^2 x_2(\bk)^2\,\big)
\nonumber\\
&=&
\la_1\big(\,x_1(\bk)-\tilde\la x_2(\bk)\,\big)
                       \ \big(\,x_1(\bk)+\tilde\la x_2(\bk)\,\big)
\nonumber\\
&=&
\la_1\big(\,a_1k_1+a_2k_2-\tilde x_1(\bk)\,\big)\ 
                       \big(\,b_1k_1+b_2k_2-\tilde x_2(\bk)\,\big)
\nonumber
\end{eqnarray}
with $\tilde x_1(\bk)$ and $\tilde x_2(\bk)$ vanishing to at least order two 
at $\bk=0$. Here $a_1k_1+a_2k_2$ and $b_1k_1+b_2k_2$ are the degree one
parts of the Taylor expansions of $x_1(\bk)-\tilde\la x_2(\bk)$
and $x_1(\bk)+\tilde\la x_2(\bk)$ respectively.
Since the Jacobian $\det D\bx(\bzer)$ of the diffeomorphism at the
origin is nonzero and  
$\det\left[\begin{array}{rr}1&-\tilde\la\\ 1&\tilde\la\end{array}\right]=2\tilde\la\ne 0$, we have
$$
\det\left[\begin{array}{rr}a_1&a_2\\ b_1&b_2\\\end{array}\right]
=\det\left\{\left[\begin{array}{rr}1&-\tilde\la\\ 1&\tilde\la\end{array}\right] 
D\bx(\bzer)\right\}
\ne 0
$$
Setting
$$
A=\left[\begin{array}{rr}a_1&a_2\\ b_1&b_2\end{array}\right]^{-1}
$$
we have
$$
e(A\bk)=\la_1\big(k_1-\tilde x_1(A\bk)\big)
                       \big(k_2-\tilde x_2(A\bk)\big)
$$
Write
$$
k_1-\tilde x_1(A\bk)
=k_1-k_1f_1(\bk)-k_2^{\nu_1}g_1(k_2)
$$
with
\begin{eqnarray}
k_1f_1(\bk)
&=&
\tilde x_1(A\bk)-\tilde x_1(A\bk)\big|_{k_1=0}
\nonumber\\
k_2^{\nu_1}g_1(k_2)
&=&
\tilde x_1(A\bk)\big|_{k_1=0}
\nonumber
\end{eqnarray}
Since $\tilde x_1(A\bk)$ vanishes to order at least two  at  $\bk=0$,
$$
f_1(\bk)=\int_0^1
\big[\sfrac{\partial\hfill}{\partial k_1}\tilde x_1(A\bk)\big]_{\bk=(tk_1,k_2)}
\ dt
$$
is $C^{r-3}$ and vanishes to order at least one at $\bk=0$ 
and, in particular, $|f_1(\bk)|\le\half $ for all $\bk$ in 
a neighbourhood of the origin. We choose $\nu_1$ to be the power of the
first nonvanishing term in the Taylor expansion of 
$\tilde x_1(A\bk)\big|_{k_1=0}$. Since this function must vanish to order
at least two in $k_2$, we have that $\nu_1\ge 2$. By the nonflatness condition,
applied to the curve implicitly determined by $k_1=\tilde x_1(A\bk)$,
$\nu_1\le r-2$.
So $g_1(k_2)$ is $C^{r-2-\nu_1}$ and is bounded and bounded away from zero in 
a neighbourhood of $k_2=0$.

In a similar fashion, write
$$
k_2-\tilde x_2(A\bk)
=k_2-k_2f_2(\bk)-k_1^{\nu_2}g_2(k_1)
$$
with
\begin{eqnarray}
k_2f_2(\bk)
&=&
\tilde x_2(A\bk)-\tilde x_2(A\bk)\big|_{k_2=0}
\nonumber\\
k_1^{\nu_2}g_2(k_1)
&=&
\tilde x_2(A\bk)\big|_{k_2=0}
\nonumber
\end{eqnarray}
Then we have the desired decomposition \Ref{normalform1} with
\begin{eqnarray}
a(\bk)
&=&
\la_1\big(1-f_1(\bk)\big)\big(1-f_2(\bk)\big)\nonumber\\
b(\bk)
&=&
\frac{g_1(k_2)}{1-f_1(\bk)}\; ,
\qquad
c(\bk)
=
\frac{g_2(k_1)}{1-f_2(\bk)} .
\nonumber
\end{eqnarray}
\end{proof}\goodbreak
We remark that it is possible to impose weaker regularity hypotheses
by exploiting that $k_2^{\nu_1}b(\bk)$, resp. $k_1^{\nu_2}c(\bk)$,
is a $C^{r-3}$ function whose $k_2$, resp. $k_1$, derivatives of order 
strictly less than $\nu_1$, resp. $\nu_2$,  vanish at $k_2=0$, resp. $k_1=0$.

\subsection{Length of overlap estimates}
It follows from the normal form derived in Lemma \ref{le:normalform} that 
under the hypotheses \Hb--\He\ 
the curvature of the Fermi surface may vanish as one approaches 
the singular points. Thus, even if the Fermi surface is curved 
away from these points, there is no uniform lower bound
on the curvature. Curvature effects are very important in the 
analysis of regularity estimates, and in a situation without 
uniform bounds these curvature effects improve power counting
only at scales lower than a scale set by the rate at which the curvature
vanishes. 
Thus it becomes natural to define, at a given scale, 
scale--dependent neighbourhoods of the singular points,
outside of which curvature improvements hold. 
The estimates for the length of overlaps that we prove in this section
allow us to make this idea precise. They hold under much more 
general conditions than a nonvanishing curvature, namely 
the nonnesting and nonflatness assumptions \He\ and \Hf\
suffice. We first discuss the special case corresponding 
to the normal form in the vicinity of a singular point, and then
deal with the general case. 

\subsubsection{Length of overlap -- special case}
\begin{lemma}\label{le:4.2}
Let $\nu_1\ge 2$ and $\nu_2\ge 2$ be integers and
$$
e(x,y)= \big(x-y^{\nu_1}b(x,y)\big)\big(y-x^{\nu_2}c(x,y)\big)
$$
with $b$ and $c$ bounded and bounded away from zero and with 
$b, c\in C^{\nu_2+1}$. Let $u(x)$ obey
$$
u(x)=x^{\nu_2} c\big(x,u(x)\big)
$$
for all $x$ in a neighbourhood of $0$. That is, $y=u(x)$ lies on the
Fermi curve $e(x,y)=0$. There are constants $C$ and $D>0$ such that
for all $\veps>0$ and $0<\de\le |(X,Y)|\le D $
$$
\vol\set{x\in\bR}{|x|\le D,\ \big|e\big(X+x, Y+u(x)\big)\big|\le\veps}
\le C\big(\sfrac{\veps}{\de}\big)^{1/\nu_2}
$$
\end{lemma}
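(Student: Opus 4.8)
The plan is to use the product structure of $e$ along the translated curve and reduce everything to one elementary one-dimensional fact: if $f\in C^k(I)$ and $|f^{(k)}|\ge A$ on the interval $I$, then $\vol\{x\in I:|f(x)|\le s\}\le\const\,(s/A)^{1/k}$ (for $k=1$ this is just monotonicity). First I would dispose of the trivial regime: if $(\veps/\de)^{1/\nu_2}\ge D$, the bound $2D$ on the total volume already suffices, so assume $\veps<\de D^{\nu_2}$ and fix $D$ small --- small enough that $u$ (obtained from the implicit function theorem, hence $C^{\nu_2+1}$) is defined on $[-D,D]$, that the translated curve stays in the domain of $b$ and $c$, and that the smallness conditions below hold; note $|u(x)|\le\const\,|x|^{\nu_2}$ and $|u'(x)|\le\const\,|x|^{\nu_2-1}$. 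Along the translated curve write $e(X+x,Y+u(x))=g(x)\,h(x)$, where $g(x)=(X+x)-(Y+u(x))^{\nu_1}b(X+x,Y+u(x))$ and $h(x)=(Y+u(x))-(X+x)^{\nu_2}c(X+x,Y+u(x))$.

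The one structural input, and the only place the geometry of the curve is really used, is that since $(x,u(x))$ lies on the Fermi curve, $h(x)$ vanishes identically when $(X,Y)=(0,0)$, where it equals $u(x)-x^{\nu_2}c(x,u(x))$. Integrating along the segment from $(0,0)$ to $(X,Y)$ therefore gives $h(x)=X\,\alpha(x)+Y\,\beta(x)$ with $\alpha,\beta\in C^{\nu_2-1}([-D,D])$, and a direct computation of the $x$-derivatives yields, uniformly on $[-D,D]$: $\beta(x)=1+O(D^{\nu_2})$, $\alpha(x)=O(D^{\nu_2-1})$, and $\frac{d^{\nu_2-1}}{dx^{\nu_2-1}}h(x)=-\nu_2!\,c(0,0)\,X+O(D)\,(|X|+|Y|)$, where $c(0,0)\ne0$ because $c$ is bounded away from zero. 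This produces a dichotomy. If $|X|\ge\half\,|(X,Y)|$, then $|Y|\le\sqrt3\,|X|$, so for $D$ small the error term is at most $\half\,\nu_2!\,|c(0,0)|\,|X|$ and hence $|h^{(\nu_2-1)}(x)|\ge\const\,|X|\ge\const\,\de$ on all of $[-D,D]$; the elementary fact above then gives $\vol\{x\in[-D,D]:|h(x)|\le s\}\le\const\,(s/\de)^{1/(\nu_2-1)}$. If instead $|X|<\half\,|(X,Y)|$, then $|Y|>\frac{\sqrt3}{2}\,|(X,Y)|>|X|$ and $|h(x)|\ge|Y|\,|\beta(x)|-|X|\,|\alpha(x)|\ge|Y|\,(1-O(D^{\nu_2})-O(D^{\nu_2-1}))\ge\const\,\de$ on $[-D,D]$, so $\{|h|\le s\}$ is empty for $s<\const\,\de$ and of volume $\le2D\le\const\,(s/\de)^{1/(\nu_2-1)}$ otherwise. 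In all cases $\vol\{x\in[-D,D]:|h(x)|\le s\}\le\const\,(s/\de)^{1/(\nu_2-1)}$ for every $s>0$.

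The factor $g$ is handled trivially: since $|u'(x)|=O(D)$ and $(Y+u(x))^{\nu_1}=O(D^2)$, one gets $g'(x)=1+O(D^2)\ge\half$ on $[-D,D]$, so $g$ is monotone there and $\vol\{x\in[-D,D]:|g(x)|\le t\}\le4t$. Finally I would pick $t=(\veps/\de)^{1/\nu_2}$ and use $\{x:|g(x)h(x)|\le\veps\}\subseteq\{|g|\le t\}\cup\{|h|\le\veps/t\}$ to obtain
$$
\vol\set{x\in[-D,D]}{|e(X+x,Y+u(x))|\le\veps}\ \le\ 4t+\const\,\Big(\frac{\veps}{t\,\de}\Big)^{1/(\nu_2-1)} .
$$
With this $t$ one has $\veps/(t\,\de)=(\veps/\de)^{(\nu_2-1)/\nu_2}$, so the right-hand side is $(4+\const)\,(\veps/\de)^{1/\nu_2}$, which is the assertion.

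The main obstacle is the uniform lower bound $|h^{(\nu_2-1)}(x)|\ge\const\,\de$ in the first branch of the dichotomy: one has to track the error terms from differentiating $(X+x)^{\nu_2}c(X+x,Y+u(x))$ and $u(x)$ carefully enough to see that they carry a factor $|X|+|Y|$ rather than merely a power of $D$, so that the bound survives for arbitrarily small $\de$; and one has to recognize that precisely this estimate degenerates when $Y$ dominates, where it must be replaced by the pointwise lower bound on $h$ itself. (It is also this computation that is responsible for the exponent $1/\nu_2$: it is the $\nu_2$-branch $y=x^{\nu_2}c$ that controls the $(\nu_2-1)$-st derivative of $h$.) The remaining ingredients --- the product splitting, the one-dimensional volume estimate, and the optimisation in $t$ --- are routine.
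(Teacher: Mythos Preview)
Your proof is correct and follows a genuinely different route from the paper's. Both factor $e(X+x,Y+u(x))=g(x)h(x)$ and exploit that $h$ vanishes identically at $(X,Y)=(0,0)$; both separate a $Y$-dominated regime (where $|h|$ itself is bounded below by a multiple of $\delta$) from an $X$-dominated regime (where the $(\nu_2-1)$-st $x$-derivative of $h$ is bounded below by a multiple of $|X|$). The difference is in how the product is handled. The paper applies the interval lemma directly to $e=gh$: in the first regime it uses the product rule to get $|\partial_x e|\ge\tfrac12|Y|$ (from $|g'|\approx1$, $|h|\approx|Y|$, the cross terms being small), and in the second it uses the Leibniz formula to get $|\partial_x^{\nu_2}e|\ge\tfrac{\kappa'}{2}|X|$, the dominant contribution being $\nu_2\,g'\,h^{(\nu_2-1)}$. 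You instead estimate the level sets of $g$ and $h$ separately and recombine through the splitting $\{|gh|\le\varepsilon\}\subset\{|g|\le t\}\cup\{|h|\le\varepsilon/t\}$, optimising in $t$. Your approach is a little cleaner in that it never differentiates the product and needs one fewer derivative on $e$; the paper's approach yields the sharper bound $\const\,\varepsilon/\delta$ in the $Y$-dominated case and parallels more closely the Leibniz-rule computations reused in the general overlap estimate that follows.
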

\begin{proof}
Write
\begin{equation}\label{profo}
e\big(X+x, Y+u(x)\big)=F(x,X,Y)G(x,X,Y)
\end{equation}
with 
\begin{eqnarray}
F(x,X,Y)
&=&
X+x-(Y+u(x))^{\nu_1}b\big(X+x,Y+u(x)\big)
\nonumber\\
G(x,X,Y)
&=&
Y+u(x)-(X+x)^{\nu_2}c\big(X+x,Y+u(x)\big)
\nonumber\\
&=&
Y-\big\{(X+x)^{\nu_2}-x^{\nu_2}\big\}c\big(X+x,Y+u(x)\big)
\nonumber\\
&&
\quad - x^{\nu_2}\big\{c\big(X+x,Y+u(x)\big)-c\big(x,u(x)\big)\big\}
\nonumber
\end{eqnarray}
Observe that, for all allowed $x$, $X$ and $Y$,
$$
|F(x,X,Y)|\le\sfrac{1}{100}\qquad
\big|\sfrac{\partial\hfill}{\partial x}F(x,X,Y)\big|\ge\sfrac{99}{100}
$$
since $x$, $X$, $Y$ and $u(x)$ all have to be $O(D)$ small.
For our analysis of $G(x,X,Y)$ we consider two separate cases.

\noindent{\it Case 1: $|Y|\ge\ka |X|$} with $\ka$ a constant to be chosen
shortly. Since $c\big(X+x,Y+u(x)\big)-c\big(x,u(x)\big)$ vanishes to first
order in $(X,Y)$, for all $x$
\begin{eqnarray}
\big|x^{\nu_2}\big\{c\big(X+x,Y+u(x)\big)-c\big(x,u(x)\big)\big\}\big|
&\le&
\sfrac{1}{100}\big[|X|+|Y|\big]
\nonumber\\
\big|\sfrac{\partial\hfill}{\partial x}
\big[x^{\nu_2}\big\{c\big(X+x,Y+u(x)\big)-c\big(x,u(x)\big)\big\}\big]\big|
&\le&
\sfrac{1}{100}\big[|X|+|Y|\big]
\nonumber
\end{eqnarray}
Since $(X+x)^{\nu_2}-x^{\nu_2}$ vanishes to first
order in $X$, for all $x$
\begin{eqnarray}
\big|\big\{(X+x)^{\nu_2}-x^{\nu_2}\big\}c\big(X+x,Y+u(x)\big)\big|
&\le&
\tilde\ka |X|
\nonumber\\
\big|\sfrac{\partial\hfill}{\partial x}
\big[\big\{(X+x)^{\nu_2}-x^{\nu_2}\big\}c\big(X+x,Y+u(x)\big)\big]\big|
&\le&
\tilde\ka |X|
\nonumber
\end{eqnarray}
We choose $\ka=\max\{2,200\tilde\ka\}$. Then
$$
|G(x,X,Y)|\ge\sfrac{98}{100}|Y|\qquad
\big|\sfrac{\partial\hfill}{\partial x}G(x,X,Y)\big|\le\sfrac{2}{100}|Y|
$$
Thus, by \Ref{profo} and the product rule,
\begin{eqnarray}
\big|\sfrac{\partial\hfill}{\partial x}e\big(X+x, Y+u(x)\big)\big|
&\ge&
 \big(\sfrac{99}{100}\sfrac{98}{100}-\sfrac{1}{100}\sfrac{2}{100}\big)|Y|
\ge
\sfrac12 |Y|
\nonumber
\end{eqnarray}
and, by  Lemma \ref{le:interval},
$$
\vol\set{x\in\bR}{|x|\le D,\ \big|e\big(X+x, Y+u(x)\big)\big|\le\veps}
\le 4\sfrac{\veps}{|Y|/2}
\le 16\sfrac{\veps}{\de} .
$$

\noindent{\it Case 2: $|Y|\le\ka |X|$.}
In this case we bound the $\nu_2^{\rm th}$ $x$--derivative 
away from zero. We claim that the dominant term comes from one derivative
acting on $F$ and $\nu_2-1$ derivatives acting on $G$. Observe that for
$|X|,\ |Y|,\ |x|\le D$ with $D$ sufficiently small
\begin{eqnarray}
\abs{\sfrac{d^m\hfill}{d x^m}u(x)}
&\le&
 O(D)
\qquad\hbox{ for }0\le m<\nu_2
\nonumber
\end{eqnarray}
since $u(x)=x^{\nu_2}c\big(x,u(x)\big)$ and consequently
\begin{eqnarray}
|F(x,X,Y)|
&\le&
 O(D)
\nonumber\\
\big|\sfrac{\partial\hfill}{\partial x}F(x,X,Y)\big|
&\ge&
 1-O(D)
\nonumber\\
\big|\sfrac{\partial^m\hfill}{\partial x^m}F(x,X,Y)\big|
&\le&
 O(D)
\qquad\hbox{ for }1< m\le\nu_2
\nonumber
\end{eqnarray}
Furthermore,
since $c\big(X+x,Y+u(x)\big)-c\big(x,u(x)\big)$ vanishes to first
order in $(X,Y)$, for all $x$,
$$
\big|\sfrac{\partial^m\hfill}{\partial x^m}
\big[x^{\nu_2}\big\{c\big(X+x,Y+u(x)\big)-c\big(x,u(x)\big)\big\}\big]\big|
\le
 O(D)\big[|X|+|Y|\big]
$$
for $0\le  m<\nu_2$ and
$$
\big|\sfrac{\partial^m\hfill}{\partial x^m}
\big[x^{\nu_2}\big\{c\big(X+x,Y+u(x)\big)-c\big(x,u(x)\big)\big\}\big]\big|
\le
 O(1)\big[|X|+|Y|\big]
$$
for $m=\nu_2$.
Since $(X+x)^{\nu_2}-\nu_2Xx^{\nu_2-1}-x^{\nu_2}$ vanishes to second
order in $X$, for all $x$,
\begin{eqnarray}
\big|\sfrac{\partial^m\hfill}{\partial x^m}
\big[\big\{(X+x)^{\nu_2}-\nu_2Xx^{\nu_2-1}-x^{\nu_2}\big\}
c\big(X+x,Y+u(x)\big)\big]\big|
&\le& O(|X|^2) \nonumber\\
&\le& O(D)|X|
\nonumber
\end{eqnarray}
for all $0\le m\le \nu_2$. Finally
\begin{eqnarray}
\big|\sfrac{\partial^m\hfill}{\partial x^m}
\big[\nu_2Xx^{\nu_2-1}c\big(X+x,Y+u(x)\big)\big]\big|
&\le&
 O(D)|X|
\qquad\hbox{ for }0\le  m<\nu_2-1
\nonumber\\
\big|\sfrac{\partial^m\hfill}{\partial x^m}
\big[\nu_2Xx^{\nu_2-1}c\big(X+x,Y+u(x)\big)\big]\big|
&\ge&
 \ka'|X|-O(D)|X|
\quad\hbox{ for }m=\nu_2-1
\nonumber\\
\big|\sfrac{\partial^m\hfill}{\partial x^m}
\big[\nu_2Xx^{\nu_2-1}c\big(X+x,Y+u(x)\big)\big]\big|
&\le&
 O(1)|X|
\qquad\hbox{ for }m=\nu_2
\nonumber
\end{eqnarray}
with $\ka'=\nu_2!\inf |c(x,y)|>0$.
Consequently,
\begin{eqnarray}
\big|\sfrac{\partial^{\nu_2}\hfill}{\partial x^{\nu_2}}
&& \hskip-24pt
e\big(X+x, Y+u(x)\big)\big|
\nonumber\\
&=&
\big|\nu_2\sfrac{\partial F}{\partial x}
      \sfrac{\partial^{\nu_2-1} G}{\partial x^{\nu_2-1}}
+\sum_{m=0\atop m\ne 1}^{\nu_2}
{\tst{\nu_2 \choose m}} \;
\sfrac{\partial^m F}{\partial x^m}
      \sfrac{\partial^{\nu_2-m} G}{\partial x^{\nu_2-m}}\big|
\nonumber\\
&\ge&
  \big(1-O(D)\big)\big(\ka'-O(D)\big)|X|-O(D)\big[|X|+|Y|\big]
\nonumber\\
&\ge&
  \big(1-O(D)\big)\big(\ka'-O(D)\big)|X|-O(D)(1+\ka)|X|
\nonumber\\
&\ge&
\sfrac{\ka'}{2}|X|
\nonumber
\end{eqnarray}
if $D$ is small enough. Hence, by Lemma \ref{le:interval},
\begin{eqnarray}
&&\vol\set{x\in\bR}{|x|\le D,\ \big|e\big(X+x, Y+u(x)\big)\big|\le\veps}
\nonumber\\
&&\hskip70pt\le
 2^{\nu_2+1}\big(\sfrac{\veps}{\ka'|X|/2}\big)^{1/\nu_2}
\le
 2^{\nu_2+1}
   \big(\sfrac{2\sqrt{1+\ka^2}}{\ka'}\sfrac{\veps}{\de}\big)^{1/\nu_2}
\nonumber
\end{eqnarray}
\end{proof}

\subsubsection{Length of overlap -- general case}\label{lenovgen}

\begin{proposition}\label{le:length2}
Assume  \Ha--\Hf\ with $r\ge 2n_0+1$. There is a constant $D>0$ 
such that for all $0<\de<1$ and each sign $\pm$ the measure of the set of $\bp\in\bR^2$ such that
$$
\ell\Big(\set{\bk\in\cF}{\big|e\big(\bp\pm\bk\big)\big|\le M^j}\Big)
\ge \big(\sfrac{M^j}{\de}\big)^{1/n_0}\qquad\hbox{for some }j<0
$$
is at most $D\de^2$. Here $\ell$ is the Euclidean measure (length) on $\cF$.
Recall that $n_0$ is the largest nonflatness or nonnesting order plus one. 
\end{proposition}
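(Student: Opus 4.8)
The plan is to reduce the assertion, by a partition of unity and the normal form of Lemma~\ref{le:normalform}, to a finite family of one--dimensional sub-level set estimates of the type handled by Lemmas~\ref{le:4.2} and~\ref{le:interval}, and to read off from each member of this family either that it contributes no ``bad'' momentum at all once $\de$ is small, or that its bad momenta lie within $O(\de)$ of one fixed resonant momentum. Since by \Ha\ both $\cF$ and $\{|e|\le1\}$ are compact, every $\bp$ for which the set in question is nonempty lies in a fixed bounded region of $\bR^2$; hence it suffices to treat $\de<\de_0$ for a small $\de_0>0$ to be chosen (enlarging $D$ disposes of $\de\ge\de_0$). Writing $\veps=M^j<1$ and using that any length of overlap is at most $\ell(\cF)$, a momentum can satisfy the displayed inequality at scale $j$ only if $(\veps/\de)^{1/n_0}\le\ell(\cF)$, i.e.\ $\veps\le\de\,\ell(\cF)^{n_0}$; so, with $\de_0$ small, every scale that matters has $\veps$ so small that $\{|e|\le\veps\}$ lies in a fixed neighbourhood of $\cF$. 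I would cover that neighbourhood by finitely many charts, each either a neighbourhood of a regular point on which $\cF$ is a $C^{r-2}$ graph, or a neighbourhood of a van Hove point on which Lemma~\ref{le:normalform} exhibits $e$ as a nonvanishing factor times the two branch functions of $\cF$, with $\cF$ the union of two graphs. Splitting $\set{\bk\in\cF}{|e(\bp\pm\bk)|\le\veps}$ according to which chart contains $\bk$ and which contains $\bp\pm\bk$ gives $\ell\big(\set{\bk\in\cF}{|e(\bp\pm\bk)|\le\veps}\big)\le\sum_{(a,b)}\ell_{ab}(\bp)$, a sum over finitely many ordered pairs of charts and branch choices, and it is enough to bound each $\ell_{ab}$.

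Fix such a pair, with chart centres $\tilde\bk$ and $\tilde\bq$, and put $\tilde\bp_{ab}=\tilde\bq\mp\tilde\bk$; then $\ell_{ab}(\bp)=0$ unless $\bp$ lies near $\tilde\bp_{ab}$. Parametrising $\cF$ near $\tilde\bk$ by a coordinate $x$, the conditions $\bk\in\cF$ and $\bp\pm\bk\in\cF$ become $\tau=\Psi(x,\sigma)$, where $(\sigma,\tau)$ are affine coordinates of $\bp-\tilde\bp_{ab}$ and $\Psi$ is built from the graph functions of $\cF$ at $\tilde\bk$ and at $\tilde\bq$ (the latter possibly reflected when the $-$ sign occurs), while $|e(\bp\pm\bk)|\le\veps$ becomes $|\tau-\Psi(x,\sigma)|\le\const\,\veps$ if $\tilde\bq$ is regular, and $|\tau-\Psi(x,\sigma)|\,|R|\le\const\,\veps$, with $R$ the value of the second branch function at $\bp\pm\bk$, if $\tilde\bq$ is a van Hove point. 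Suppose first $\tilde\bk\ne\tilde\bq$: the two curves being compared are a piece of $\cF$ and a translate of a piece of $\pm\cF$, so the no--nesting hypothesis \Hf\ provides an $m\le n_0-1$ with $|\partial_x^m\Psi|$ bounded away from zero throughout the chart; Lemma~\ref{le:interval} then gives $\ell_{ab}(\bp)\le\const\,\veps^{1/m}\le\const\,\veps^{1/n_0}$ when $\tilde\bq$ is regular, and the same bound when $\tilde\bq$ is a van Hove point --- after decomposing the size of $R$ dyadically and using that, by \Hd, the piece is tangent to at most one of the two branches, so the two orders which occur add up to at most $n_0$; this is the dyadic step already present in the proof of Lemma~\ref{le:4.2}. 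Since $\const\,\veps^{1/n_0}$ is a bound uniform in $\bp$, the displayed inequality at this pair forces $\de$ to be at least a fixed positive number; so, choosing $\de_0$ below the (finitely many) thresholds so obtained, no pair with $\tilde\bk\ne\tilde\bq$ contributes a bad momentum.

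It remains to treat $\tilde\bk=\tilde\bq$, where $\tilde\bp_{ab}=\bzer$ for the $+$ sign and $\tilde\bp_{ab}=2\tilde\bk$ for the $-$ sign. If $\tilde\bk$ is a van Hove point, Lemma~\ref{le:4.2} (and its image under exchanging $k_1$ and $k_2$, which covers the second branch) gives $\ell_{ab}(\bp)\le\const\,(\veps/\de)^{1/\nu}$ for $\de\le|\bp-\tilde\bp_{ab}|\le D$ with $\nu\in\{\nu_1,\nu_2\}<n_0$; by the argument just used this excludes bad $\bp$ with $|\bp-\tilde\bp_{ab}|\ge\de$ once the chart diameter is small, so the bad $\bp$ of this pair lie in $\set{\bp}{|\bp-\tilde\bp_{ab}|<\de}$. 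If $\tilde\bk$ is regular, then $\Psi(x,\sigma)=f(x+\sigma)-f(x)$ for the $+$ sign and $\Psi(x,\sigma)=f(x)+f(\sigma-x)$ for the $-$ sign, and in each sub-case the nonflatness hypothesis \He\ furnishes an $m\le n_0-1$ with $|\partial_x^m\Psi(x,\sigma)|\ge\const\,|\sigma|$ throughout the chart; then $\ell_{ab}(\bp)\le\const\,(\veps/|\sigma|)^{1/m}$ again excludes bad $\bp$ with $|\sigma|\ge\de$, while for $|\sigma|<\de$ one has $|\Psi(\cdot,\sigma)|<\const\,\de$ on the chart, so a nonempty overlap forces $|\tau|<\const\,\de$ as well. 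Hence every coinciding-centre pair contributes bad momenta only in a box of side $O(\de)$ about $\tilde\bp_{ab}$. Summing over the finitely many chart pairs, the set of $\bp$ that are bad at some scale $j<0$ is contained in a finite union of sets of diameter $O(\de)$, and therefore has measure $\le D\de^2$.

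The step I expect to be the real obstacle is the analysis near the van Hove points: there $\cF$ has vanishing curvature, the tube $\{|e|\le\veps\}$ is anisotropically fat, and a naive estimate only yields the exponent $1/(2m)$ in place of $1/n_0$; recovering $1/n_0$ requires the normal form of Lemma~\ref{le:normalform}, the dyadic decomposition used in the proof of Lemma~\ref{le:4.2}, and the transversality of the two branches guaranteed by \Hd. Two further points need care: the bookkeeping that collapses the union over all negative scales to a single $\de$-neighbourhood --- which relies on the observation that for $\veps$ not much smaller than $\de$ the threshold $(\veps/\de)^{1/n_0}$ already exceeds the available length of overlap --- and the reflected configurations arising from the $-$ sign when $e$ is not even, for which one invokes the last clause of \Hf\ (while if $e$ is even one may instead substitute $\bp\mapsto-\bp$ and reduce to the $+$ sign).
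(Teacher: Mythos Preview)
Your approach is essentially the paper's: localize by compactness to finitely many chart pairs, and for each pair obtain a one--dimensional sub--level estimate via Lemma~\ref{le:interval}, with the self--overlap case supplying the $|\bp-\tilde\bp|$--dependent bound that produces the $O(\de)$ bad set. The paper organizes this slightly differently, packaging the local estimate into a single statement (Lemma~\ref{le:curve2}): for \emph{every} pair $(\tilde\bp,\tilde\bk)$ one has $\ell\le D'(M^j/|\bp-\tilde\bp|)^{1/n_0}$, and then the compactness cover yields bad balls around every cover centre, not only the resonant ones. This is cleaner because it avoids your dichotomy ``coinciding versus non--coinciding centres'' altogether; your claim that non--coinciding pairs contribute \emph{no} bad momenta is true but is extra work the paper sidesteps.

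Two details to fix. First, there is no ``dyadic step'' in the proof of Lemma~\ref{le:4.2}: that proof (and the paper's Case~2 of Lemma~\ref{le:curve2}) bounds a high derivative of the product $FG$ directly, exploiting that $\partial_x F=1+O(d)$ so one derivative goes on $F$ and the remaining $\le n_0-1$ on $G$; your ``two orders add up to at most $n_0$'' is the right heuristic, but the mechanism is Leibniz, not a dyadic sum over $|R|$. Second, your treatment of the $-$ sign at a regular point is not quite right as written: for $\Psi(x,\sigma)=f(x)+f(\sigma-x)$ the claim $|\Psi(\cdot,\sigma)|<\const\,\de$ for $|\sigma|<\de$ fails whenever $f$ has a nonzero even part (e.g.\ $f(x)=x^2$ gives $\Psi(x,0)=2x^2$), and the self--overlap case for the $-$ sign is not $\tilde\bk=\tilde\bq$ but rather the point where $-\cF$ near $-\tilde\bk$ coincides with $\cF$ near $\tilde\bq$. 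You do flag this at the end (reduce to $+$ if $e$ is even; invoke the extended clause of \Hf\ otherwise), and with that reduction in place the rest of your argument goes through.
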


\begin{lemma}\label{le:curve2}
Let $r\ge 2n_0+1$. 
For each $\tilde\bp\in\bR^2$ and $\tilde \bk\in\cF$, there are
constants $d,D'>0$ (possibly depending on $\tilde\bp$ and $\tilde\bk$)
such that for each sign $\pm$, all $j<0$, and all $\bp\in\bR^2$ obeying 
$|\bp-\tilde\bp|\le d$ 
$$
\ell\Big(\set{\bk\in\cF}{|\bk-\tilde\bk|\le d,\ \big|e\big(\bp\pm\bk\big)\big|
      \le M^j}\Big)
\le D'\big(\sfrac{M^j}{|\bp-\tilde\bp|}\big)^{1/n_0}
$$
\end{lemma}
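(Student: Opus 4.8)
The plan is to reduce Lemma \ref{le:curve2} to a local statement near $\tilde\bk$ and $\tilde\bp$ that can be handled either by Lemma \ref{le:4.2} (when $\tilde\bk$ lies at a singular point and a branch of $\cF$ passes through it) or by a more elementary argument involving Lemma \ref{le:interval} (when $\tilde\bk$ is a regular point of $\cF$, or when $\tilde\bp\pm\tilde\bk\notin\cF$ so that $e(\bp\pm\bk)$ is bounded away from zero). Concretely, I would distinguish cases according to the location of $\tilde\bk$ and of $\tilde\bp\pm\tilde\bk$: if $\tilde\bp\pm\tilde\bk\notin\cF$, then $|e(\bp\pm\bk)|$ has a positive lower bound on a small ball, so for $M^j$ small enough the set in question is empty and there is nothing to prove; if $\tilde\bp\pm\tilde\bk\in\cF$ and $\nabla e\ne 0$ there, we can use the nonnesting hypothesis \Hf\ together with \He; and the genuinely new case is when $\tilde\bp\pm\tilde\bk$ is a singular point (and/or $\tilde\bk$ is), where the normal form of Lemma \ref{le:normalform} applies after an affine change of coordinates.

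The key steps I would carry out are as follows. First, parametrize the arc of $\cF$ near $\tilde\bk$ by writing one coordinate as a $C^{r-2}$ function of the other, say $k_2-\tilde k_2 = \phi(k_1-\tilde k_1)$ (two branches if $\tilde\bk$ is singular, handled separately); this turns $\ell(\{\ldots\})$ into a one--dimensional $\vol$ of the set of parameter values $x$ with $|e(\bp\pm(\tilde\bk+(x,\phi(x))))|\le M^j$, up to the bounded arclength factor $\sqrt{1+\phi'^2}$. Second, apply the affine transformation $A$ from Lemma \ref{le:normalform} to bring $e$ near the relevant singular point into the normal form $a(\bw)(w_1-w_2^{\nu_1}b)(w_2-w_1^{\nu_2}c)$; writing $\bp\pm\bk$ in the new coordinates as $(X,Y)+(x,u(x))$, with $y=u(x)$ the Fermi curve branch $u(x)=x^{\nu_2}c(x,u(x))$, identifies $(X,Y)$ as the (affine image of the) offset $\bp\pm\bk-$(singular point), whose size is comparable to $|\bp-\tilde\bp|$ because $\bk$ ranges over a small neighbourhood of $\tilde\bk$ and we can choose $\tilde\bk$ so that $\tilde\bp\pm\tilde\bk$ is exactly the singular point (or bound $|(X,Y)|$ below by $c|\bp-\tilde\bp|$ in general). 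Third, invoke Lemma \ref{le:4.2} with $\veps=M^j$ and $\de\sim|\bp-\tilde\bp|$ to get the bound $C(M^j/|\bp-\tilde\bp|)^{1/\nu_2}$, and since $\nu_2\le n_0-1<n_0$ and $M^j/|\bp-\tilde\bp|$ is small, $(M^j/|\bp-\tilde\bp|)^{1/\nu_2}\le (M^j/|\bp-\tilde\bp|)^{1/n_0}$ after adjusting $D'$. In the regular and nonnested cases, the analogous estimate comes from differentiating $x\mapsto e(\bp\pm(\tilde\bk+(x,\phi(x))))$ enough times: the first nonvanishing $x$--derivative is of order at most $n_0-1$ by \He/\Hf\ (the difference $f-g$ of the two curve parametrizations, or $f$ itself, has a nonvanishing derivative of order $\le r-2$, and the definition of $n_0$), it is bounded below by a constant multiple of $|\bp-\tilde\bp|$ (for the nesting case) or by an absolute constant (regular case), and then Lemma \ref{le:interval} gives the $\veps/\de$ power with exponent $1/m\ge 1/n_0$.

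The main obstacle I anticipate is bookkeeping the dependence on $\tilde\bp$ and $\tilde\bk$ correctly, and in particular verifying that the relevant lower bound on $|(X,Y)|$ (or on the first nonvanishing derivative) really is of order $|\bp-\tilde\bp|$ rather than something smaller. When $\tilde\bp\pm\tilde\bk$ is a singular point, one has the luxury that $\bk$ can be forced close to $\tilde\bk$, so $\bp\pm\bk$ near the singular point means the offset $(X,Y)$ is controlled by $|\bp-\tilde\bp|+|\bk-\tilde\bk|$; one must check that the term $|\bk-\tilde\bk|$ does not spoil the estimate — this is fine because we are estimating a $\vol$ over precisely the $x$-parameter measuring $|\bk-\tilde\bk|$, and Lemma \ref{le:4.2} is stated uniformly in that parameter. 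A secondary subtlety is that Lemma \ref{le:4.2} requires $b,c\in C^{\nu_2+1}$, which must be matched against the regularity $C^{r-2-\max\{\nu_1,\nu_2\}}$ provided by Lemma \ref{le:normalform}; the hypothesis $r\ge 2n_0+1$ together with $\nu_i<n_0$ is exactly what makes $r-2-\max\{\nu_1,\nu_2\}\ge \nu_2+1$, so one should point to this inequality explicitly. Finally, one needs to confirm that finitely many of these local neighbourhoods, one for each $(\tilde\bp,\tilde\bk)$ chosen via compactness, is what Proposition \ref{le:length2} will later use — but that patching is the business of the proof of Proposition \ref{le:length2}, not of this lemma.
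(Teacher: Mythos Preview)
Your case decomposition matches the paper's, and the trivial case ($\tilde\bp\pm\tilde\bk\notin\cF$) as well as the regular case ($\tilde\bp\pm\tilde\bk\in\cF$ with $\nabla e\ne 0$ there) are handled essentially as you describe. But there is a genuine gap in your treatment of the singular case.

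Lemma \ref{le:4.2} does \emph{not} cover the whole situation ``$\tilde\bp\pm\tilde\bk$ is a singular point''. That lemma is stated for the specific curve $y=u(x)$ satisfying $u(x)=x^{\nu_2}c(x,u(x))$, i.e.\ for one of the two branches of $\cF$ passing through the singular point itself. Your step ``writing $\bp\pm\bk$ in the new coordinates as $(X,Y)+(x,u(x))$'' is only legitimate when the arc of $\cF$ near $\tilde\bk$, after the affine map $\mM^{-1}$ centered at the singular point, \emph{coincides} with that branch. This forces $\tilde\bp=\bzer$ (so $\tilde\bk$ is the singular point and you are on the matching branch). When $\tilde\bp\ne\bzer$ but $\tilde\bp\pm\tilde\bk$ is still singular, the curve $\mM^{-1}(\bk-\tilde\bk)$ is a \emph{different} piece of $\mM^{-1}\cF$, of the form $(x,x^{\nu'}v(x))$ with $\nu'$ and $v$ unrelated to $\nu_2$ and $c$, and Lemma \ref{le:4.2} does not apply. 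You cannot ``choose $\tilde\bk$'' to fix this; $\tilde\bk$ and $\tilde\bp$ are given.

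The paper handles this remaining sub-case by writing $e(\bp\pm\bk)=A\cdot F\cdot G$ in the normal form, with $G(x,X,Y)=Y+x^{\nu'}v(x)-(X+x)^{\nu_2}C(x,X,Y)$, and then running a derivative argument on the product $FG$: one compares $x^{\nu'}v(x)$ with $x^{\nu_2}C(x,0,0)$ using the nonnesting hypothesis \Hf\ (or nonflatness if $\nu'\ne\nu_2$) to find an order $m<n_0$ at which $\partial_x^m G$ is bounded away from zero, and combines this with $\partial_x F=1+O(d)$ to get $\partial_x^{m+1}(FG)$ bounded away from zero, after which Lemma \ref{le:interval} finishes. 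This product--rule step is the missing ingredient in your plan; it is not bookkeeping, since the factor $F$ can itself be $O(d)$ small and must be differentiated once to contribute. A secondary point: even in the regular case with $\tilde\bp=\bzer$, it is not quite true that ``some derivative is bounded below by $c|\bp-\tilde\bp|$'' uniformly; rather one splits into $|Y|\le|X|$ (where $\partial_x^{\nu-1}$ is bounded below by $\const|X|$) and $|Y|\ge|X|$ (where the function itself is bounded below by $|Y|/2$, so the set is empty unless $M^j/|\bp-\tilde\bp|$ is already large).
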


\noindent
{\em Proof of  Proposition \ref{le:length2}, assuming Lemma \ref{le:curve2}. }
For each $\tilde\bp\in\bR^2$ and $\tilde \bk\in\cF$, 
let $d_{\tilde\bp,\tilde\bk},D'_{\tilde\bp,\tilde\bk}$ be
the constants of the Lemma and set
$$
\cO_{\tilde\bp,\tilde\bk}=\set{(\bp,\bk)\in\bR^2\times \cF}{
           |\bp-\tilde\bp|< d_{\tilde\bp,\tilde\bk},\ 
           |\bk-\tilde\bk|< d_{\tilde\bp,\tilde\bk}}
$$
Since $\cF=\set{\bk}{e(\bk)=0}$ and $\set{\bk}{|e(\bk)|\le 1}$ are compact,
there is an $R>0$ such that if $|\bp|>R$, then
$\set{\bk\in\cF}{\big|e\big(\bp\pm\bk\big)\big| \le M^j}$ is empty for all $j<0$.
Since $\set{\bp\in\bR^2}{|\bp|\le R}\times\cF$ is compact, there are
$(\tilde\bp_1,\tilde\bk_1)$, $\ldots$, $(\tilde\bp_N,\tilde\bk_N)$ such that
$$
\set{\bp\in\bR^2}{|\bp|\le R}\times\cF\subset
\bigcup_{i=1}^N \cO_{\tilde\bp_i,\tilde\bk_i}
$$
Fix any $0<\de<1$ and set, for each $1\le i\le N$, 
$\de_i=(N D'_{\tilde\bp_i,\tilde\bk_i})^{n_0}\de$.
If $|\bp-\tilde\bp_i|> \de_i$ for all $1\le i\le N$, then
for all $j<0$
\begin{eqnarray}
&&\hskip-24pt
\ell\big(\set{\bk\in\cF}{|e(\bp\pm\bk)|\le M^j}\big)
\nonumber\\
&\le&
\ell\Big(\bigcup_{1\le i\le n
            \atop |\bp-\tilde\bp_i|\le d_{\tilde\bp_i,\tilde\bk_i} }
 \set{\bk\in\cF}{
   |\bk-\tilde\bk_{\tilde\bp_i,\tilde\bk_i}|\le d_{\tilde\bp_i,\tilde\bk_i},
   \ \big|e\big(\bp\pm\bk\big)\big| \le M^j}\Big)
\nonumber\\
&\le&
\sum_{i=1}^N D'_{\tilde\bp_i,\tilde\bk_i}
\Big(\sfrac{M^j}{|\bp-\tilde\bp_i|}\Big)^{1/n_0}
<\sum_{i=1}^N D'_{\tilde\bp_i,\tilde\bk_i}
\Big(\sfrac{M^j}{\de_i}\Big)^{1/n_0}
\nonumber\\
&\le&\sum_{i=1}^N \sfrac{1}{N}
\Big(\sfrac{M^j}{\de}\Big)^{1/n_0}
=\Big(\sfrac{M^j}{\de}\Big)^{1/n_0}
\nonumber
\end{eqnarray}
Consequently the measure of the set of $\bp\in\bR^2$ for which
$$
\ell\Big(\set{\bk\in\cF}{\big|e\big(\bp\pm\bk\big)\big|\le M^j}\Big)
\ge \big(\sfrac{M^j}{\de}\big)^{1/n_0}\qquad\hbox{for some }j<0
$$
is at most 
$$
\sum_{i=1}^N \pi\de_i^2
\le D\de^2
\qquad\hbox{where}\quad
D=\sum_{i=1}^N \pi(N D'_{\tilde\bp_i,\tilde\bk_i})^{2n_0} .
$$
\proofendsign

\noindent
{\em Proof of Lemma \ref{le:curve2}. }
We give the proof for $\bp+\bk$. The other case is similar.
In the event that $\tilde\bp+\tilde\bk\notin\cF$, there is a $d>0$ and an integer $j_0<0$ such 
that $\set{\bk\in\cF}{|\bk-\tilde\bk|\le d,\ \big|e\big(\bp+\bk\big)\big|\le M^j}$ 
is empty for all $|\bp-\tilde\bp|\le d$ and $j<j_0$. So we may assume that
$\tilde\bp+\tilde\bk\in\cF$. 

\noindent
{\it Case 1: $\tilde\bp+\tilde\bk$ is not a singular point.\ \ \ }
By a rotation and translation of the $\bk$ plane, we may assume that
$\tilde\bp+\tilde\bk=\bzer$ and that the tangent line to $\cF$ at $\tilde\bp+\tilde\bk$
is $k_2=0$. Then, as in Section \ref{ssecnormform},  there are
\begin{itemize}

\item[$\circ$] $\nu\in\bN$ with $2\le\nu < n_0$ and 

\item[$\circ$] $C^{r-\nu}$ functions $a(\bq)$ and $b(\bq)$ that are bounded
and bounded away from zero 
\end{itemize}
such that
$$
e(\bq)= a(\bq)\big(q_2-q_1^\nu b(\bq)\big)
$$
in a neighbourhood of $\bzer$. (Choose $q_2a(\bq)=e(\bq)-e(\bq)\big|_{q_2=0}$
and $q_1^\nu a(\bq)b(\bq)= e(\bq)\big|_{k_q=0}$.) If the tangent line
to $\cF$ at $\tilde\bk$ (when $\tilde\bk$ is a singular point the tangent
line of one branch) is not parallel to $k_1=0$ and if $d$ is small enough, 
we can write the equation
of $\cF$ for $\bk$ within a distance $d$ of $\tilde \bk$ 
(when $\tilde\bk$ is a singular point, the equation of the branch under 
consideration) as $k_2-\tilde k_2=(k_1-\tilde k_1)^{\nu'}c(k_1-\tilde k_1)$ 
for some  $1\le \nu'< n_0$ and some $C^{r-n_0-1}$ function $c$ that 
is bounded and bounded away from zero (when $\tilde\bk$ is a singular point, by
Lemma \ref{le:normalform}).
Then, for $\bk\in\cF$, we have, writing $\bp-\tilde\bp=(X,Y)$ and 
$k_1-\tilde k_1=x$,
\begin{eqnarray}
e(\bp+\bk)
&=&
e(\bp-\tilde\bp+\bk-\tilde\bk)
=e\big((X,Y)+(x,x^{\nu'}c(x))\big)
\nonumber\\
&=&
A(x,X,Y)\big(Y+x^{\nu'}c(x)-(X+x)^\nu B(x,X,Y)
\big)
\nonumber
\end{eqnarray}
where
\begin{eqnarray}
A(x,X,Y)
&=&
a\big(X+x,Y+x^{\nu'}c(x)\big)
\nonumber\\
B(x,X,Y)
&=&
 b\big(X+x,Y+x^{\nu'}c(x)\big)
\nonumber
\end{eqnarray}
and $c(x)$ are bounded and bounded away from zero. 

Observe that $y=x^{\nu'} c(x)$ is the equation of a fragment of $\cF$ 
translated so as to move $\tilde\bk$ to $\bzer$ and $y=x^\nu b(x,y)$ is 
the equation of a fragment of $\cF$ translated so as to 
move $\tilde\bp+\tilde\bk$ to $\bzer$. If $\tilde\bp=\bzer$ these two fragments 
may be identical. That is $x^{\nu'} c(x)\equiv x^\nu b\big(x,x^{\nu'}c(x)\big)$.
(Of course, in this case $\nu=\nu'$.) If $\tilde\bp\ne\bzer$, the nonnesting
condition says that there is an $n\in\bN$ such that if $y=x^\nu b(x,y)$
is rewritten in the form $y=x^\nu C(x)$, then the $n^{\rm th}$ derivative
of  $x^\nu C(x)-x^{\nu'} c(x)$ must not vanish at $x=0$. Let $n < n_0$ be the
smallest such natural number. Since derivatives  of $x^\nu C(x)$ at $x=0$ 
of order strictly lower than $n$ agree with the corresponding derivatives of 
$x^{\nu'} c(x)$, the $n^{\rm th}$ derivatives at $x=0$ of 
$x^\nu b\big(x,x^\nu C(x)\big)$  and $x^\nu b\big(x,x^{\nu'}c(x)\big)$ coincide.
Since $x^\nu C(x)\equiv x^\nu b\big(x,x^\nu C(x)\big)$, the $n^{\rm th}$ 
derivative of  $x^{\nu'} c(x)-x^\nu b\big(x,x^{\nu'} c(x)\big)
    =x^{\nu'}c(x)-x^\nu B(x,0,0)$ must not vanish at $x=0$.

\begin{itemize}

\item[$\circ$] If $\nu'<\nu$,
$$
\sfrac{d^{\nu'}\hfill}{dx^{\nu'}}\big(Y+x^{\nu'}c(x)-(X+x)^\nu B(x,X,Y)\big)
=\nu'!\,c(x)+O(d)=\nu'!\,c(0)+O(d)
$$
is uniformly bounded away from zero, if $d$ is small enough.

\item[$\circ$] If $\nu'>\nu$,
$$
\sfrac{d^{\nu}\hfill}{dx^{\nu}}\big(Y+x^{\nu'}c(x)-(X+x)^\nu B(x,X,Y)\big)
=-\nu!\,B(0,0,0)+O(d)
$$
is uniformly bounded away from zero, if $d$ is small enough.

\item[$\circ$] If $\nu'=\nu$ and $x^{\nu'}c(x)\not\equiv x^\nu B(x,0,0)$,
then, as the function $x^\nu B(x,0,0)-(X+x)^\nu B(x,X,Y)$ vanishes for all 
$x$ if $X=Y=0$, 
\begin{eqnarray}
&&\hskip-0.5in \sfrac{d^n\hfill}{dx^n}
        \Big[Y+x^{\nu'}c(x)-(X+x)^\nu B(x,X,Y)\Big]
\nonumber\\
&=&\sfrac{d^n\hfill}{dx^n}\Big[Y+x^{\nu'}c(x)-x^\nu B(x,0,0)
\nonumber\\
&&\hskip0.5in+
\big[x^\nu B(x,0,0)-(X+x)^\nu B(x,X,Y)\big]\Big]
\nonumber\\
&=&
\sfrac{d^n\hfill}{dx^n}\big(x^{\nu'}c(x)-x^\nu B(x,0,0)\big) +O(|X|+|Y|)
\nonumber\\
&=&
\sfrac{d^n\hfill}{dx^n}\big(x^{\nu'}c(x)-x^\nu B(x,0,0)\big)\big|_{x=0}
+O(d) +O(|X|+|Y|)
\nonumber
\end{eqnarray}
is uniformly bounded away from zero, if $d$ is small enough.

\item[$\circ$] If $\nu'=\nu$ and $x^{\nu'}c(x)\equiv x^\nu B(x,0,0)$
and $|Y|\le |X|$
\begin{eqnarray}
Y&+&
x^{\nu'}c(x)-(X+x)^\nu B(x,X,Y)
\nonumber\\
&=&Y-\nu Xx^{\nu-1}B(x,X,Y)
\nonumber\\
&&
-\{(X+x)^\nu-\nu Xx^{\nu-1}- x^\nu\} B(x,X,Y)
\nonumber\\
&&
-x^\nu \{B(x,X,Y)-B(x,0,0)\}
\nonumber
\end{eqnarray}
so that
\begin{eqnarray}
&&\hskip-24pt
\sfrac{d^{\nu-1}\hfill}{dx^{\nu-1}}\big(Y+x^{\nu'}c(x)-(X+x)^\nu B(x,X,Y)\big)
\nonumber\\
&=&
-\nu!\,X\big[B(0,0,0)+O(d)\big]+O(d)O\big(|X|+|Y|\big)
\nonumber
\end{eqnarray}
is bounded away from zero by $\half \nu!\,\big|X B(0,0,0)\big|$, 
if $d$ is small enough.

\end{itemize}
In all of the above cases, by Lemma \ref{le:interval},
$$
\ell\Big(\set{\bk\in\cF}{|\bk-\tilde\bk|\le d,\ \big|e\big(\bp+\bk\big)\big|
      \le M^j}\Big)
\le \sqrt{1+c_1^2}\ 2^{\nu_0+1}\big(\sfrac{c_0M^j}{\rho}\big)^{1/\nu_0}
$$
where $\nu_0$ is one of $\nu'$, $\nu$, $n$ or $\nu-1$, the constant $c_0$
is the inverse of the infimum of $a(\bk)$, the constant $c_1$ is the maximum
slope of $\cF$ within a distance $d$ of $\tilde\bk$  
and $\rho$ is either a constant
or a constant times $X$ with $X$ at least a constant times $|\bp-\tilde\bp|$.
There are two remaining possibilities. One is that the tangent line
to $\cF$ at $\tilde\bk$ (when $\tilde\bk$ is a singular point the tangent
line of one branch) is parallel to $k_1=0$. This case is easy to handle
because the two fragments of $\cF$ are almost perpendicular, so that
$$
\ell\Big(\set{\bk\in\cF}{|\bk-\tilde\bk|\le d,\ \big|e\big(\bp+\bk\big)\big|
      \le M^j}\Big)\le\const M^j
$$
The final possibility is
\begin{itemize}
\item[$\circ$] If $\nu'=\nu$ and $x^{\nu'}c(x)\equiv x^\nu B(x,0,0)$
and $|Y|\ge |X|$ 
\begin{eqnarray}
Y&+&x^{\nu'}c(x)-(X+x)^\nu B(x,X,Y)
\nonumber\\
&=&Y-\{(X+x)^\nu- x^\nu\} B(x,X,Y)-x^\nu \{B(x,X,Y)-B(x,0,0)\}
\nonumber
\end{eqnarray}
so that
$$
\big|Y+x^{\nu'}c(x)-(X+x)^\nu B(x,X,Y)\big|
\ge |Y| -O(d)O(|X|+|Y|)
\ge \sfrac{1}{2} |Y|
$$
if $d$ is small enough. As a result 
$\set{\bk\in\cF}{|\bk-\tilde\bk|\le d,\ \big|e\big(\bp+\bk\big)\big|\le M^j}$
is empty if $|Y|$ is larger than some constant times $M^j$. 
On the other hand, if  $|Y|$ is smaller than a constant times $M^j$, then
$\sfrac{M^j}{|\bp-\tilde\bp|}$ is larger than some constant.
\end{itemize}

\goodbreak\noindent
{\it Case 2: $\tilde\bp+\tilde\bk$ is a singular point.\ \ \ }
By Lemma \ref{le:normalform},
$$
e(\tilde\bp+\tilde\bk+\mM\bq)=a(\bq)\big(q_1-q_2^{\nu_1}b(\bq)\big)\big(q_2-q_1^{\nu_2} c(\bq)\big)
$$
where $2\le \nu_1,\nu_2< n_0$ are integers, $\mM$ is a constant, 
nonsingular matrix and $a(\bk)$, $b(\bk)$ and $c(\bk)$ are 
$C^{r-2-\max\{\nu_1,\nu_2\}}$ functions  that are bounded and bounded 
away from zero. 

Suppose that the tangent line to $\mM^{-1}\cF$ at $\tilde\bk$ (when $\tilde\bk$ 
is a singular point, the tangent line of one branch) makes an angle of at most
$45^\circ$ with the $x$--axis. Otherwise exchange the roles of the $q_1$
and $q_2$ coordinates. If $d$ is small enough, we can write the equation 
of $\cF$ for $\bk$ within a distance $d$ of $\tilde \bk$ 
(when $\tilde\bk$ is a singular point, the equation of the branch 
under consideration) as 
$$
\big(\mM^{-1}(\bk-\tilde\bk)\big)_2
=\big(\mM^{-1}(\bk-\tilde\bk)\big)_1^{\nu'}
v\big(\big(\mM^{-1}(\bk-\tilde\bk)\big)_1\big)
$$ 
for some $1\le \nu'< n_0$ and some $C^{r-n_0-1}$ function $v$ that 
is bounded and bounded away from zero. 
Then, writing $\mM^{-1}(\bp-\tilde\bp)=(X,Y)$ and
$\big(\mM^{-1}(\bk-\tilde\bk)\big)_1=x$ and assuming that $\bk\in\cF$,
\begin{eqnarray}
e(\bp+\bk)
&=&
e\big(\tilde\bp+\tilde\bk+\mM\mM^{-1}(\bp-\tilde\bp+\bk-\tilde\bk)\big)
\nonumber\\
&=&
e\big(\tilde\bp+\tilde\bk+\mM(X+x,Y+x^{\nu'}v(x))\big)
\nonumber\\
&=&
A(x,X,Y)F(x,X,Y)G(x,X,Y)
\nonumber
\end{eqnarray}
where
\begin{eqnarray}
A(x,X,Y)
&=&
a\big(X+x,Y+x^{\nu'}v(x)\big)
\nonumber\\
F(x,X,Y)
&=&
X+x-(Y+x^{\nu'}v(x))^{\nu_1} B(x,X,Y)
\nonumber\\
G(x,X,Y)
&=&
Y+x^{\nu'}v(x)-(X+x)^{\nu_2} C(x,X,Y)
\nonumber\\
B(x,X,Y)
&=&
 b\big(X+x,Y+x^{\nu'}v(x)\big)
\nonumber\\
C(x,X,Y)
&=&
 c\big(X+x,Y+x^{\nu'}v(x)\big)
\nonumber
\end{eqnarray}
The functions $A(x,X,Y)$, $B(x,X,Y)$, $C(x,X,Y)$ and $v(x)$ are all 
$C^{r-n_0-1}$ and bounded and bounded away from zero. 

As in case 1, $y=x^{\nu'} v(x)$ is the equation of a fragment of $\mM^{-1}\cF$ 
translated so as to move $\tilde\bk$ to $\bzer$ and $y=x^{\nu_2} c(x,y)$ is 
the equation of a fragment of $\mM^{-1}\cF$ translated so as to 
move $\tilde\bp+\tilde\bk$ to $\bzer$. If $\tilde\bp=\bzer$, these two fragments may be 
identical in which case 
$x^{\nu'} v(x)\equiv x^{\nu_2} c\big(x,x^{\nu'} v(x)\big)$ and $\nu_2=\nu'$.
This case has already been dealt with in Lemma \ref{le:4.2}. 
Otherwise, the nonnesting condition says that there is an $n\in\bN$ such 
that if $y=x^{\nu_2} c(x,y)$ is rewritten in the form $y=x^{\nu_2} V(x)$, 
then the $n^{\rm th}$ derivative of  $x^{\nu_2} V(x)-x^{\nu'} v(x)$ must not 
vanish at $x=0$. Let $n < n_0$ be the smallest such natural number. Since 
$x^{\nu_2} V(x)\equiv x^{\nu_2} c\big(x,x^{\nu_2} V(x)\big)$ and 
since derivatives at $0$ of $x^{\nu_2} V(x)$ of order lower than $n$ agree with 
the corresponding derivatives of $x^{\nu'} v(x)$, the $n^{\rm th}$ derivative
of  $x^{\nu'} v(x)-x^{\nu_2} c\big(x,x^{\nu'} v(x)\big)
    =x^{\nu'}v(x)-x^{\nu_2} C(x,0,0)$ must not vanish at $x=0$.
So the remaining cases are:
\begin{itemize}

\item[$\circ$] If $\nu'<\nu_2$, then
$$
\sfrac{d^{\nu'}\hfill}{dx^{\nu'}}G(x,X,Y)
=\nu'!\,v(0)+O(d)
$$
Since
$$
F(x,X,Y)=O(d)\qquad
\sfrac{d\hfill}{dx}F(x,X,Y)=1+O(d)
$$
and applying zero to $\nu'-1$ $x$--derivatives
to $G(x,X,Y)$ gives $O(d)$, we have
$$
\sfrac{d^{\nu'+1}\hfill}{dx^{\nu'+1}}F(x,X,Y)G(x,X,Y)
=(\nu'+1)!\,v(0)+O(d)
$$
uniformly bounded away from zero, if $d$ is small enough.

\item[$\circ$] If $\nu'>\nu_2$,
$$
\sfrac{d^{\nu_2}\hfill}{dx^{\nu_2}}G(x,X,Y)
=-\nu_2!\,C(0,0,0)+O(d)
$$
Again
$$
F(x,X,Y)=O(d)\qquad
\sfrac{d\hfill}{dx}F(x,X,Y)=1+O(d)
$$
and applying zero to $\nu_2-1$ $x$--derivatives
to $G(x,X,Y)$ gives $O(d)$, so that
$$
\sfrac{d^{\nu_2+1}\hfill}{dx^{\nu_2+1}}F(x,X,Y)G(x,X,Y)
=-(\nu_2+1)!\,C(0,0,0)+O(d)
$$
is uniformly bounded away from zero, if $d$ is small enough.

\item[$\circ$] If $\nu'=\nu_2$ and $x^{\nu'}v(x)\not\equiv x^{\nu_2} C(x,0,0)$
\begin{eqnarray}
&&\sfrac{d^n\hfill}{dx^n}G(x,X,Y)
=
\sfrac{d^n\hfill}{dx^n}\Big[Y+x^{\nu'}v(x)-x^{\nu_2} C(x,0,0)
\nonumber\\
&& \hskip1.7in+\big[x^{\nu_2} C(x,0,0)-(X+x)^{\nu_2} C(x,X,Y)\big]\Big]
\nonumber\\
&&\hskip0.2in=
\sfrac{d^n\hfill}{dx^n}\big(x^{\nu'}v(x)-x^{\nu_2} C(x,0,0)\big)\big|_{x=0}
+O(d) +O(|X|+|Y|)
\nonumber
\end{eqnarray}
and applying strictly fewer than $n$ derivatives gives $O(d)$.
As in the last two cases
$$
\sfrac{d^{n+1}\hfill}{dx^{n+1}}F(x,X,Y)G(x,X,Y)
=n\sfrac{d^n\hfill}{dx^n}\big(x^{\nu'}v(x)-x^{\nu_2} C(x,0,0)\big)\big|_{x=0}
+O(d)
$$
is uniformly bounded away from zero, if $d$ is small enough.
\end{itemize}
The lemma now follows by Lemma \ref{le:interval}, as in Case 1.
\proofendsign

\newpage
\section{Regularity} 
\subsection{The gradient of the self--energy}

\subsubsection{The second order contribution}
Let
$$
C(k)=\sfrac{U(k)}{ik_0-e(\bk)}
$$
where the ultraviolet cutoff $U(k)$ is a smooth compactly supported
function that is identically one for all $k$ with  $|ik_0-e(\bk)|$ 
sufficiently small. We consider the value
\begin{eqnarray}
F(q)&=&\int d^3k^{(1)} d^3k^{(2)} d^3k^{(3)}\ 
          \de(k^{(1)}+k^{(2)}-k^{(3)}-q)\ C(k^{(1)})C(k^{(2)})C(k^{(3)})
\nonumber\\
      &&\hskip3.2in V(k^{(1)},k^{(2)},k^{(3)},q)\,
\nonumber
\end{eqnarray}
of the diagram
$$
\figplace{secondOrderB}{0 in}{0.2in}
$$
The function $V$ is a second order polynomial in the interaction function 
$\hat v$. For details, as well as the generalization to frequency--dependent
interactions, see \cite{FST2}. For the purposes of the present discussion, 
all we need is a simple regularity assumption on $V$.

\begin{lemma}\label{le:secondorder}
Assume \Ha--\Hf. If $V(k^{(1)},k^{(2)},k^{(3)},q)$ 
is $C^1$, then $F(q)$ is $C^1$ in the spatial coordinates $\bq$.
\end{lemma}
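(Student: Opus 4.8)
The plan is to estimate the spatial gradient $\nabla_\bq F(q)$ by differentiating under the integral sign and then controlling the resulting singular integral using the length-of-overlap bounds from Section 3. Writing the momentum conservation constraint into the integrand, $F(q)=\int d^3k^{(1)}\,d^3k^{(2)}\,C(k^{(1)})C(k^{(2)})C(k^{(1)}+k^{(2)}-q)\,V$, the $\bq$-derivative produces two types of terms: one in which $\nabla_\bq$ hits $V$, which is harmless because $V\in C^1$ and the three propagators are absolutely integrable (this gives a finite, in fact bounded, contribution after the usual scale decomposition), and one in which $\nabla_\bq$ hits the third propagator $C(k^{(3)})$ with $k^{(3)}=k^{(1)}+k^{(2)}-q$, producing a factor $\nabla e(\bk^{(3)})\,\big(ik^{(3)}_0-e(\bk^{(3)})\big)^{-2}$ (plus a term from differentiating the cutoff $U$, which is supported away from the singularity of the denominator and hence bounded). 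The main work is to bound this second term.

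For that term I would introduce the standard multiscale decomposition $C=\sum_{j\le 0} C^{(j)}$, where $C^{(j)}$ localizes $|ik_0-e(\bk)|\sim M^j$, so that $|C^{(j)}|\le \const\, M^{-j}$ on a region of $k_0$-width $\sim M^j$, and similarly $|(ik_0-e)^{-2}C^{(j)}|\le\const\, M^{-2j}$. After performing the (bounded) $k^{(1)}_0$ and $k^{(2)}_0$ integrations and the $k^{(3)}_0$ integration, the dangerous term is, schematically, a sum over scales $j_1,j_2,j_3$ of spatial integrals of the form
$$
\int d^2\bk^{(1)}\,d^2\bk^{(2)}\;
\frac{\chi_{j_1}(e(\bk^{(1)}))\,\chi_{j_2}(e(\bk^{(2)}))\,\big|\nabla e(\bk^{(3)})\big|\,\chi_{j_3}(e(\bk^{(3)}))}
{M^{j_1}M^{j_2}M^{2j_3}}\,,
$$
with $\bk^{(3)}=\bk^{(1)}+\bk^{(2)}-\bq$ and $\chi_j$ the characteristic function of $\{|e|\le\const M^j\}$. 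Fixing $\bk^{(2)}$ and setting $\bp=\bk^{(2)}-\bq$, the $\bk^{(1)}$-integral restricted to $\{|e(\bk^{(1)})|\le\const M^{j_1}\}$ (a tube of width $\sim M^{j_1}$ around $\cF$) further restricted to $\{|e(\bp+\bk^{(1)})|\le\const M^{j_3}\}$ has its ``length'' controlled by Proposition \ref{le:length2}: for $\bp$ outside a set of measure $\le D\delta^2$, this length is at most $(M^{j_3}/\delta)^{1/n_0}$.

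The key point is the gain from the numerator $|\nabla e(\bk^{(3)})|$. Near a Van Hove point $\nabla e$ vanishes linearly, so on the support of $\chi_{j_3}$ near such a point $|\nabla e(\bk^{(3)})|\le\const\, M^{j_3/2}$ roughly (more precisely, one uses the normal form of Lemma \ref{le:normalform} and the distance to the singular point); away from the Van Hove points $|\nabla e|$ is bounded below and the argument of \cite{FST1,FST2} via overlapping loops — or equivalently the length-of-overlap bound with $n_0$ replaced by the regular exponent — already gives a convergent sum. So I would split the $\bk^{(3)}$-region with a scale-dependent partition of unity into a piece near the Van Hove points (where the extra $M^{j_3/2}$ from $\nabla e$ offsets one power of $M^{-j_3}$, effectively improving $M^{-2j_3}$ to $M^{-3j_3/2}$, which together with the volume factors $M^{j_1}$, $M^{j_2}$ from the two tubes and the overlap length $(M^{j_3}/\delta)^{1/n_0}$ makes the sum over scales converge) and a piece away from them (handled by the regular no-nesting/curvature estimate). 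Summing $\bq$ over the complement of the small exceptional set and using that the bound is uniform there yields $\|\nabla_\bq F\|_\infty<\infty$; continuity of $\nabla_\bq F$ then follows from dominated convergence, since the scale-by-scale bounds are summable and each scale contributes a continuous function of $\bq$.

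\textbf{Main obstacle.} The hard part is the bookkeeping that turns the vanishing of $\nabla e$ at the Van Hove points into a genuine quantitative gain at scale $M^{j_3}$ and combines it correctly with the overlap estimate of Proposition \ref{le:length2} — i.e.\ checking that the exponents balance so that $\sum_{j_1,j_2,j_3\le 0} M^{j_1}M^{j_2}M^{-2j_3}\cdot M^{j_3/2}\cdot(M^{j_3})^{1/n_0}$ (with the appropriate additional constraints linking $j_3$ to the distance from the singular point) is finite, and that the $\bp$-dependence introduced by the ``for $\bp$ outside a small set'' clause in Proposition \ref{le:length2} can be absorbed (one chooses $\delta$ as a suitable power of $M^{j_3}$ and sums the resulting geometric-type series). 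Getting a clean, uniform-in-$\bq$ constant out of this, rather than merely finiteness for a.e.\ $\bq$, is the delicate point, and is where the hypotheses \Ha--\Hf\ with $r\ge 2n_0+1$ are really used.
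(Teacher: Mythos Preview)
Your overall strategy---scale decomposition, splitting near/away from Van Hove points, and invoking the length-of-overlap bound of Proposition~\ref{le:length2}---matches the paper's. But two steps are missing or wrong. First, you resolve the $\delta$-function to fix $k^{(3)}=k^{(1)}+k^{(2)}-q$ \emph{before} introducing scales. The paper decomposes first and then, for each triple $(j_1,j_2,j_3)$, routes $q$ through the line of \emph{highest} scale, so that after relabelling $j_1\le j_2\le j_3$ the derivative always lands on the propagator of scale $j_3$; without this the factor $M^{-2j_3}$ could carry the most negative scale and the sum diverges. Second, your claim that $|\nabla e(\bk^{(3)})|\le\const\,M^{j_3/2}$ on the support of $\chi_{j_3}$ near a Van Hove point is false: with $e\approx xy$ one can have $x=O(1)$, $y=O(M^{j_3})$, so $|\nabla e|=O(1)$. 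The paper instead introduces a \emph{small} parameter $\eta>0$ and splits at distance $M^{\eta j_3}$ from the singular points. Inside that ball $|\nabla e|\le\const\,M^{\eta j_3}$, and this gain \emph{alone} makes $\sum_{j_1\le j_2\le j_3}|j_1|M^{j_1}|j_2|M^{j_2}M^{-(2-\eta)j_3}$ converge---no overlap estimate is used here. The overlap bound is applied only \emph{outside} those balls, where $\nabla e$ gives nothing; the final exponent balance (Reduction~4 and the Final Step in the paper) requires $\veps>\eta+\eta'$ with $\veps\le 1/(2n_0)$, so your $\eta=1/2$ would make that step fail. A further preliminary restriction $j_3\le(1-\tilde\eta)j_1$ (Reduction~1) is also needed, since the far-region volume bound is independent of $j_1$ and the $j_1$-sum would otherwise be infinite.

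On your worry about uniformity in $\bq$: in the paper the $\bp$ of Proposition~\ref{le:length2} is $\pm\bp\pm\bq$ with $\bp$ an \emph{integration} variable, so for each fixed $\bq$ the exceptional set has measure $\le D\delta^2$ independent of $\bq$; one integrates the trivial bound over that set and the overlap bound over its complement, yielding a constant uniform in $\bq$. There is no ``summing over $\bq$''.
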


\begin{proof}
Introduce our standard partition of unity of a neighbourhood of the Fermi
surface \cite[\S 2.1]{FST1}
$$
U(k)=\sum_{j< 0}f(M^{-2j}|ik_0-e(\bk)|^2)
$$
where $f(M^{-2j}|ik_0-e(\bk)|^2)$ vanishes unless 
$M^{j-2}\le |ik_0-e(\bk)| \le M^j$. We have
$$
C(k)=\sum_{j< 0}C_j(k)
\qquad\hbox{where}\qquad
C_j(k)=\sfrac{f(M^{-2j}|ik_0-e(\bk)|^2)}{ik_0-e(\bk)}
$$
and
\begin{eqnarray}
F(q)
&=&
\sum_{j_1,j_2,j_3< 0}\int d^3k^{(1)} d^3k^{(2)} d^3k^{(3)}
       \ \ \de(k^{(1)}+k^{(2)}-k^{(3)}-q)
\nonumber\\
&& \hskip2.5in C_{j_1}(k^{(1)})C_{j_2}(k^{(2)})C_{j_3}(k^{(3)})\, V
\nonumber
\end{eqnarray}
Route the external momentum $q$ through the line with smallest $|j_i|$
(i.e. use the delta function to evaluate the integral over the $k^{(i)}$
corresponding to the smallest $|j_i|$) and apply $\nabla_\bq$. Rename the
remaining integration variables $k^{(i)}$ to $k$ and $p$.
Permute the indices so that $j_1\le j_2\le j_3$. If the $\nabla_\bq$
acts on $V$, the estimate is easy. For each fixed $j_1$, $j_2$ and $j_3$,
\begin{itemize}
\item[$\circ$] the volume of the 
domain of integration is bounded  by a constant times $|j_1|M^{2j_1}
|j_2|M^{2j_2}$, by Lemma \ref{FS1-le:2.4} of \cite{SFS1}. (The $k_0$ 
and $p_0$ components contribute $M^{j_1}M^{j_2}$ to this bound.)
\item[$\circ$] and the integrand is bounded by $\const M^{-j_1}M^{-j_2}M^{-j_3}$. 
\end{itemize}
so that
\begin{eqnarray*}
&&\sum_{j_1\le j_2\le j_3<0}\int d^3k d^3p\ 
  \big|C_{j_1}(k)C_{j_2}(p)C_{j_3}(\pm k\pm p\pm q)\ \nabla_\bq V\big|\\
&&\hskip0.5in\le \const \sum_{j_1\le j_2\le j_3<0}|j_1|M^{j_1}|j_2|M^{j_2}M^{-j_3}
\le \const \sum_{j_1\le j_2<0}|j_1|M^{j_1}|j_2|\\
&&\hskip0.5in\le \const \sum_{j_1<0}|j_1|^3M^{j_1}
\end{eqnarray*}
is uniformly bounded. So assume that the $\nabla_\bq$
acts on $C_{j_3}(\pm k\pm p\pm q)$. The terms of interest are now of the form
$$
\int d^3k d^3p\ V\ C_{j_1}(k)C_{j_2}(p)\nabla_\bq C_{j_3}(\pm k\pm p\pm q)
$$ 
with $j_1\le j_2\le j_3<0$ and
\begin{eqnarray}\label{eq:diffprop}
\nabla_\bq C_{j_3}(\pm k\pm p\pm q)
&=&
\pm\Big[\sfrac{\nabla e(\tilde\bk)}{[i\tilde k_0-e(\tilde\bk)]^2}
f({\sst M^{-2j_3}|i\tilde k_0-e(\tilde\bk)|^2})
\nonumber\\
&&\hskip0.2in
+\sfrac{2M^{-2j_3}e(\tilde\bk)
        \nabla e(\tilde\bk)}{i\tilde k_0-e(\tilde\bk)}
   f'({\sst M^{-2j_3}|i\tilde k_0-e(\tilde \bk)|^2})
   \Big]_{\tilde k=\pm k\pm p\pm q}
\end{eqnarray}
Observe that $\big|\nabla_\bq C_{j_3}(\pm k\pm p\pm q)\big|\le\const M^{-2j_3}$ 
since $|i\tilde k_0-e(\tilde\bk)|\ge \const M^{j_3}$ and $|e(\tilde\bk)|
\le\hbox{\rm const}'\  M^{j_3}$ on the support of 
$f(M^{-2j_3}|i\tilde k_0-e(\tilde\bk)|^2)$.
Choose three small constants $\et,\tilde\et,\veps>0$ such that $0<\veps\le\sfrac{1}{2n_0}$,
$0<\et<\sfrac{2n_0-1}{2n_0+2}\veps$ and
$\tilde\et\le\sfrac{2\et+\veps}{2n_0+2\et+\veps}$. 
Here $n_0$ is the integer in  Proposition \ref{le:length2}.
For example, if $n_0=3$, we can choose $\veps=\sfrac{1}{6}$, 
$\et=\sfrac{1}{10}<\sfrac{5}{48}$ and $\tilde\et=\sfrac{1}{20}$.

\bigskip\noindent{\bf  Reduction 1:} 
For any $\tilde\et>0$, it suffices to consider $j_1\le j_2\le j_3\le 
(1-\tilde\et)j_1$. For the remaining terms, we simply bound
\begin{eqnarray}
&&\bigg|\int d^3k d^3p\ V\ C_{j_1}(k)C_{j_2}(p)\nabla_\bq C_{j_3}(\pm k\pm p\pm q)
\bigg|
\nonumber\\
&&
\hskip0.5in\le\const \int d^3k d^3p\ f(\sfrac{|ik_0-e(\bk)|^2}{ M^{2j_1}})M^{-j_1}\  
                          f(\sfrac{|ip_0-e(\bp)|^2}{M^{2j_2}})M^{-j_2}\ 
                          M^{-2j_3}
\nonumber\\
&&
\hskip0.5in\le\const |j_1|M^{j_1}\  |j_2|M^{j_2}\ M^{-2j_3}
\nonumber
\end{eqnarray}
and
\begin{eqnarray}
&&
\sum_{j_1\le j_2\le j_3<0\atop j_3\ge (1-\tilde\et)j_1}
|j_1|M^{j_1}\  |j_2|M^{j_2}\ M^{-2j_3}
\nonumber\\
&&\hskip0.5in\le
\const \sum_{j_1\le j_2\le j_3<0}
|j_1|M^{j_1}\  |j_2|M^{j_2}\ M^{-j_3-(1-\tilde\et)j_1}
\nonumber\\
&&\hskip0.5in\le
\const \sum_{j_1\le j_2<0}
|j_1|\,|j_2|\,M^{j_1}\ M^{-(1-\tilde\et)j_1}
\nonumber\\  
&&\hskip0.5in=
\const \sum_{j_1<0}
|j_1|^3M^{\tilde\et j_1}<\infty
\nonumber
\end{eqnarray}
\goodbreak
\bigskip\noindent{\bf  Reduction 2:} For any $\et>0$, it suffices to consider $(\bk,\bp)$ with 
$|\pm \bk\pm \bp\pm \bq-\tilde \bq|\ge M^{\et j_3}$ for all singular points 
$\tilde \bq$. Let $\Xi_{j_3}(\bk)$ be the characteristic
function of the set 
$$
\set{\bk\in\bR^2}{|\bk-\tilde \bq|\ge M^{\et j_3}\hbox{ for all singular points
} \tilde \bq}
$$
If $|\pm \bk\pm \bp\pm \bq-\tilde\bq|\le M^{\et j_3}$ for some singular point
$\tilde \bq$,
then $\big|\nabla e(\pm \bk\pm \bp\pm \bq)\big|\le \const M^{\et j_3}$
and we may bound
\begin{eqnarray}
&&\bigg|\int d^3k d^3p\ V\ C_{j_1}(k)C_{j_2}(p)
\nabla_\bq C_{j_3}(\pm k\pm p\pm q)
\ \big(1-\Xi_{j_3}(\pm \bk\pm \bp\pm \bq)\big)\bigg|
\nonumber\\
&&\hskip0.5in
\le\const \int d^3k d^3p\ f(\sfrac{|ik_0-e(\bk)|^2}{ M^{2j_1}})M^{-j_1}\  
                          f(\sfrac{|ip_0-e(\bp)|^2}{M^{2j_2}})M^{-j_2}\ 
                          M^{-(2-\et)j_3}
\nonumber\\
&&\hskip0.5in
\le\const |j_1|M^{j_1}\  |j_2|M^{j_2}\ M^{-(2-\et)j_3}
\nonumber
\end{eqnarray}
and
\goodbreak
\begin{eqnarray}
&&
\sum_{j_1\le j_2\le j_3<0}
|j_1|M^{j_1}\  |j_2|M^{j_2}\ M^{-(2-\et)j_3}
\nonumber\\
&&\hskip0.5in\le
\const \sum_{j_1\le j_2<0}
|j_1|M^{j_1}\  |j_2|M^{j_2}\ M^{-(2-\et)j_2}
\nonumber\\
&&\hskip0.5in\le
\const \sum_{j_1\le j_2<0}
|j_1|^2M^{j_1}\  M^{-j_2+\et j_2}
\nonumber\\ 
&&\hskip0.5in\le
\const \sum_{j_1<0}
|j_1|^2M^{j_1}\  M^{-j_1+\et j_1}
\nonumber\\ 
&&\hskip0.5in=
\const \sum_{j_1<0}
|j_1|^2M^{\et j_1}<\infty
\nonumber
\end{eqnarray}

\bigskip\noindent{\bf  Current status:}
It remains to bound
\begin{eqnarray}
&&\sum_{j_1\le j_2\le j_3<0\atop j_3\le (1-\tilde\et)j_1}\!
\bigg|\int d^3k d^3p\ V\ C_{j_1}(k)C_{j_2}(p)\nabla_\bq C_{j_3}(\pm k\pm p\pm q)
\ \Xi_{j_3}(\pm \bk\pm \bp\pm \bq)\bigg|
\nonumber\\
&&\hskip0.1in
\le\const \sum_{j_1\le j_2\le j_3<0\atop j_3\le (1-\tilde\et)j_1}
          \int d^3k d^3p\ f(\sfrac{|ik_0-e(\bk)|^2}{ M^{2j_1}})M^{-j_1}\  
                          f(\sfrac{|ip_0-e(\bp)|^2}{M^{2j_2}})M^{-j_2}
\nonumber\\ 
  &&\hskip5.5cm       M^{-2j_3}\ \Xi_{j_3}(\pm \bk\pm \bp\pm \bq)
                  \chi_{j_3}(\pm \bk\pm \bp\pm \bq)
\nonumber\\
&&\hskip0.1in
\le\const\! \sum_{j_1\le j_2\le j_3<0\atop j_3\le (1-\tilde\et)j_1}M^{-2j_3}
          \int d^2\bk d^2\bp\ \chi_{j_1}(\bk)\  \chi_{j_2}(\bp)\; 
                (\chi_{j_3}\Xi_{j_3})\ (\pm \bk\pm \bp\pm \bq)
\nonumber\\
&&\hskip0.1in
\le\const\! \sum_{j_1,j_2, j_3<0\atop {j_3\over 1-\tilde\et}\le j_1,j_2\le j_3}
          \hskip-10pt 
          M^{-2j_3}
          \int d^2\bk d^2\bp\ \chi_{j_1}(\bk)\  \chi_{j_2}(\bp)\  
                (\chi_{j_3}\Xi_{j_3})\ (\pm \bk\pm \bp\pm \bq)
\nonumber
\end{eqnarray}
where $\chi_j(\bk)$ is the characteristic function of the set of $\bk$'s
with $|e(\bk)|\le M^j$. 

Make a change of variables with $\pm\bk\pm \bp\pm \bq$ becoming the new 
$\bk$ integration variable. This gives
\begin{eqnarray}
\const \sum_{j_1,j_2, j_3<0\atop {j_3\over 1-\tilde\et}\le j_1,j_2\le j_3}
     M^{-2j_3}\int d^2\bk d^2\bp\ \chi_{j_1}(\pm\bk\pm\bp\pm\bq)\  
           \chi_{j_2}(\bp)\ \chi_{j_3}(\bk)\,\Xi_{j_3}(\bk)
\nonumber
\end{eqnarray}\goodbreak
\bigskip\noindent{\bf  Reduction 3:} 
It suffices to show that, for each fixed $\tilde\bk$, $\tilde\bp$ and  
$\tilde\bq$ in $\bR^2$, there are (possibly $\tilde\bk$, $\tilde\bp$, 
$\tilde\bq$ dependent, but $j_i$ independent)  constants $c$ and $C$ such that
\begin{eqnarray}\label{eqnPtwise}
\int_{|\bk-\tilde \bk|\le c}\hskip-20pt d^2\bk\hskip5pt
\int_{|\bp-\tilde\bp|\le c}  \hskip-20 pt d^2\bp\ 
    \chi_{j_1}(\pm\bk\pm\bp\pm\bq)\  \chi_{j_2}(\bp)\  
                \chi_{j_3}(\bk)\,\Xi_{j_3}(\bk)\hskip1in \nonumber \\
  \hskip2.5in  \le C|j_2| M^{j_2}M^{(1-\et)j_3}M^{\veps j_3-\et' j_3}
\end{eqnarray}
for all $\bq$ obeying $|\bq-\tilde \bq|\le c$. The constant $\et'$ will be
chosen later and will obey $\veps-\et-\et'>0$.
Since 
$$
\set{(\bk,\bp,\bq)\in\bR^6}{|e(\bk)|\le 1, |e(\bp)|\le 1,\ |e(\pm\bk\pm\bp\pm\bq)|\le 1}
$$ 
is compact, once this is proven
we will have the bound
\begin{eqnarray}
&&\hskip-0.3in
\const \sum_{j_1,j_2, j_3<0\atop {j_3\over 1-\tilde\et}\le j_1,j_2\le j_3}
         M^{-2j_3}\int d^2\bk d^2\bp\ 
         \chi_{j_1}(\pm\bk\pm\bp\pm\bq)\  \chi_{j_2}(\bp)\  
                \chi_{j_3}(\bk)\,\Xi_{j_3}(\bk)
\nonumber\\
&&\le 
\const \sum_{j_1,j_2, j_3<0\atop {j_3\over 1-\tilde\et}\le j_1,j_2\le j_3}
M^{-2j_3} |j_2| M^{j_2}M^{(1-\et)j_3}M^{\veps j_3-\et' j_3}
\nonumber\\
&&\le\const \sum_{j_3<0}\big(\sfrac{|j_3|}{1-\tilde\et}\big)^3 
           M^{(\veps-\et-\et') j_3}
\nonumber\\
&&\le\const
\nonumber
\end{eqnarray}
since $\veps>\et+\et'$. Furthermore, if $\tilde \bk$ or $\tilde \bp$ or 
$\pm\tilde\bk\pm\tilde\bp\pm\tilde \bq$ does not lie on $\cF$, we can 
choose $c$ sufficiently
small that the integral of  \Ref{eqnPtwise} vanishes whenever 
$|\bq-\tilde\bq|\le c$ and $|j_1|,|j_2|,|j_3|$ are large enough. So it 
suffices to require that $\tilde \bk$, $\tilde \bp$ and 
$\pm\tilde \bk\pm\tilde\bp\pm\tilde \bq$ all lie on $\cF$.

\bigskip\noindent{\bf  Reduction 4:} 
If $\tilde\bk$ is not a singular point, make a change of variables to
$\rho=e(\bk)$ and an ``angular'' variable $\th$. So the 
condition $\chi_{j_3}(\bk)\ne 0$ forces $|\rho|\le \const M^{j_3}$. 
If $\tilde \bk$ is a singular
point, the condition $\Xi_{j_3}(\bk)\ne 0$ forces $|\bk-\tilde\bk|\ge\const
M^{\et j_3}$ and this, in conjunction with the condition that 
$\chi_{j_3}(\bk)\ne 0$, forces $\bk$ to lie fairly near one of the two 
branches of $\cF$ at $\tilde\bk$ at least a distance $\const M^{\et j_3}$ 
from $\tilde \bk$.  Using Lemma \ref{le:normalform}, we can make a change 
of variables such that $e\big(\bk(\rho,\th)\big)=\rho\th$ and either 
$|\th|\ge\const M^{\et j_3}$ or $|\rho|\ge\const M^{\et j_3}$.
Possibly exchanging the roles of $\rho$ and $\th$, we may, without loss 
of generality assume the former.  Then the 
condition $\chi_{j_3}(\bk)\ne 0$ forces $|\rho|\le \const M^{(1-\et)j_3}$. Thus,
regardless of whether $\tilde\bk$ is singular or not,
\begin{eqnarray}
&&\hskip-0.3in\int_{|\bk-\tilde \bk|\le c}\hskip-20pt d^2\bk\hskip5pt
\int_{|\bp-\tilde\bp|\le c}  \hskip-20 pt d^2\bp\ 
    \chi_{j_1}(\pm\bk\pm\bp\pm\bq)\  \chi_{j_2}(\bp)\  
                \chi_{j_3}(\bk)\,\Xi_{j_3}(\bk)
\nonumber\\
&&\hskip0.2in\le
\const \int_{|\th|\le 1\atop |\rho|\le {\rm const}\, M^{(1-\et)j_3}}
\hskip-20pt d\rho d\th\hskip5pt
\int_{|\bp-\tilde\bp|\le c}  \hskip-20 pt d^2\bp\ 
    \chi_{j_1}(\pm\bk(\rho,\th)\pm\bp\pm\bq)\  \chi_{j_2}(\bp)
\nonumber\\
&&\hskip0.2in\le
\const \int_{|\th|\le 1\atop |\rho|\le {\rm const}\, M^{(1-\et)j_3}}
\hskip-20pt d\rho d\th\hskip5pt
\int_{|\bp-\tilde\bp|\le c}  \hskip-20 pt d^2\bp\ 
    \chi_{j'}(\pm\bk(0,\th)\pm\bp\pm\bq)\  \chi_{j_2}(\bp)
\nonumber\\
&&\hskip0.2in\le
\const M^{(1-\et)j_3}\int_{|\th|\le 1 }\hskip-10pt d\th\hskip5pt
\int_{|\bp-\tilde\bp|\le c}  \hskip-20 pt d^2\bp\ 
    \chi_{j'}(\pm\bk(0,\th)\pm\bp\pm\bq)\  \chi_{j_2}(\bp)
\nonumber
\end{eqnarray}
where $M^{j'}=M^{j_1}+\const M^{(1-\et)j_3}\le \const M^{(1-\et)j_3}$.
Thus it suffices to prove that
$$
\int_{|\bp-\tilde\bp|\le c} d^2\bp\ \int_{|\th|\le 1 }d\th\ 
    \chi_{j'}(\pm\bk(0,\th)\pm\bp\pm\bq)\  \chi_{j_2}(\bp) 
\le C|j_2| M^{j_2}M^{\veps j_3-\et' j_3}
$$
for all $\bq$ obeying $|\bq-\tilde \bq|\le c$.

\medskip\noindent{\bf  The Final Step:}
We apply Proposition \ref{le:length2} with $j=j'$ and 
$\de= M^{(1+\veps)j_3/2}$. Denote by $\tilde
\chi(\bp)$ the characteristic function of the set of $\bp$'s with
$$
\mu\Big(\set{-1\le \th\le 1}
            {\big|e\big(\pm\bp\pm\bq\pm\bk(0,\th)\big)\big|\le M^{j'}}\Big)
\ge c_1\big(\sfrac{M^{j'}}{M^{(1+\veps)j_3/2}}\big)^{1/n_0}
$$ 
where  $c_1$ is the supremum of $\sfrac{d\th}{ds}$ ($s$ is arc length).
If $\pm\bp\pm\bq$ is not in the set of measure $D\de^2$ specified in
Proposition \ref{le:length2}, then 
$$
\ell\Big(\set{\bk\in\cF}
        {\big|e\big(\pm\bp\pm\bq\pm\bk\big)\big)\big|\le M^{j'}}\Big)
< \big(\sfrac{M^{j'}}{\de}\big)^{1/n_0} 
$$
and hence\goodbreak
\begin{eqnarray}
&&\mu\Big(\set{-1\le \th\le 1}
  {\big|e\big(\pm\bp\pm\bq\pm\bk(0,\th)\big)\big|\le  M^{j'}}\Big)
\nonumber\\
&&\hskip1in=\int_{-1}^1 d\th \
\chi\Big(\big|e\big(\pm\bp\pm\bq\pm\bk(0,\th)\big)\big|\le  M^{j'}\Big)
\nonumber\\
&&\hskip1in=\int ds \sfrac{d\th}{ds} \
\chi\Big(\big|e\big(\pm\bp\pm\bq\pm\bk\big)\big|\le  M^{j'}\Big)
\nonumber\\
&&\hskip1in\le c_1 \ell\Big(\set{\bk\in\cF}
            {\big|e\big(\pm\bp\pm\bq\pm\bk\big)\big)\big|\le M^{j'}}\Big)
\nonumber\\
&&\hskip1in<c_1\big(\sfrac{ M^{j'}}
                      {M^{(1+\veps)j_3/2}}\big)^{1/n_0}
\nonumber
\end{eqnarray}
so that $\tilde\chi(\bp)=0$.
Thus $\tilde \chi(\bp)$ vanishes except on a set of measure $DM^{(1+\veps)j_3}$
and
\begin{eqnarray}
&&\int d^2\bp\ \int_{|\th|\le 1 }d\th\ 
    \chi_{j'}(\pm\bk(0,\th)\pm\bp\pm\bq)\  \chi_{j_2}(\bp) 
\nonumber\\
&&\hskip0.5in\le \int d^2\bp\ \tilde \chi(\bp)\chi_{j_2}(\bp)
    \int_{|\th|\le 1 }d\th\ \chi_{j'}(\pm\bk(0,\th)\pm\bp\pm\bq)
\nonumber\\
&&\hskip1in+\int d^2\bp\ \big(1-\tilde \chi(\bp)\big)\chi_{j_2}(\bp)
    \int_{|\th|\le 1 }d\th\ 
    \chi_{j'}(\pm\bk(0,\th)\pm\bp\pm\bq)
\nonumber\\
&&\hskip0.5in\le 2\int d^2\bp\ \tilde \chi(\bp)
    +\const\int d^2\bp\ \chi_{j_2}(\bp)
    \big(\sfrac{ M^{j'}}{M^{(1+\veps)j_3/2}}\big)^{1/n_0}
\nonumber\\
&&\hskip0.5in\le\const M^{(1+\veps)j_3}
    +\const |j_2|M^{j_2}
    \big(\sfrac{ M^{j'}}{M^{(1+\veps)j_3/2}}\big)^{1/n_0}
\nonumber\\
&&\hskip0.5in\le\const M^{j_3}M^{\veps j_3}
    +\const |j_2|M^{j_2}
    \big( M^{(1-\et)j_3-{1+\veps\over 2}j_3}\big)^{1/n_0}
\nonumber\\
&&\hskip0.5in\le\const M^{j_2}M^{(\veps-{\tilde\et\over1-\tilde\et}) j_3}
    +\const |j_2|M^{j_2}
    \big( M^{({1\over 2}-\et-{\veps\over 2})j_3}\big)^{1/n_0}
\nonumber\\
&&\hskip1.5in\hbox{since $j_3\le(1-\tilde\et)j_2
        \le j_2-\sfrac{\tilde\et}{1-\tilde\et}j_3$}
\nonumber\\
&&\hskip0.5in\le \const|j_2| M^{j_2}M^{\veps j_3-\et' j_3}
\nonumber
\end{eqnarray}
provided $\veps\le\sfrac{1}{2n_0}$, $\et'\ge\sfrac{\tilde\et}{1-\tilde\et}$
and $\et'\ge\sfrac{\et}{n_0}+\sfrac{\veps}{2n_0}$.

\smallskip
We  choose $\et'= \sfrac{\et}{n_0}+\sfrac{\veps}{2n_0}$. Then the remaining
conditions
\begin{equation}\label{eq:chooseet}
\veps>\et+\et'
\iff \veps>\et+\sfrac{\et}{n_0}+\sfrac{\veps}{2n_0}
\iff\sfrac{2n_0-1}{2n_0}\veps>\sfrac{n_0+1}{n_0}\et
\iff\et<\sfrac{2n_0-1}{2n_0+2}\veps
\end{equation}
and 
$$
\et'\ge\sfrac{\tilde\et}{1-\tilde\et}
\iff
\tilde\et\le\sfrac{\et'}{1+\et'}=\sfrac{2\et+\veps}{2n_0+2\et+\veps}
$$
are satisfied because of the conditions we imposed when we chose $\et,\tilde\et$
and $\veps$ just before Reduction 1.

We have now verified that $\nabla_\bq F(q)$ is a uniformly convergent sum (over
$j_1$, $j_2$, and $j_3$) of the continuous functions 
$\nabla_\bq\int d^3k d^3p\ V\ C_{j_1}(k)C_{j_2}(p) C_{j_3}(\pm k\pm p\pm q)$.
Hence $F(q)$ is $C^1$ in $\bq$.
\end{proof}

\subsubsection{The general diagram}
The argument of the last section applies equally well to general diagrams.
\begin{lemma}\label{le:spatderiv}
Let $G(q)$ be the value of any two--legged 1PI graph with external momentum
$q$. Then $G(q)$ is $C^1$ with respect to the spatial components $\bq$.
\end{lemma}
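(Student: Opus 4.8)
The plan is to reduce the general two-legged 1PI graph to exactly the same combinatorial/power-counting situation handled in the proof of Lemma \ref{le:secondorder}. First I would set up the multiscale decomposition: write each internal propagator $C(k) = \sum_{j<0} C_j(k)$ using the partition of unity $U(k) = \sum_j f(M^{-2j}|ik_0 - e(\bk)|^2)$, so that $G(q)$ becomes a sum over scale assignments $(j_\ell)_{\ell}$ to the internal lines of the graph. Since the graph is 1PI and two-legged, it has at least two lines, and removing any single line leaves a connected graph; in particular, choosing a line $\ell_*$ on which the external momentum $q$ is routed, the spanning tree can be chosen so that $q$ flows only through $\ell_*$. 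Apply $\nabla_\bq$. As in Lemma \ref{le:secondorder}, if $\nabla_\bq$ hits the interaction vertices $V$ the estimate is easy (the derivative only improves convergence, or at worst is neutral, and the loop integrations are controlled by the standard power-counting bounds for 1PI two-legged graphs, which give a convergent geometric-type sum over scales). So the only dangerous term is when $\nabla_\bq$ acts on $C_{j_*}(\pm k \pm p \pm \cdots \pm q)$, producing a factor bounded by $\const\, M^{-2j_*}$.

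Next I would localize the scales: reorder so that $j_*$ (the scale of the line carrying $q$) is the largest, i.e. $|j_*|$ smallest, exactly as in the second-order case. A "Reduction 1"-type argument handles the region where the other scales are much larger than $j_*$: there the extra $M^{-2j_*}$ is compensated by the volume gains $|j_\ell| M^{2j_\ell}$ from the other lines (the $k_0$-components of the loop momenta contribute one power of $M^{j_\ell}$ each, the spatial components another, via Lemma \ref{FS1-le:2.4} of \cite{SFS1}), so the sum over scales converges. The remaining region, where all scales are comparable to $j_*$ up to a factor $(1-\tilde\et)$, is where the length-of-overlap estimate must be used. Here I would insert the cutoff $\Xi_{j_*}$ excluding a neighbourhood of size $M^{\et j_*}$ of the singular points (Reduction 2: off that neighbourhood $|\nabla e|$ is small, which is the whole point of item \ref{erschtns} of the theorem), reducing to a pointwise estimate (Reduction 3) localized near fixed base points $\tilde\bk_\ell$ on $\cF$, and then change variables to $(\rho,\theta)$ coordinates near each $\tilde\bk_\ell$ (Reduction 4).

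The key new point relative to Lemma \ref{le:secondorder} is purely organizational: in a general graph there are more loop lines, but the structure of the dangerous estimate is unchanged — one distinguished line carries the external momentum and gets differentiated, one other line supplies the crucial length-of-overlap factor $(M^{j'}/\de)^{1/n_0}$ from Proposition \ref{le:length2} applied to the constraint that $\pm\bk\pm\bp\pm\cdots\pm\bq$ lies within $M^{j'}$ of $\cF$, and all remaining lines contribute only their standard volume factors $|j_\ell|M^{2j_\ell}$ and sup-norm bounds $M^{-j_\ell}$. With the same choice of small constants $\veps \le \frac{1}{2n_0}$, $\et < \frac{2n_0-1}{2n_0+2}\veps$, $\tilde\et \le \frac{2\et+\veps}{2n_0+2\et+\veps}$, $\et' = \frac{\et}{n_0}+\frac{\veps}{2n_0}$, the final scale sum $\sum_{j_*<0} |j_*|^{p} M^{(\veps-\et-\et')j_*}$ converges, where $p$ is a power depending on the number of loops. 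The main obstacle I anticipate is bookkeeping: verifying that, for every sign assignment and every spanning-tree/loop-momentum choice in an arbitrary 1PI two-legged graph, the line carrying $q$ can be arranged to have the smallest $|j|$ and that the "second" line needed for the overlap estimate can always be found among the loop lines independent of it — essentially a connectivity argument using 1-particle-irreducibility (if the $q$-line were the only connection, the graph would be 1PR). Once that combinatorial setup is in place, the analytic estimates are verbatim those of the previous two sections, so the proof is: "apply the argument of the proof of Lemma \ref{le:secondorder}, with the role of the three lines played by: the $q$-line, one adjacent loop line, and all remaining lines absorbed into trivial power counting."
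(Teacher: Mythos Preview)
Your plan has the right analytic core but two genuine gaps.

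First, you underestimate the combinatorics. You write that the spanning tree can be chosen so that $q$ flows through a single line $\ell_*$, and that a single additional loop line suffices for the overlap estimate. Neither is correct. The external momentum runs along the entire linear subtree of $T$ joining the two external vertices, and the derivative may (by the product rule) land on any line $\ell_3$ of that path. More importantly, the second-order estimate already used \emph{two} loop momenta $k,p$ running through the differentiated line: the $\bk$ integration supplies the arc-length variable for Proposition \ref{le:length2}, while the $\bp$ integration is needed to control the measure $D\de^2$ of the exceptional set on which the length-of-overlap bound fails. One extra line is not enough. The paper secures this via Lemma \ref{le:overlap}: for any $\ell_3$ on the external path there exist \emph{two} distinct lines $\ell_1,\ell_2\notin T$ with $\ell_3\in\La_{\ell_1}\cap\La_{\ell_2}$. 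The proof is not just 1PI connectivity; the case where $G\setminus\{\ell_1,\ell_3\}$ disconnects is ruled out by parity (each component would then have three external legs, impossible since every vertex has even degree). Lemma \ref{le:overlapScale} then gives $j_{\ell_1},j_{\ell_2}\le j_{\ell_3}$ automatically from the tree compatibility of $J$, so no ``reordering'' is needed.

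Second, you omit renormalization entirely. For a general 1PI two-legged graph the scale sums diverge unless two-legged subdiagrams are renormalized. The paper's proof is an induction on the depth of the renormalized tree expansion of \cite[Appendix \ref{FS1-ap:A}]{SFS1}: $c$--forks and $r$--forks are treated as generalized vertices with the bounds \Ref{eq:cforkS}, \Ref{eq:rforkS}, and the inductive hypothesis \Ref{eq:spatderivindhyp} carries the improved spatial-derivative exponent $M^{-s_1(1-\dimp)j}$ through these vertices. Your proposal (``all remaining lines absorbed into trivial power counting'') would fail for any graph containing a two-legged 1PI subgraph. Once these two ingredients are in place, the analytic Reductions 1--4 do indeed transplant verbatim from Lemma \ref{le:secondorder}; that part of your plan is correct.
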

\begin{proof}
We shall simply merge the argument of the last section with the general
bounding argument of \cite[Appendix \ref{FS1-ap:A}]{SFS1}. 
This is a good time to read that
Appendix, since we shall just explain the modifications to be made to
it.  In addition to the small constants 
$\et,\et',\tilde\et,\veps>0$ of Lemma \ref{le:secondorder}, we choose a
small constant $\dimp>0$ and require\footnote{The first three conditions
as well as the condition that $\sfrac{\tilde\et}{1-\tilde\et}\le\et'$
were already present in Lemma \ref{le:secondorder}. The other conditions
are new.}  that
$$
0<\veps\le\sfrac{1}{2n_0},\ 
0<\et<\sfrac{2n_0-1}{2n_0+2}\veps,\ 
\et'=\sfrac{\et}{n_0}+\sfrac{\veps}{2n_0}\  {\rm and}\ 
\sfrac{\tilde\et}{1-\tilde\et}<\min\big\{\et',\veps-\et-\et'\big\}
$$ 
and
$$
\dimp\le\min\{\et\,,\,\tilde\et\,,\,(1+\veps-\et-\et')(1-\tilde\et)-1\}
$$
with $n_0$ being the integer in  Proposition \ref{le:length2}. All of these conditions
may be satisfied by
\begin{itemize}
\item[$\circ$] choosing $0<\veps\le\sfrac{1}{2n_0}$ and then 
\item[$\circ$] choosing $0<\et<\sfrac{2n_0-1}{2n_0+2}\veps$
(by \Ref{eq:chooseet}, this ensures that $\veps-\et-\et'>0$) and then
\item[$\circ$] choosing $0<\tilde\et<1$ so that
$\sfrac{\tilde\et}{1-\tilde\et}<\min\big\{\et',\veps-\et-\et'\big\}$
(this ensures that the expression $(1+\veps-\et-\et')(1-\tilde\et)-1>0$) and then
\item[$\circ$] choosing $\dimp>0$ so that 
$\dimp\le\min\{\et\,,\,\tilde\et\,,\,(1+\veps-\et-\et')(1-\tilde\et)-1\}$
\end{itemize} 

As in \cite[Appendix \ref{FS1-ap:A}]{SFS1}, use 
\cite[\Ref{FS1-eq:Cexpn}]{SFS1} to introduce a scale expansion
for each propagator and express $G(q)$ in terms of a renormalized 
tree expansion \cite[\Ref{FS1-eq:Gren}]{SFS1}. We shall prove, by 
induction on the depth, $D$, of $G^J$, the bound
\begin{equation}\label{eq:spatderivindhyp}
\sum_{J\in \cJ(j,t,R,G)}\sup_{q}\big|\partial_{q_0}^{s_0}\partial_\bq^{s_1}G^J(q)\big|
\le\cst{}{n} |j|^{3n-2}M^{j}M^{-s_0j}M^{-s_1(1-\dimp)j}
\end{equation}
for $s_0,s_1\in\{0,1\}$.  The notation is as in
\cite[Appendix \ref{FS1-ap:A}]{SFS1}: $n$ is the number of vertices in $G$ 
and $\cJ(j,t,R,G)$ is the set of all assignments $J$ of scales to the lines of 
$G$ that have root scale $j$, that give forest $t$ and that are compatible
with the assignment $R$ of renormalization labels to the two--legged forks
of $t$. (This is explained in more detail just before 
\cite[\Ref{FS1-eq:Gren}]{SFS1}.) If $s_0=0$ and $s_1=1$, the right hand side 
becomes $\cst{}{n}|j|^{3n-2}M^{\dimp j}$, which is summable over $j<0$, implying
that $G(q)$ is $C^1$ with respect to the spatial components $\bq$. If $s_1=0$,
\Ref{eq:spatderivindhyp} is contained in  \cite[Proposition \ref{FS1-le:fixedrootscale}]{SFS1},
so it suffices to consider $s_1=1$.

As in \cite[Appendix \ref{FS1-ap:A}]{SFS1}, if $D>0$, decompose the tree $t$ 
into a pruned tree $\tilde t$ and insertion subtrees $\tau^1,\cdots,\tau^m$ by 
cutting the branches beneath  all minimal $E_f=2$ forks $f_1,\cdots,f_m$. In 
other words each of the forks $f_1,\cdots,f_m$ is an $E_f=2$ fork having
no $E_f=2$ forks, except $\phi$, below it in $t$. Each $\tau_i$
consists of the fork $f_i$ and all of $t$ that is above $f_i$. It has depth
at most $D-1$ so the corresponding subgraph $G_{f_i}$ obeys
\Ref{eq:spatderivindhyp}. Think of each subgraph $G_{f_i}$ as
a generalized vertex in the graph $\tilde G=G/\{G_{f_1},\cdots,G_{f_m}\}$.
Thus $\tilde G$ now has two as well as four--legged vertices. These
two--legged vertices have kernels of the form
$
T_i(k)=\sum_{j_{f_i}\le j_{\pi(f_i)}}\ell G_{f_i}(k)
$
when $f_i$ is a $c$--fork and of the form
$
T_i(k)=\sum_{j_{f_i}> j_{\pi(f_i)}}(\bbbone-\ell)G_{f_i}(k)
$
when $f_i$ is an $r$--fork. At least one of the external lines of $G_{f_i}$ 
must be of scale precisely $j_{\pi(f_i)}$ so
the momentum $k$ passing through $G_{f_i}$ lies in the support of 
$C_{j_{\pi(f_i)}}$. In the case of a $c$--fork $f=f_i$ we have, as in
\cite[\Ref{FS1-eq:cforkA}]{SFS1} and using the same notation, by the inductive hypothesis,
\begin{eqnarray}
&&\hskip-23pt\sum_{j_{f}\le j_{\pi(f)}}\sum_{J_f\in\cJ(j_f,t_f,R_f,G_f)}
\hskip-6pt\sup_{k}\Big|\partial_\bk^{s_1}\ell G_{f}^{J_f}(k)\Big|
\le\sum_{j_{f}\le j_{\pi(f)}}\hskip-7pt\cst{}{n_f}|j_f|^{3n_f-2}M^{j_f}
         M^{-s_1(1-\dimp)j_f}\nonumber\\
&&\hskip70pt
\le \cst{}{n_f}|j_{\pi(f)}|^{3n_f-2}M^{j_{\pi(f)}}M^{-s_1(1-\dimp)j_{\pi(f)}}
\label{eq:cforkS} 
\end{eqnarray}
for $s_1=0,1$.
As $\ell G_{f}^{J_f}(k)$ is independent of $k_0$ derivatives with respect
to $k_0$ may not act on it.
In the case of an $r$--fork $f=f_i$, we have, as in 
\cite[\Ref{FS1-eq:rforkA}]{SFS1},
\begin{eqnarray}
&&\hskip-20pt\sum_{j_{f}> j_{\pi(f)}}\sum_{J_f\in\cJ(j_f,t_f,R_f,G_f)}
\sup_{k}\bbbone\big(C_{j_{\pi(f)}}(k)\ne 0\big)
\Big|\partial_{k_0}^{s_0}\partial_\bk^{s_1}
(\bbbone-\ell)G_{f}^{J_f}(k)\Big|\nonumber\\
&&\hskip70pt
\le \sum_{j_{f}> j_{\pi(f)}}\ \sum_{J_f\in\cJ(j_f,t_f,R_f,G_f)}
M^{(1-s_0)j_{\pi(f)}}\sup_{k}\Big|\partial_{k_0}\partial_\bk^{s_1}
    G_{f}^{J_f}(k)\Big|\nonumber\\
&&\hskip70pt
\le\cst{}{n_f}M^{(1-s_0)j_{\pi(f)}}
\sum_{j_{f}> j_{\pi(f)}}|j_{f}|^{3n_f-2}M^{-s_1(1-\dimp)j_f}\nonumber\\
&&\hskip70pt
\le\cst{}{n_f}|j_{\pi(f)}|^{3n_f-1}M^{j_{\pi(f)}}M^{-s_0j_{\pi(f)}}
M^{-s_1(1-\dimp)j_{\pi(f)}}\label{eq:rforkS}
\end{eqnarray}

Denote by $\tilde J$ the restriction to $\tilde G$ of the scale assignment
$J$. We bound $\tilde G^{\tilde J}$, which again is of the form 
\cite[\Ref{FS1-eq:tildeGtildeJform}]{SFS1}, by a variant of the six step 
procedure followed in \cite[Appendix \ref{FS1-ap:A}]{SFS1}. In fact the first 
four steps are identical.
\begin{enumerate} 
\item
Choose a spanning tree $\tilde T$ for $\tilde G$ with the property that 
$\tilde T\cap \tilde G^{\tilde J}_f$ is a connected tree for every 
$f\in t(\tilde G^{\tilde J})$. 
\item
 Apply any $q$--derivatives. By the product rule each derivative may act on any 
line or vertex on the ``external momentum path''. It suffices to consider 
any one such action.
\item
 Bound each two--legged renormalized subgraph (i.e.
$r$--fork) by \Ref{eq:rforkS} and each two--legged 
counterterm (i.e. $c$--fork) by \Ref{eq:cforkS}. 
Observe that when $s'_0$ $k_0$--derivatives and $s'_1$ $\bk$--derivatives act 
on the vertex, the bound is no worse than $M^{-s'_0j}M^{-s_1'(1-\dimp)j}$ 
times the bound with no derivatives, because we necessarily have 
$j\le j_{\pi(f)}<0$. 
\item
 Bound all remaining vertex functions, $u_\rv$, (suitably differentiated)
by their suprema in momentum space. 
\end{enumerate} 
We have already observed that if $s_1=0$, the bound \Ref{eq:spatderivindhyp} 
is contained in \cite[Proposition \ref{FS1-le:fixedrootscale}]{SFS1}, 
with $s=0$. In the event that $s_1=1$, but the spatial gradient acts on a 
vertex, \cite[Proposition \ref{FS1-le:fixedrootscale}]{SFS1}, again with 
$s=0$ but with either one $v$ replaced by its gradient or with an extra factor 
of $M^{-(1-\dimp)j}$ coming from Step 3, again gives \Ref{eq:spatderivindhyp}.

So it suffices to consider the case that $s_1=1$ and the spatial gradient
acts on a propagator of the ``external momentum path''. It is in this case
that we apply the arguments of Lemma \ref{le:spatderiv}. The heart of 
those arguments was the observation that, when the gradient acted on a line 
$\ell_3$ of scale $j_3$, the line $\ell_3$ also lay on distinct momentum
loops, $\La_{\ell_1}$ and $\La_{\ell_2}$, generated by lines, $\ell_1$ and
$\ell_2$ of scales $j_{\ell_1}$ and $j_{\ell_2}$ with $j_{\ell_1},
j_{\ell_2}\le j_{\ell_3}$. This is still the case and is proven in Lemmas
\ref{le:overlap} and \ref{le:overlapScale} below. So we may now apply
the procedure of Lemma \ref{le:secondorder}.

\bigskip\noindent{\bf  Reduction 1:} 
It suffices to consider $j\le j_{\ell_3}\le (1-\tilde\et)j$. For the 
remaining terms, we simply bound the differentiated propagator,
as in the argument following \Ref{eq:diffprop}, by
\begin{eqnarray*}
\big|\partial_{q_0}^{s_0'}\partial_\bq C_{j_{\ell_3}}\big(k_{\ell_3}(q)\big)\big|
    &\le&\const   M^{-2j_{\ell_3}}M^{-s_0'j_{\ell_3}}\\
    &\le&\const   M^{-j_{\ell_3}}M^{-s_0'j_{\ell_3}}M^{-(1-\tilde\et)j}\\
    &\le&\const   M^{-j_{\ell_3}}M^{-s_0'j_{\ell_3}}M^{-(1-\dimp)j}
\end{eqnarray*}
So for the terms with $j_{\ell_3}> (1-\tilde\et)j$, the effect of the spatial
gradient is to degrade the $s_1=0$ bound by at most a factor of 
$\const M^{-(1-\dimp)j}$ and we may apply the rest of \cite[Proposition 
\ref{FS1-le:fixedrootscale}]{SFS1}, starting with step 5,  without further 
modification. 

\goodbreak
\bigskip\noindent{\bf  Reduction 2:} It suffices to consider
loop momenta $\big(k_\ell\big)_{\ell\in\tilde G\setminus\tilde T}$,
in the domain of integration of \cite[\Ref{FS1-eq:tildeGtildeJform}]{SFS1}, 
for which the momentum $\bk_{\ell_3}$ flowing through $\ell_3$ (which
is a linear combination of $\bq$ and various loop momenta) remains a distance
at least $M^{\et j_{\ell_3}}$ from all singular points. 
If $\bk_{\ell_3}$ is at most a distance $M^{\et j_{\ell_3}}$ from some 
singular point then $\big|\partial_\bk e(\bk_{\ell_3})\big|
\le \const M^{\et j_{\ell_3}}$
and we may use the $\nabla e$ in the numerator of \Ref{eq:diffprop}
to improve the bound on the $\ell_3$ propagator to
\begin{eqnarray*}
\big|\partial_{q_0}^{s_0'}\partial_\bq C_{j_{\ell_3}}\big(k_{\ell_3}(q)\big)\big|
    &\le&\const   M^{-2j_{\ell_3}}M^{-s_0'j_{\ell_3}}M^{\et j_{\ell_3}}\\
    &\le&\const   M^{-j_{\ell_3}}M^{-s_0'j_{\ell_3}}M^{-(1-\et)j}\\
    &\le&\const   M^{-j_{\ell_3}}M^{-s_0'j_{\ell_3}}M^{-(1-\dimp)j}
\end{eqnarray*}
Once again, in this case, the effect of the spatial
gradient is to degrade the $s_1=0$ bound by at most a factor of 
$\const M^{-(1-\dimp)j}$ and we may apply the rest of the proof 
\cite[Proposition \ref{FS1-le:fixedrootscale}]{SFS1}, starting with step 5,  
without further modification.

\bigskip\noindent{\bf  Reduction 3:} 
Now apply step 5, that is, bound every propagator.
The extra $\partial_\bq$ acting on $C_{j_{\ell_3}}$ gives a  factor
of $M^{-s_1j_{\ell_3}}\le M^{-s_1j}$ worse than the bound achieved in step 5
of  \cite[Proposition \ref{FS1-le:fixedrootscale}]{SFS1}.
Prepare for the application of step 6, the integration over loop
momenta, by ordering the integrals in such a way that the two integrals
executed first (that is the two innermost integrals) are those over 
$k_{\ell_1}$ and $k_{\ell_2}$. The momentum flowing through $\ell_3$
is of the form $k_{\ell_3}=\pm k_{\ell_1}\pm k_{\ell_2}+ q'$, where $q'$
is some linear combination of the external momentum $q$ and possibly
various other loop momenta. Make a change of variables from $\bk_{\ell_1}$ 
and $\bk_{\ell_2}$ to $\bk = \bk_{\ell_3}
=\pm \bk_{\ell_1}\pm \bk_{\ell_2}+ \bq'$ and $\bp=\bk_{\ell_2}$. 
It now suffices to show that, for each fixed $\tilde\bk$, $\tilde\bp$ and 
$\tilde\bq'$ in $\bR^2$, there are (possibly $\tilde\bk$, $\tilde\bp$, 
$\tilde\bq'$ dependent, but $j_{\ell_i}$ independent)  constants $c$ and $C$ 
such that
\begin{eqnarray}\label{eqnPtwiseSpatial}
\int_{|\bk-\tilde \bk|\le c}\hskip-20pt d^2\bk\hskip5pt
\int_{|\bp-\tilde\bp|\le c}  \hskip-20 pt d^2\bp\ 
    \chi_{j_{\ell_1}}(\pm\bk\pm\bp\pm\bq')\  \chi_{j_{\ell_2}}(\bp)\  
            \chi_{j_{\ell_3}}(\bk)\,\Xi_{j_{\ell_3}}(\bk)\hskip1in \nonumber \\
  \hskip2.25in  \le C|j_{\ell_2}| M^{j_{\ell_2}}
               M^{(1-\et)j_{\ell_3}}M^{\veps j_{\ell_3}-\et' j_{\ell_3}}
\end{eqnarray}
for all $\bq'$ obeying $|\bq'-\tilde \bq'|\le c$. Recall that 
$\chi_j(\bk)$ is the characteristic function of the set of $\bk$'s
with $|e(\bk)|\le M^j$ and $\Xi_j(\bk)$ is the characteristic
function of the set 
$$
\set{\bk\in\bR^2}{|\bk-\tilde \bq|\ge M^{\et j}\hbox{ for all singular points
} \tilde \bq}
$$ 
Once proven, the bound \Ref{eqnPtwiseSpatial} (together with the usual
compactness argument) replaces the bound
\begin{eqnarray*}
\int d^2\bk_{\ell_1}
\int d^2\bk_{\ell_2}\ 
    \chi_{j_{\ell_1}}(\bk_{\ell_1})\  \chi_{j_{\ell_2}}(\bk_{\ell_2}) 
          \le \const |j_{\ell_1}| M^{j_{\ell_1}}\ 
          |j_{\ell_2}| M^{j_{\ell_2}}
\end{eqnarray*}
used in step 6 of \cite[Proposition \ref{FS1-le:fixedrootscale}]{SFS1}. Since
$j\le j_{\ell_1},j_{\ell_2}\le j_{\ell_3}\le (1-\tilde \et) j$, the 
bound \Ref{eqnPtwiseSpatial} constitutes an improvement by a factor of
\begin{eqnarray*}
&&\hskip-20pt\const M^{j_{\ell_2}}
               M^{(1-\et)j_{\ell_3}}M^{\veps j_{\ell_3}-\et' j_{\ell_3}}
              M^{-j_{\ell_1}}\  M^{-j_{\ell_2}}
\le\const  M^{(1-\et-\et'+\veps)j_{\ell_3}} M^{-j_{\ell_1}}\\
&&\hskip1in\le \const  M^{[(1-\et-\et'+\veps)(1-\tilde \et)-1]j}\\
&&\hskip1in\le \const  M^{\dimp j}
\end{eqnarray*}
So once we proven the bound \Ref{eqnPtwiseSpatial}, we may continue with 
the rest of the proof of  
\cite[Proposition \ref{FS1-le:fixedrootscale}]{SFS1}, without further 
modification, and show that the inductive hypothesis \Ref{eq:spatderivindhyp} 
is indeed preserved.

In fact the bound \Ref{eqnPtwiseSpatial} has already been proven
in Reduction 4 and The Final Step of the proof of Lemma \ref{le:secondorder}.
 \end{proof}

\begin{lemma}\label{le:overlap}Let $G$  be any two--legged 1PI graph with
each vertex having an even number of legs.
Let $T$ be a spanning tree for $G$. Assume that the two external legs of
$G$ are hooked to two distinct vertices and that $\ell_3$ is a line of $G$ 
that is in the linear subtree of $T$ joining the external legs. Recall
that any line $\ell$ not in $T$ is associated with a loop
$\La_\ell$ that consists of $\ell$ and the linear subtree of $T$ joining
the vertices at the ends of $\ell$. 
There exist two lines $\ell_1$ and $\ell_2$, not in $T$
such that $\ell_3\in\La_{\ell_1}\cap\La_{\ell_2}$.
\end{lemma}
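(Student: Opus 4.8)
The plan is to argue purely combinatorially, using the 1PI property of $G$ together with a parity count coming from the hypothesis that every vertex has an even number of legs. First I would delete $\ell_3$ from the spanning tree $T$. Since $T$ is a tree, $T\setminus\{\ell_3\}$ has exactly two connected components; let $V_a$ and $V_b$ be their vertex sets, labelled so that one external vertex lies in $V_a$ and the other in $V_b$. This labelling is legitimate precisely because $\ell_3$ lies on the linear subtree of $T$ joining the two external vertices, so removing $\ell_3$ separates them. Note that $\ell_3$ is then the only line of $T$ having one end in $V_a$ and one end in $V_b$, since $T\setminus\{\ell_3\}$ is a spanning forest and every remaining edge of $T$ stays inside one of its two components.

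Next I would count, modulo $2$, the number of half-edges incident to the vertices of $V_a$. On one hand, summing the degree of each vertex of $V_a$ in the graph (with external legs attached) gives an even number, since each vertex has an even number of legs. On the other hand this sum equals $2N_{\rm int}+N_{\rm cross}+N_{\rm ext}$, where $N_{\rm int}$ is the number of internal lines with both ends in $V_a$, $N_{\rm cross}$ the number of internal lines with one end in $V_a$ and one in $V_b$, and $N_{\rm ext}$ the number of external legs attached to vertices of $V_a$. By construction $N_{\rm ext}=1$, so $N_{\rm cross}$ is odd; in particular $N_{\rm cross}\ge 1$. But $N_{\rm cross}=1$ would mean $\ell_3$ is the only line of $G$ joining $V_a$ to $V_b$, so $G\setminus\{\ell_3\}$ would be disconnected, contradicting that $G$ is 1PI. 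Hence $N_{\rm cross}\ge 3$, i.e.\ besides $\ell_3$ there are at least two distinct lines $\ell_1,\ell_2$ of $G$ with one end in $V_a$ and one end in $V_b$.

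Finally I would check that these $\ell_1,\ell_2$ have the required property. Any crossing line other than $\ell_3$ cannot belong to $T$, so $\ell_1,\ell_2\notin T$ and each carries a loop $\La_{\ell_i}$ consisting of $\ell_i$ together with the linear subtree of $T$ joining its two endpoints. Since those endpoints lie in different components of $T\setminus\{\ell_3\}$, that subtree must use the unique crossing tree-edge $\ell_3$. Therefore $\ell_3\in\La_{\ell_1}\cap\La_{\ell_2}$, as claimed.

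I expect the only real subtlety, and hence the main obstacle, to be the bookkeeping in the parity count: one must be careful that ``legs'' is understood to include the external legs, so that every vertex (including one carrying an external leg, or a two-legged counterterm insertion) has even valence in the graph-with-external-legs, and that exactly one external leg sits in $V_a$. Once that is set up, the combination ``$N_{\rm cross}$ is odd'' (parity) and ``$N_{\rm cross}\ne 1$'' (1PI) is exactly what produces the second loop through $\ell_3$; the remaining steps are routine tree combinatorics. An alternative, parity-free route would be to apply $2$-edge-connectedness to the graph obtained by adjoining an extra edge between the two external vertices, but the parity argument is cleaner and uses the even-valence hypothesis directly.
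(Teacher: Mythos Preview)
Your proof is correct and uses the same two ingredients as the paper's argument: the 1PI property of $G$ and the parity constraint coming from the even-valence hypothesis. The paper organises these slightly differently: it first invokes 1PI to produce one crossing line $\ell_1$, then splits into cases according to whether $G\setminus\{\ell_1,\ell_3\}$ is still connected, and only in the disconnected case does it appeal to parity, by observing that each of the two resulting pieces would then have exactly three external legs, which is impossible. Your version does the parity count once, up front, to get that the number of crossing lines is odd, and then uses 1PI to exclude the value $1$; this avoids the case split entirely and is a little cleaner. Either way the content is the same: odd (parity) and not equal to $1$ (1PI) forces at least three crossing lines, and any non-tree crossing line has $\ell_3$ on its associated loop.
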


\begin{proof} 
Since $T$ is a tree, $T\setminus\{\ell_3\}$ necessarily contains exactly 
two connected components $T_1$ and $T_2$ (though one could consist of just 
a single vertex). 
On the other hand, since $G$ is 1PI, $G\setminus\{\ell_3\}$ must be connected.
So there must be a path in $G\setminus T$ that connects the two components
of $T\setminus\{\ell_3\}$. Since $T$ is a spanning tree, every line of
$G\setminus T$ joins two vertices of $T$, so we may alway choose the connecting
path to consist of a single line. Let $\ell_1$ be any such line. Then
$\ell_3\in\La_{\ell_1}$.
If $G\setminus\{\ell_1,\ell_3\}$ is still connected, then there must be
a second line $\ell_2\ne\ell_1$ of $G\setminus T$ that connects the 
two components of $T\setminus\{\ell_3\}$. Again $\ell_2\in\La_{\ell_1}$.
If $G\setminus\{\ell_1,\ell_3\}$ is not connected, it consists of two connected
components $G_1$ and $G_2$ with $G_1$ containing $T_1$ and $G_2$ containing
$T_2$. Each of $T_1$ and $T_2$ must contain exactly one external vertex
of $G$. So each of $G_1$ and $G_2$ must have exactly one external leg that
is also an external leg of $G$. As $\ell_1$ and $\ell_3$ are the remaining 
external legs of both $G_1$ and $G_2$, each has three external legs, which
is impossible.
\end{proof}

\begin{lemma}\label{le:overlapScale}
Let $G$ and $T$ be as in Lemma \ref{le:overlap}. Let $J$ be an assignment
of scales to the lines of $G$ such that $T\cap G^J_f$ is a connected tree for
every  fork $f\in t(G^J)$. (See step 1 of 
\cite[Proposition \ref{FS1-le:fixedrootscale}]{SFS1}.)
Let $\ell_3\in T$. Let $\ell\in G\setminus T$ connect the two
components of $T\setminus\{\ell_3\}$, then $j_\ell\le j_{\ell_3}$.
\end{lemma}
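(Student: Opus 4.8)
The plan is to argue by contradiction, exploiting the compatibility of the spanning tree $T$ with the scale assignment $J$. Suppose that $j_\ell>j_{\ell_3}$, and set $j=j_\ell$. First I would pick out the relevant fork: let $C$ be the connected component, of the subgraph of $G$ consisting of all lines of scale at least $j$, that contains $\ell$. By the Gallavotti--Nicol\`o construction recalled in \cite[\S 2.1]{FST1} and used in step 1 of \cite[Proposition \ref{FS1-le:fixedrootscale}]{SFS1}, $C=G^J_f$ for some fork $f\in t(G^J)$, and every internal line of $G^J_f$ has scale $\ge j=j_\ell$. Since $j_{\ell_3}<j_\ell$, this already gives $\ell_3\notin G^J_f$.

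Next I would bring in the hypothesis on $T$: because $f$ is a fork, $T\cap G^J_f$ is a connected spanning tree of $G^J_f$. The line $\ell$ lies in $G^J_f$ and is not in $T$, so both of its endpoints are vertices of $G^J_f$; call them $a$ and $b$. Since $\ell$ connects the two components $T_1$ and $T_2$ of $T\setminus\{\ell_3\}$, one of $a,b$ lies in $T_1$ and the other in $T_2$, say $a\in T_1$ and $b\in T_2$. Both $a$ and $b$ are therefore vertices covered by the connected tree $T\cap G^J_f$.

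Finally, I would derive the contradiction. Since $T\cap G^J_f$ is connected and contains both $a$ and $b$, it contains a path $P$ from $a$ to $b$. But $P$ is then a path inside $T$ starting in $T_1$ and ending in $T_2$, and the only line of $T$ joining $T_1$ to $T_2$ is $\ell_3$ (removing $\ell_3$ from $T$ is exactly what splits it into $T_1$ and $T_2$); hence $\ell_3\in P\subseteq T\cap G^J_f\subseteq G^J_f$, contradicting $\ell_3\notin G^J_f$. This forces $j_\ell\le j_{\ell_3}$. The only genuinely delicate point is the bookkeeping in the first step --- making sure the scale component $C$ really is the subgraph $G^J_f$ of a fork to which the hypothesis ``$T\cap G^J_f$ is a connected tree'' applies, and that all internal lines of $G^J_f$ have scale $\ge j_\ell$; both are immediate from the Gallavotti--Nicol\`o definitions, and everything after that is the elementary fact that a spanning tree has a unique edge across any of its cuts.
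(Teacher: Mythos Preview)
Your proposal is correct and is essentially the paper's own argument, merely recast as a proof by contradiction. The paper defines the same connected component $G'$ of lines of scale $\ge j_\ell$ containing $\ell$, invokes the hypothesis that $G'\cap T$ is a spanning tree for $G'$, observes that the unique path in $T$ between the endpoints of $\ell$ (which contains $\ell_3$) therefore lies in $G'$, and concludes directly that $j_{\ell_3}\ge j_\ell$.
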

\begin{proof}
Let $G'$ be the connected component containing $\ell$ of the subgraph of 
$G$ consisting of lines having scales $j\ge j_\ell$. By hypothesis, 
$G'\cap T$ is a spanning tree for $G'$. So the linear subtree of $T$ joining 
the vertices of $\ell$ is completely contained in $G'$. But 
$\ell_3$ is a member of that line's subtree and so has scale 
$j_{\ell_3}\ge j_\ell$.
\end{proof} 
 
\subsection{The frequency derivative of the self--energy}

We show to all orders that singularities in this derivative
can only occur at the closure of the lattice generated by the  
Van Hove points. 
That the singularities really occur is shown by explicit calculations
in model cases in the following sections.   

\begin{lemma}\label{le:q0deriv}
Let $G(q)$ be the value of any two--legged 1PI graph with external momentum
$q$. Let $B$ be the closure of the set of momenta of the form $(0,\bq)$ 
with $\bq=\sum_{i=1}^n(-1)^{s_i}\tilde\bq_i$ where $n\in\bN$ and, for 
each $1\le i\le n$, $s_i\in\{0,1\}$ and $\tilde\bq_i$ is a singular point. 
Then $G(q)$ is $C^1$ with respect to $q_0$ on $\bR^3\setminus B$.
\end{lemma}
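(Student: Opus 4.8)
The plan is to repeat, with one essential modification, the inductive proof of Lemma \ref{le:spatderiv}. We again expand $G(q)$ using the scale decomposition \cite[\Ref{FS1-eq:Cexpn}]{SFS1} and the renormalized tree expansion, and prove a bound on $\sum_{J\in\cJ(j,t,R,G)}\sup_{q}\big|\partial_{q_0}^{s_0}\partial_\bq^{s_1}G^J(q)\big|$ by induction on the depth $D$; the only change from \Ref{eq:spatderivindhyp} is that for $s_0=1$ we no longer carry a decaying power of $M^j$ on the right--hand side, and the bound is claimed only for $q$ ranging over a compact subset $K$ of $\bR^3\setminus B$, the constant being allowed to depend on $K$. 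The decomposition of $t$ into a pruned tree and insertion subtrees, the treatment of two--legged $c$--forks --- which are $q_0$--independent, so that $\partial_{q_0}$ cannot act on them --- by \Ref{eq:cforkS}, and of $r$--forks by \Ref{eq:rforkS} using the inductive hypothesis, go through unchanged; the scale sums internal to these forks run over a bounded range, so the exceptional set $B$ enters nowhere in them. The only genuinely new situation is therefore the one in which $\partial_{q_0}$ acts on a propagator $\ell_3$ of the external--momentum path: by \Ref{eq:diffprop} this produces a factor bounded by $\const\,M^{-2j_{\ell_3}}$, one power of $M^{-j_{\ell_3}}$ worse than the undifferentiated line, and --- in contrast to the spatial derivative of Lemma \ref{le:secondorder} --- there is now no compensating factor $\nabla e$ in the numerator.

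Before treating this term I would record two facts. First, by Lemmas \ref{le:overlap} and \ref{le:overlapScale}, $\ell_3$ lies on two distinct loops $\La_{\ell_1},\La_{\ell_2}$ generated by lines $\ell_1,\ell_2\notin T$ with $j_{\ell_1},j_{\ell_2}\le j_{\ell_3}$. Second, every loop that threads $\ell_3$ is generated by such a line $\ell$, and on the support of the scale--$j_\ell$ cutoff its frequency satisfies $|\om_\ell|\le M^{j_\ell}\le M^{j_{\ell_3}}$; since the external frequency enters $\om_{\ell_3}$ with coefficient $\pm1$ and the scale--$j_{\ell_3}$ cutoff forces $|\om_{\ell_3}|\le M^{j_{\ell_3}}$, the integrand vanishes unless $|q_0|\le\const\,M^{j_{\ell_3}}$. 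Reduction 1 of Lemma \ref{le:secondorder} (restrict to $j_{\ell_3}\le(1-\tilde\et)j$, bounding the remaining terms crudely by $\const\,M^{-2j_{\ell_3}}\le\const\,M^{-j_{\ell_3}}M^{-(1-\dimp)j}$) carries over verbatim.

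For the remaining infra--red contributions I would split the domain of the momentum $\bk_{\ell_3}$ flowing through $\ell_3$ according to whether it lies at distance $\ge M^{\et j_{\ell_3}}$ from all singular points (region~(a)) or within $M^{\et j_{\ell_3}}$ of some singular point (region~(b)). On region~(a) the cutoff $\Xi_{j_{\ell_3}}$ is identically $1$ and the argument of Reductions 3 and 4 and the Final Step of Lemma \ref{le:secondorder} applies word for word: it uses only the length--of--overlap estimate of Proposition \ref{le:length2} and the bound $|e(\bk_{\ell_3})|\le\const\,M^{(1-\et)j_{\ell_3}}$ supplied by the normal form, and never the numerator $\nabla e$; it therefore reproduces \Ref{eqnPtwiseSpatial}, which is summable uniformly in $q$. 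Region~(b) is the one that is responsible for the exceptional set $B$. Here $\bk_{\ell_3}$ is pinned within $M^{\et j_{\ell_3}}$ of a singular point $\tilde\bq_s$; substituting this into the surviving on--shell constraints $\chi_{j_{\ell_1}},\chi_{j_{\ell_2}},\dots$ and, recursively, into those of the insertion subgraphs, and invoking the non--nesting hypothesis \Hf\ for $\cF$ and its possibly reflected translates at each stage --- in the spirit of the case analysis of Lemma \ref{le:curve2} --- one shows that the support of the integrand is empty unless $\bq$ lies within $\const\,M^{\et j_{\ell_3}/n_0}$ of a $\pm$--integer combination of singular points, and that the integrations over the companion loop momenta supply a length--of--overlap factor $\le\const\,\big(M^{j_{\ell_3}}/{\rm dist}(\bq,\overline V)\big)^{1/n_0}$. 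Consequently, for $\bq$ at positive distance from $\overline V$ the region--(b) contribution, summed over all scales, converges --- with a constant that deteriorates as $\bq\to\overline V$ --- while for $q_0\neq0$ the frequency restriction $|q_0|\le\const\,M^{j_{\ell_3}}$ of the previous paragraph leaves, after Reduction 1, only finitely many scale assignments and hence a finite sum of $C^1$ functions.

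It remains to assemble the estimate. Since $B=\{0\}\times\overline V$, where $\overline V$ is the closure of the set of $\pm$--integer combinations of singular points, any compact $K\subset\bR^3\setminus B$ is covered by finitely many compact sets on each of which either $q_0$ is bounded away from $0$ or $\bq$ is bounded away from $\overline V$. On each such piece, the vertex--hit terms (bounded by the undifferentiated estimate of \cite[Proposition \ref{FS1-le:fixedrootscale}]{SFS1}), the Reduction 1 remainder, region~(a), and region~(b) --- the last controlled by one of the two mechanisms just described --- together exhibit $\sum_J\partial_{q_0}G^J(q)$ as a uniformly convergent series of continuous functions. Hence $\partial_{q_0}G$ exists and is continuous on $\bR^3\setminus B$, i.e.\ $G(q)$ is $C^1$ in $q_0$ there. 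The power counting, the reduction to length--of--overlap bounds, and the $c$--/$r$--fork bookkeeping are inherited from Lemma \ref{le:secondorder}, Proposition \ref{le:length2} and Lemma \ref{le:spatderiv}; the real obstacle is the region--(b) analysis, namely turning ``$\bk_{\ell_3}$ close to a singular point'' into the quantitative alternative ``empty support, or $\bq$ close to $\overline V$ with a controlled length--of--overlap factor'', which forces one to track the non--nesting and non--flatness hypotheses \He\ and \Hf\ through the full loop and insertion structure of an arbitrary two--legged 1PI graph.
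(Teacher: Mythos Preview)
Your overall scheme --- induction on depth, Reduction~1, and handling of the
vertex/$c$--fork/$r$--fork cases --- matches the paper's proof, and your
region~(a) argument (where $\bk_{\ell_3}$ is at distance $\ge M^{\et j_{\ell_3}}$
from all singular points) is correct: it is exactly Reductions~3--4 and the Final
Step of Lemma~\ref{le:secondorder}, which never use the numerator $\nabla e$.
Your frequency observation ($|q_0|\le\const\,M^{j_{\ell_3}}$ on the support) is
also correct and is used in the paper.

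The gap is in region~(b). Your claim that ``$\bk_{\ell_3}$ within
$M^{\et j_{\ell_3}}$ of a singular point, together with the on--shell constraints
$\chi_{j_{\ell_1}},\chi_{j_{\ell_2}},\dots$ and non--nesting, forces $\bq$ within
$\const\,M^{\et j_{\ell_3}/n_0}$ of $\overline V$'' is false. The constraints
$\chi_{j_{\ell_i}}(\bk_{\ell_i})\ne0$ only pin the other loop momenta near the
Fermi \emph{surface}, not near singular points; non--nesting (Proposition
\ref{le:length2}) controls lengths of overlap, it does not turn ``near $\cF$''
into ``near a singular point''. For a generic $\bq$ one can easily have
$\bk_{\ell_3}$ sitting at a singular point while all $\bk_{\ell_i}$ are far
from every singular point --- your region~(b) is then nonempty and your
proposed mechanism gives nothing.

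The correct dichotomy, which the paper uses, is over \emph{all} loop momenta
$\bk_1,\dots,\bk_n$ threading $\ell_3$ together with $\bk_{\ell_3}$ itself.
If any one of these is at distance $\ge M^{\et j}$ from every singular point,
rename lines so that this momentum plays the role of the $\bk$ carrying the
$\Xi_j$ cutoff (choose $\ell_1$ to be the line generating it), and rerun the
length--of--overlap volume improvement of Lemma~\ref{le:secondorder}; this is the
paper's Reduction~2 for the present lemma. If instead every one of
$\bk_1,\dots,\bk_n,\bk_{\ell_3}$ lies within $M^{\et j}$ of some singular point
$\tilde\bq_1,\dots,\tilde\bq_n,\tilde\bq_{n+1}$, then the linear relation
$\bk_{\ell_3}=\pm\bq\pm\bk_1\pm\cdots\pm\bk_n$ and the triangle inequality
give
$\big|\pm\bq\pm\tilde\bq_1\pm\cdots\pm\tilde\bq_n-\tilde\bq_{n+1}\big|
\le (n+1)M^{\et j}$,
which for $\bq$ at fixed positive distance from $\overline V$ is impossible
once the root scale $j$ is negative enough. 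No recursion through insertion
subgraphs and no appeal to \He, \Hf\ is needed at this step --- it is pure
linear algebra. Replace your region~(a)/(b) split by this larger dichotomy
and the proof goes through.
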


\begin{proof}
The proof is similar to that of Lemma \ref{le:spatderiv}.
Introduce scales in the standard way and denote the root scale $j$. 
Choose a spanning tree $T$ in the standard way. View $c$-- and $r$--forks as
vertices. The external momentum is always routed through the spanning tree so 
the derivative may only act on vertices and on lines of the spanning tree.
The cases in which the derivative acts on an interaction vertex or $c$--fork
are trivial. If the derivative acts on an $r$--fork, the effect on the
bound \cite[\Ref{FS1-eq:rforkA}]{SFS1}, namely a factor of 
$M^{-j_{\pi(f)}}$, is the same as the effect on the bound 
\cite[\Ref{FS1-eq:propbnd}]{SFS1} when the derivative acts on
a propagator attached to the $r$--fork. So suppose that the derivative acts 
on a line $\ell_3$ of the spanning tree that has scale $j_3$.
We know from Lemma \ref{le:overlap} in the last section that there exist 
two different lines $\ell_1$ and $\ell_2$, not in $T$ such that $\ell_3$ lies 
on the loops associated to $\ell_1$ and $\ell_2$. We also know, from Lemma
 \ref{le:overlapScale}, that the scales $j'$ of all loop momenta running 
through $\ell_3$, including the scales $j_1$ and $j_2$ of the two lines chosen, 
obey $j'\le j_3$.

\bigskip\noindent{\bf  Reduction 1:} 
It suffices to consider $j\le j_1, j_2\le j_3\le (1-\tilde\et)j$.
For the remaining terms, we simply bound
\begin{eqnarray}
\big|\sfrac{d\hfill}{dq_0} C_{j_3}(\pm q\pm\hbox{internal momenta})\big|
\le\const M^{-2j_3}\le\const M^{-j_3}M^{-(1-\tilde\et)j}
\nonumber
\end{eqnarray}
After one sums over all scales except the root scale $j$,
one ends up with 
$$
\const |j|^n M^{j}M^{-(1-\tilde\et)j}
=\const |j|^n M^{\tilde\et j}
$$
which is still summable.

\bigskip\noindent{\bf  Reduction 2:} 
Denote by $k$, $p$ and $\pm k\pm p + q'$ the momenta flowing in the
lines $\ell_1$, $\ell_2$ and $\ell_3$, respectively. Here $q'$ is plus
or minus the external momentum $q$ possibly plus or minus some some other loop 
momenta. In this reduction we prove that it suffices to consider 
$(\bk,\bp)$ with $|\bk-\tilde\bk|\le M^{\et j}$ 
and $|\bp-\tilde\bp|\le M^{\et j}$ and 
$|\pm \bk\pm \bp\pm \bq'-\tilde \bq|\le M^{\et j}$ for some singular points 
$\tilde\bk$, $\tilde\bp$ and $\tilde \bq$. 

Suppose that at least one of $\bk$, $\bp$, $\pm \bk\pm \bp\pm \bq'$ is 
farther than $M^{\et j}$ from all singular points. We can make a change
of variables (just for the purposes of computing the volume of the domain
of integration) such that $\bk$ is at least a distance $M^{\et j}$ from
all singular points. After the change of variables, the indices $j_1$, $j_2$ 
and $j_3$ are no longer ordered, but all are still between $j$ and 
$(1-\tilde\et)j$. Let $\Xi_j(\bk)$ be the characteristic function of the set 
$$
\set{\bk\in\bR^2}{|\bk-\tilde \bk|\ge M^{\et j}\hbox{ for all singular points
} \tilde \bk}
$$
We claim that
\begin{eqnarray}
\vol\set{(\bk,\bp)
&\in&\bR^4}{
 \chi_{j_1}(\bk)\Xi_{j}(\bk)\chi_{j_2}(\bp)\chi_{j_3}(\pm\bk\pm\bp\pm\bq')\ne 0}
\nonumber\\
     &\le& 
C|j_2|M^{j_1-\et j}M^{j_2}M^{\veps j-\et j}
\nonumber
\end{eqnarray}
This would constitute a volume improvement of $M^{(\veps-2\et)j}$
and would provide summability if $2\et<\veps$.
By the usual compactness arguments, it suffices to show that, for each 
fixed $\tilde\bk$, $\tilde\bp$ and $\tilde\bq$
in $\bR^2$, there are (possibly $\tilde\bk$, $\tilde\bp$, $\tilde\bq$ dependent,
but $j,j_i$ independent)  constants $c$ and $C$ such that
$$
\int_{|\bk-\tilde \bk|\le c}\hskip-22pt d^2\bk\hskip5pt
\int_{|\bp-\tilde\bp|\le c}  \hskip-20 pt d^2\bp\ 
    \chi_{j_1}(\bk)\Xi_{j}(\bk)\chi_{j_2}(\bp)\chi_{j_3}(\pm\bk\pm\bp\pm\bq')
     \le C|j_2|M^{j_1-\et j}M^{j_2}M^{\veps j-\et j}
$$
for all $\bq'$ obeying $|\bq'-\tilde \bq|\le c$. Furthermore, if 
$\tilde \bk$ or $\tilde \bp$ or $\pm\tilde \bk\pm\tilde\bp\pm\tilde \bq'$ does 
not lie on $\cF$, we can choose $c$ sufficiently small that the integral 
vanishes whenever 
$|\bq'-\tilde\bq|\le c$ and $|j_1|,|j_2|,|j_3|$ are large enough. So it 
suffices to require that $\tilde \bk$, $\tilde \bp$ and 
$\pm\tilde \bk\pm\tilde\bp\pm\tilde \bq'$ all lie on $\cF$.

If $\tilde\bk$ is not a singular point, make a change of variable to
$\rho=e(\bk)$ and an ``angular'' variable $\th$. If $\tilde \bk$ is a singular
point, the condition $\Xi_j(\bk)\ne 0$ forces $|\bk-\tilde\bk|\ge M^{\et j}$. 
We can make a change of variables such that 
$e\big(\bk(\rho,\th)\big)=\rho\th$ 
and either $|\th|\ge\const M^{\et j}$ or $|\rho|\ge\const M^{\et j}$.
Possibly exchanging the roles of $\rho$ and $\th$, we may, without loss 
of generality assume the former. In all of these cases, the 
condition $\chi_{j_1}(\bk)\ne 0$ forces $|\rho|\le \const M^{j_1-\et j}$. 
Thus
\begin{eqnarray}
&&\hskip-0.3in\int_{|\bk-\tilde \bk|\le c}\hskip-20pt d^2\bk\hskip5pt
  \int_{|\bp-\tilde\bp|\le c}  \hskip-20 pt d^2\bp\ 
    \chi_{j_1}(\bk)\Xi_{j}(\bk)\chi_{j_2}(\bp)\chi_{j_3}(\pm\bk\pm\bp\pm\bq')
\nonumber\\
&&\hskip0.2in\le
\const \int_{|\th|\le 1\atop |\rho|\le {\rm const}  M^{j_1-\et j}}
              \hskip-20pt d\rho d\th\hskip5pt
\int_{|\bp-\tilde\bp|\le c}  \hskip-20 pt d^2\bp\ 
    \chi_{j_2}(\bp)\chi_{j_3}(\pm\bk(\rho,\th)\pm\bp\pm\bq')
\nonumber\\
&&\hskip0.2in\le
\const \int_{|\th|\le 1\atop |\rho|\le {\rm const} M^{j_1-\et j}}
            \hskip-20pt d\rho d\th\hskip5pt
\int_{|\bp-\tilde\bp|\le c}  \hskip-20 pt d^2\bp\ 
    \chi_{j'}(\pm\bk(0,\th)\pm\bp\pm\bq')\  \chi_{j_2}(\bp)
\nonumber\\
&&\hskip0.2in\le
\const M^{j_1-\et j}\int_{|\th|\le 1 }\hskip-10pt
d\th\hskip5pt
\int_{|\bp-\tilde\bp|\le c}  \hskip-20 pt d^2\bp\ 
    \chi_{j'}(\pm\bk(0,\th)\pm\bp\pm\bq')\  \chi_{j_2}(\bp)
\nonumber
\end{eqnarray}
where $M^{j'}=M^{j_3}+\const M^{j_1-\et j}\le \const M^{(1-\et-\tilde\et)j}$.
Thus it suffices to prove that
$$
\int_{|\bp-\tilde\bp|\le c} d^2\bp\ \int_{|\th|\le 1 }d\th\ 
    \chi_{j'}(\pm\bk(0,\th)\pm\bp\pm\bq')\  \chi_{j_2}(\bp) 
\le C|j_2| M^{j_2}M^{\veps j-\et j}
$$
for all $\bq'$ obeying $|\bq'-\tilde \bq|\le c$.

We again apply Proposition \ref{le:length2}
with $j=j'$ and $\de= M^{(1+\veps)j_2/2}$. If we denote by $\tilde
\chi(\bp)$ the characteristic function of the set of $\bp$'s with
$$
\mu\Big(\set{-1\le \th\le 1}
            {\big|e\big(\pm\bp\pm\bq'\pm\bk(0,\th)\big)\big|\le  M^{j'}}\Big)
\ge c_1\big(\sfrac{ M^{j'}}{M^{(1+\veps)j_2/2}}\big)^{1/n_0}
$$ 
where $c_1$ is the supremum of $\sfrac{d\th}{ds}$ ($s$ is arc length),
then $\tilde \chi(\bp)$ vanishes except on a set of measure $DM^{(1+\veps)j_2}$
and
\begin{eqnarray}
&&\int d^2\bp\ \int_{|\th|\le 1 }d\th\ 
    \chi_{j'}(\pm\bp\pm\bq'\pm\bk(0,\th))\  \chi_{j_2}(\bp) 
\nonumber\\
&&\hskip0.5in\le \int d^2\bp\ \tilde \chi(\bp)\chi_{j_2}(\bp)
    \int_{|\th|\le 1 }d\th\ \chi_{j'}(\pm\bp\pm\bq'\pm\bk(0,\th))
\nonumber\\
&&\hskip1in+\int d^2\bp\ \big(1-\tilde \chi(\bp)\big)\chi_{j_2}(\bp)
    \int_{|\th|\le 1 }d\th\ 
    \chi_{j'}(\pm\bp\pm\bq'\pm\bk(0,\th))
\nonumber\\
&&\hskip0.5in\le 2\int d^2\bp\ \tilde \chi(\bp)
    +\const\int d^2\bp\ \chi_{j_2}(\bp)
    \big(\sfrac{ M^{j'}}{M^{(1+\veps)j_2/2}}\big)^{1/n_0}
\nonumber\\
&&\hskip0.5in\le\const M^{(1+\veps)j_2}
    +\const |j_2|M^{j_2}
    \big(\sfrac{ M^{j'}}{M^{(1+\veps)j_2/2}}\big)^{1/n_0}
\nonumber\\
&&\hskip0.5in\le\const M^{j_2}M^{\veps(1-\tilde\et) j}
    +\const |j_2|M^{j_2}
    \big( M^{(1-\et-\tilde\et)j-{1+\veps\over 2}j}\big)^{1/n_0}
\nonumber\\
&&\hskip0.5in\le\const M^{j_2}M^{\veps(1-\tilde\et) j}
    +\const |j_2|M^{j_2}M^{{1\over n_0}({1\over 3}-\et)j}
\nonumber\\
&&\hskip0.5in\le \const|j_2| M^{j_2}M^{\veps j-\et j}
\nonumber
\end{eqnarray}
provided $\tilde\et\le\et\le\sfrac{1}{12}$ and $\veps\le\min\big\{\sfrac{1}{6},\sfrac{1}{3n_0}\big\}$.

\medskip\noindent{\bf  End stage of proof}
The momentum flowing through the differentiated line is of the form
$\pm q\pm k_1\pm\ldots\pm k_n$ where $q$ is the external momentum and each
of the $k_i$'s is a loop momentum of a line not in the spanning tree
whose scale is no closer to zero than the scale of the differentiated line.
If $q\notin B$, then either $q_0$ is nonzero or the infimum of 
$\big|\pm \bq\pm\tilde\bk_1\pm\ldots\pm \tilde\bk_n-\tilde\bk_{n+1}\big|$, 
with the $\tilde \bk_i$'s running over all singular points, is nonzero. So 
there exists a $j_0$ such
that when the root scale obeys $j\le j_0$, either the zero component
of one of $k_1$, $\ldots$, $k_n$, $\pm q\pm k_1\pm\ldots\pm k_n$ has magnitude 
larger than $\const M^j$, in which case the corresponding covariance vanishes,
or the distance of one of $\bk_1$, $\ldots$, 
$\bk_n$, $\pm \bq\pm \bk_1\pm\ldots\pm \bk_n$ to the nearest singular point
is at least $M^{\et j}$, in which case we can apply the argument of
reduction 2. (If it is one of $\bk_1$, $\ldots$, $\bk_n$ whose distance
to the nearest singular point is at least $M^{\et j}$, we may choose as
the $\ell_1$ of Lemma \ref{le:overlap} the line initiating that loop
momentum.)
\end{proof}

\section{Singularities} 
In this section, we do a two--loop calculation for a typical case to show 
that the frequency derivative of the self--energy is indeed divergent 
in typical situations, and to calculate the second spatial derivative at 
the singular points. 

By the Morse lemma, there are coordinates $(x,y)$ such that in a neighbourhood
of the Van Hove singularity, the dispersion relation becomes 
$$
e(\bk ) = \tilde e (x,y) = x\; y .
$$
Here we consider the case of a Van Hove singularity at $k=0$ with 
$e(\bk) = \sfrac{k_1k_2}{(2\pi)^2}$. (In particular the Van Hove singularity 
is on the Fermi surface for $\mu=0$.) The nonlinearities induced by the changes 
of variables are absent in this example, and moreover, the curvature is zero 
on the Fermi surface. We rescale to $x=\sfrac{k_1}{2\pi}$, 
$y=\sfrac{k_2}{2\pi}$ and, for definiteness,  take the integration 
region for each variable to be $[-1,1]$. 

For this case, we determine the asymptotics of derivatives of the 
two--loop contribution to the self--energy as a function of $q_0$ 
for small $q_0$. We find that again, the gradient of the self--energy 
is bounded (in fact, the correction is zero in that case) but that 
the $q_0$--derivative is indeed divergent.  Power counting by standard
scales suggests that this derivative diverges at zero temperature 
like $\abs{\log q_0}^3$. However, there is a cancellation of the leading
singularity which is not seen when taking absolute values, so that 
the true behaviour is only $(\log q_0)^2$. We then also calculate 
the asymptotics of the second spatial derivative and find that it is
of the same order as the first frequency derivative. 
Finally, we do the calculation for the one--loop contributions to 
the four--point function, to compare the coefficients of different
divergences in perturbation theory. 

The physical significance of these results will be discussed in Section \ref{discussion}.

\subsection{Preparations} 
We restrict to a local potential. Since we have only considered 
short--range interactions, the potential is smooth in momentum space. 
For differentiability  questions, a momentum dependence could only 
make a difference if the potential vanished at the singular points or other
special points, so the restriction to a local potential, which is 
constant in momentum space, is not a loss of generality. 

There are two graphs contributing, one of vertex correction type and 
the other of vacuum polarization type (the graphs with 
insertion of first--order self--energy graphs have been eliminated 
by renormalization through a shift in $\mu$).
$$
\figplace{vertcor}{0 in}{0.0in}\qquad
\figplace{vacpol}{0 in}{0.0in}
$$
The latter gets a $(-1)$ from the 
fermion loop and a $2$ from the spin sum. Thus the total contribution is 
\begin{eqnarray}
\Sigma_2 (q_0,q)
&=&
-\sfrac{1}{\be^2} \sum_{\om_1,\om_2}
\sfrac{1}{(2\pi)^4}\int d^2 k_1 d^2 k_2 d^2 k_3\ \de (q -k_1 + k_2 - k_3) \;
\nonumber\\
&&\hskip1in
C(\om_1, e(k_1))\;C(\om_2, e(k_2))\;C(q_0-\om_1+\om_2, e(k_3))
\nonumber
\end{eqnarray}
with 
$$
C(\om,E) = \frac{1}{\I \om -E} .
$$
Call $E_i = e(k_i)$ and 
$$
\langle F\rangle_q = \int \dd \rh_q (\ve{k}) \; F(\ve{k})
$$
where $\ve{k} =(k_1,k_2,k_3)$ and
$\dd \rh_q (\ve{k}) = \sfrac{1}{(2\pi)^4}\int d^2k_1 d^2k_2 d^2k_3\ \de (q -k_1 + k_2 - k_3) $. The frequency summation gives
\begin{equation}\label{eq:Si2}
\Sigma_2 (q_0,q)
=
-
\left\langle
\frac{%
(\Feb (E_1) + \Beb(E_2 - E_3)) \; (\Feb(E_2) - \Feb(E_3))
}{\I q_0 + (E_2 -E_3 -E_1)}
\right\rangle_q
\end{equation}
where $\Feb (E) = (1+\E^{\be E})^{-1}$ is the Fermi function
and $\Beb (E) = (\E^{\be E}-1)^{-1}$ is the Bose function.
Since 
$$
\Beb(E_2 - E_3) \; \big[ \Feb(E_2) - \Feb(E_3) \big]
=\Feb(E_2)\big[\Feb(E_3)-1\big]
$$
the numerator is bounded in magnitude by 2.
The denominator is bounded below in magnitude by $|q_0|$, so the integrand 
is $C^\infty$ in $q_0$, for all $q_0\ne 0$ and all $\be \ge 0$. Because the 
integral is over a compact region of momenta, the same holds for the integral.  
The limit $\be \to \infty$ exists and has the same property. 
The structure of the denominator may suggest that  for it to almost
vanish requires only the combination $E_2 -E_3 -E_1$ to get small, 
but a closer look reveals that each $E_i$ has to be small: at $T=0$, the 
Fermi functions become step functions, $\Feb (E ) \to \Th (-E)$ and 
$\Beb(E) \to - \Th (-E)$, and then the factors in the numerator
imply that all summands in $E_2 -E_3 -E_1$ really have the same sign, 
i.e.\ $\abs{E_2 -E_3 -E_1} = |E_2| + |E_3| + |E_1|$,
so all $|E_i|$ must be small for the energy difference to be small. 
At finite $\be$, when the $E_i$'s are ``of the wrong sign'' the exponential 
suppression provided by the numerator compensates for the 
$|q_0|\ge\sfrac{\pi}{\be}$ in the denominator.

Because 
\begin{equation}\label{q0Exp}
\I q_0 - e(q) - \la^2 \Si_2(q_0,q)
=
\I q_0 (1 + \I \la^2 \del_0 \Si_2 (0,q)) - (e(q) + \la^2 \Si_2 (0,q)) + \ldots 
\end{equation}
and because $\del_0 \Si_2 (0,q) $ is purely imaginary, 
we are interested in 
\begin{equation}\label{Zq}
Z_2 (q) 
= 
\left(
1 - \la^2 \mbox{ Im }\del_0 \Si_2 (0,q)
\right)^{-1} \;.
\end{equation}
The value $q_0=0$ is not an allowed fermionic Matsubara frequency at $T>0$. 
We shall keep $q_0 \ne 0$ in the calculations. In discussions about temperature
dependence, we shall replace $|q_0|$ by $\pi/\be$. 

The second order contribution to $\mbox{ Im }\del_0 \Si$ is 
\begin{eqnarray}\label{Imderiv}
&&\hskip-1in\mbox{ Im } \del_0\Sigma_2 (q_0,q)
=
\Big\langle
\Phi_{q_0} (E_2 -E_3 -E_1)
\nonumber\\
&&\hskip0.5in
[\Feb (E_1) + \Beb(E_2 - E_3)] \; [\Feb(E_2) - \Feb(E_3)]
\Big\rangle_q
\end{eqnarray}
with 
\begin{equation}\label{eq:ReId}
\Phi_{q_0} (\veps) 
=
\mbox{ Re } \frac{1}{(\I q_0 + \veps)^2} 
=
\frac{\veps^2 - q_0^2}{(\veps^2+q_0^2)^2}
.
\end{equation}

\subsection{$q_0$--derivative}
The above expression \Ref{Imderiv} for $\mbox{ Im }\del_0 \Si_2 (q_0,q) $ 
shows that it is an even function of $q_0$ and that $q_0$ serves as a regulator 
so that even at zero temperature, a singularity can develop only in the limit
$q_0 \to 0$. We therefore calculate it as a function of $q_0$ at zero 
temperature. In the following, we take $q_0 > 0$. 

As $\beta \to \infty$, the Fermi function $\Feb (E) \to \Th (-E)$ and for 
$E \ne 0$, the Bose function $\Beb (E) \to - \Th (-E)$. 
In this limit, the integrand vanishes except when $E_2 E_3 <0$ and
$E_1 (E_2-E_3) < 0$. This reduces to the two cases
\begin{equation}\label{restri}
E_1>0, E_2<0, E_3 >0 \mbox{ and } E_1 < 0, E_2 > 0 , E_3 < 0.
\end{equation}
In both cases, the combination of $\Feb$,$\Beb$'s in the numerator is $-1$. 
Thus 
\begin{eqnarray}
&&\hskip-25pt\mbox{ Im }\del_0 \Si_2 (q_0,q) 
=
-
\Big\langle
\Phi_{q_0} (E_2-E_3-E_1) \; 
\Big\lbrack
\True{E_1> 0 \wedge E_2 < 0 \wedge E_3 > 0} 
\nonumber\\
&&\hskip2.5in+
\True{E_1< 0 \wedge E_2 > 0 \wedge E_3 < 0} 
\Big\rbrack
\Big\rangle_q
\nonumber
\end{eqnarray}

Recall that we are considering the case of a Van Hove point at 
$k=(x,y)=0$ for the dispersion relation $e(k) = xy$ with 
$x = k_1/2\pi $ and $y=k_2/2\pi$, so that $\frac{\dd^2 k}{(2\pi)^2} = \dd x\; \dd y$. 
Moreover we set 
$q=0$, and use the delta function to fix $E_1$ in terms of $E_2$ and 
$E_3$, so that 
$$
E_2 = xy \qquad E_3 = x'y' \qquad E_1= (x-x')(y-y') 
$$
and 
$$
\veps = E_2-E_3-E_1 = xy' - 2x'y' + x'y .
$$
Recall also that, for definiteness, we are taking the integration region 
for each variable to be $[-1,1]$. The sign conditions 
on the $E_i$ impose conditions on the variables $x, \ldots$, which are
listed in Appendix \ref{bloodysigns}, and which we use to transform the 
integration region to $[0,1]^4$. 

At $T=0$, only $n \in \cM = \{ 1,2,3,4,9,10,11,12\}$ from the table in
Appendix \ref{bloodysigns} contribute. Thus 
$$
\mbox{ Im } \del_0 \Sigma_2 (q_0,0)
=
-
\int_{[0,1]^4} \dd \rx \dd \ry \dd \rx' \dd \ry'  
\sum_{n\in \cM}
\True{\rho_n}
\Phi_{q_0} (\veps_n (\rx,\ry,\rx',\ry'))
$$
with $\veps_n$ and $\rho_n$ given in the table in Appendix \ref{bloodysigns}.
By \Ref{eq:n+8}, \Ref{eq:1-3} and \Ref{eq:1-2}, 
and because $\Phi_{q_0}$ is even in $\veps$, 
all eight terms give the same contribution. 
Hence 
\begin{equation}\label{eq:SiI}
\mbox{ Im } \del_0 \Sigma_2 (q_0,0)
= 
- 
2 I(q_0)
\end{equation}
where
\begin{equation}\label{Idef}
I(q_0) = 4 \int_0^1 \dd y \int_0^1 \dd y' \int_0^1 \dd x' \int_0^{x'} \dd x\;
\Phi_{q_0} \left((2x'-x) y' + y x'\right) .
\end{equation}

\begin{lemma}\label{lowerbound}
Let $I$ be defined as in \Ref{Idef} and $0 < q_0 < \frac12$. Then
$$
\mbox{ Im } \del_0 \Sigma_2 (q_0,0)
 = 
- 4 \log 2 \; \abs{\log{q_0}}^2 - 2 C_1 \; \abs{\log{q_0}}
+ B(q_0)
$$
where $B$ is a bounded function of $q_0$, and 
$$
C_1
=
2 (\log 2)^2
-
4
\int_0^1 \frac{\dd x}{x}\; 
\log \left(
\frac{1+2x}{1+x}
\right)
$$ 
\end{lemma}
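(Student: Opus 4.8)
The strategy is to use that $\Phi_{q_0}$ is a total $\varepsilon$--derivative, by \Ref{eq:ReId}, in order to integrate out explicitly the two variables on which the argument of $\Phi_{q_0}$ in \Ref{Idef} depends linearly, reducing the four--fold integral to a one--dimensional one whose $q_0\to 0^+$ asymptotics can then be extracted term by term. In \Ref{Idef} the argument is $\varepsilon=(2x'-x)\,y'+x'\,y$, linear in $y$ with positive coefficient $x'$ and in $y'$ with positive coefficient $2x'-x$ on the integration region. First I would change variables $(x,x')\mapsto(a,b)=(x',\,2x'-x)$; the Jacobian is $1$, the region $0\le x\le x'\le1$ becomes $\{0\le a\le1,\ a\le b\le 2a\}$, and $\varepsilon=ay+by'$.

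Writing $\psi(\varepsilon)=\varepsilon/(\varepsilon^2+q_0^2)$, \Ref{eq:ReId} gives $\Phi_{q_0}=-\psi'$ and $\psi=\frac{1}{2}\big(\log(\varepsilon^2+q_0^2)\big)'$, so the $y$--integration produces the difference $\frac{1}{a}\big(\psi(by')-\psi(by'+a)\big)$ and the subsequent $y'$--integration a combination of logarithms. Since the integral is evaluated exactly rather than bounded in absolute value, the cancellation referred to at the beginning of this section --- which lowers the power of $\log q_0$ from the naively expected $3$ to $2$ --- is automatic here. The outcome is
$$
I(q_0)=2\int_0^1\frac{da}{a}\int_a^{2a}\frac{db}{b}\;\log\frac{(a^2+q_0^2)(b^2+q_0^2)}{q_0^2\,\big((a+b)^2+q_0^2\big)} .
$$
Substituting $b=ta$ with $t\in[1,2]$, so $db/b=dt/t$, and then $a=q_0u$, which moves all $q_0$--dependence into the upper limit, gives
$$
I(q_0)=2\int_1^2\frac{dt}{t}\int_0^{1/q_0}\frac{du}{u}\;g_t(u),\qquad g_t(u)=\log\frac{(u^2+1)(t^2u^2+1)}{(1+t)^2u^2+1} .
$$

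Since $g_t(u)=O(u^2)$ at $u=0$ and $g_t(u)=2\log u+2\log\frac{t}{1+t}+O(u^{-2})$ at $u=\infty$, uniformly for $t\in[1,2]$, the inner integral is finite for every $q_0>0$. To obtain its asymptotics I would use the elementary expansion
$$
\int_0^X\frac{\log(c^2u^2+1)}{u}\,du=\log^2(cX)+c_\star+O\big((cX)^{-2}\big)\qquad(cX\ge1),
$$
with $c_\star$ an absolute constant, which follows by substituting $v=cu$ and writing $\log(v^2+1)=2\log v+\log(1+v^{-2})$ for $v\ge1$. Applying this to the three logarithms in $g_t$ with $c\in\{1,t,1+t\}\subset[1,3]$ and $X=1/q_0>2$, the $\log^2$ contributions telescope to a single $(\log q_0)^2$ and one obtains
$$
\int_0^{1/q_0}\frac{g_t(u)}{u}\,du=(\log q_0)^2+2\log\frac{t}{1+t}\,|\log q_0|+D(t)+O(q_0^2),
$$
with $D(t)=\log^2 t-\log^2(1+t)+c_\star$ bounded on $[1,2]$ and the remainder uniform in $t$.

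Finally, integrating against $dt/t$ over $[1,2]$ and using $\int_1^2 dt/t=\log2$ together with $\int_1^2\frac{\log(t/(1+t))}{t}\,dt=\frac{1}{2}(\log2)^2-\int_1^2\frac{\log(1+t)}{t}\,dt$, I would get $I(q_0)=2\log2\,(\log q_0)^2+\big(2(\log2)^2-4\int_1^2\frac{\log(1+t)}{t}\,dt\big)|\log q_0|+(\mbox{bounded})$, and then \Ref{eq:SiI}, i.e.\ $\mbox{ Im }\del_0\Sigma_2(q_0,0)=-2I(q_0)$, yields the claimed expansion with $B$ bounded on $(0,\frac12)$ and $C_1=2(\log2)^2-4\int_1^2\frac{\log(1+t)}{t}\,dt$. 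The last step is to identify this with the formula in the statement, i.e.\ to check that $\int_1^2\frac{\log(1+t)}{t}\,dt=\int_0^1\frac{\log((1+2x)/(1+x))}{x}\,dx$; this follows from $\int_0^1\frac{\log(1+2x)}{x}\,dx=\int_0^2\frac{\log(1+s)}{s}\,ds$ (substitute $s=2x$) by subtracting $\int_0^1\frac{\log(1+x)}{x}\,dx$. The main work, and the only place where real care is needed, lies in the bookkeeping of the subleading $|\log q_0|$ coefficient: one must carry the finite constants $c_\star$ correctly through the telescoping, keep the $O(q_0^2)$ remainders uniform in $t$ so that they only contribute to the bounded function $B$, and verify the integral identity above.
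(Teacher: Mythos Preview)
Your proof is correct and, in fact, cleaner than the paper's. Both arguments exploit the same key identity $\Phi_{q_0}=-\psi'$ with $\psi(\varepsilon)=\varepsilon/(\varepsilon^2+q_0^2)$, but you execute it in a different order and thereby avoid a preliminary step.

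The paper integrates first over $x$. Because $\partial\varepsilon/\partial x=-y'$, this produces a factor $1/y'$, which forces an initial estimate showing that the region $y'\le q_0$ contributes only $O(1)$ before the main computation can proceed. After the $x'$ integration the paper is left with $\tilde I_1-\tilde I_2$, expressed through the function $J_{[a,b]}=\int_{a^2}^{b^2}\frac{\log(1+t)}{t}\,dt$, and extracts the asymptotics by expanding $J$ and handling separately the pieces $J_{[\eta,2\eta]}$ and $J_{[q_0^{-1}+\eta,\,q_0^{-1}+2\eta]}$.

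You instead change variables $(x,x')\mapsto(a,b)=(x',2x'-x)$ and integrate over $y,y'$ first; since the coefficients $a$ and $b$ are strictly positive on the interior of the domain, no preliminary excision is needed and you reach the compact closed form
$$
I(q_0)=2\int_0^1\frac{da}{a}\int_a^{2a}\frac{db}{b}\,\log\frac{(a^2+q_0^2)(b^2+q_0^2)}{q_0^2\big((a+b)^2+q_0^2\big)}
$$
in one step. The scaling $b=ta$, $a=q_0u$ then reduces everything to the single elementary expansion $\int_0^X u^{-1}\log(c^2u^2+1)\,du=\log^2(cX)+c_\star+O((cX)^{-2})$; the three $\log^2$ terms telescope to $(\log q_0)^2$, and the constant $c_\star$ cancels to a single copy, exactly as you note. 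The identification of $\int_1^2\frac{\log(1+t)}{t}\,dt$ with the $\int_0^1$ form in the statement via $s=2x$ is correct. The uniformity in $t\in[1,2]$ of the $O(q_0^2)$ remainder is immediate since $c\in\{1,t,1+t\}\subset[1,3]$.

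In short: same underlying mechanism, but your choice of integration order and the $(a,b)$ variables give a more direct path, bypassing the $y'\le q_0$ cutoff and the separate treatment of the two $J$-terms that the paper carries out.
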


\begin{proof}
By \Ref{eq:SiI}, it suffices to show that 
$$
I(q_0) = 
2 \log 2 \; \abs{\log{q_0}}^2 + C_1 \; \abs{\log{q_0}}
+ \tilde B(q_0)
$$
with bounded $\tilde B_0$. 
We rewrite the argument $\veps$ of $\Phi_{q_0}$ as 
$\veps = x' (2y'+y) - xy'$. We first bound the contribution of $y' \le q_0$. 
To do this, bound 
$$
\abs{\Phi_{q_0} (\veps ) } \le \frac{1}{q_0^2 + \veps^2}
$$
Use that this bound is decreasing in $\veps$ and that $\veps \ge x'y$. 
Therefore 
\begin{eqnarray}
&&
4 \int_0^1 \dd y \int_0^{q_0} \dd y' \int_0^1 \dd x' \int_0^{x'} \dd x\;
\Phi_{q_0} \left((2x'-x) y' + y x'\right) 
\nonumber\\
&&\hskip0.5in\le
4 \int_0^1 \dd y \int_0^{q_0} \dd y' \int_0^1 \dd x' \int_0^{x'} \dd x\;
\frac{1}{q_0^2 + (x'y)^2}
\nonumber\\
&&\hskip0.5in=
4 q_0 \int_0^1 \dd x' \int_0^1 \dd y \;
\frac{x'}{q_0^2 + (x'y)^2}
\nonumber\\
&&\hskip0.5in=
4 q_0 \int_0^1 \dd x' \; 
\frac{1}{q_0} \arctan \frac{x'}{q_0} 
\nonumber\\
&&\hskip0.5in\le
2 \pi .
\nonumber
\end{eqnarray}
Thus it suffices to calculate the asymptotic behaviour of 
$$
\tilde I (q_0) 
=
4 \int_0^1 \dd y \int_{q_0}^1 \dd y' \int_0^1 \dd x' \int_0^{x'} \dd x\;
\Phi_{q_0} \left(x' (2y'+y) - xy'\right)  
$$
for small $q_0 >0$. 
By \Ref{eq:ReId}
\begin{equation}\label{ibpbasis}
\Phi_{q_0} (\veps) 
=
- \frac{\del }{\del \veps} \; \frac{\veps}{q_0^2+\veps^2} .
\end{equation}
Because $\sfrac{\partial\veps}{\partial x}=-y'$, 
$$
\int_0^{x'} \dd x\; \Phi_{q_0} (\veps (x)) 
=
\frac{1}{y'}
\left\lbrack
\frac{\veps}{\veps^2+q_0^2}
\right\rbrack_{(2y'+y)x'}^{(y'+y)x'} .
$$
The integral over $x'$ can now be done, using 
$$
\int_0^1 \dd x'\; 
\frac{\al x'}{(\al x')^2+q_0^2}
=
\frac{1}{2\al} \log \left(1+ \frac{\al^2}{q_0^2}\right) .
$$
Thus 
$$
\tilde I (q_0) = \tilde I_1 - \tilde I_2
$$
with 
\begin{eqnarray}
\tilde I_j
&=&
4 \int_{q_0}^1 \frac{\dd y'}{y'} \int_0^1 \dd y  \ \  
\frac{1}{2(j y'+y)} \log \left(1+ \frac{(j y'+y)^2}{q_0^2}\right) 
\nonumber\\
&=&
2 \int_1^{\frac{1}{q_0}} \frac{\dd \eta}{\eta} 
\int_{j \eta}^{\frac{1}{q_0}+j\eta} 
\frac{\dd \xi  }{\xi} \;\log \left(1+ \xi^2 \right) 
\nonumber
\end{eqnarray}
where we have made the change of variables $\xi=\sfrac{j y'+y}{q_0}$,
$d\xi=\sfrac{1}{q_0}dy$ followed by the change of variables 
$\et=\sfrac{y'}{q_0}$, $d\et=\sfrac{1}{q_0}dy'$.
Thus 
\begin{eqnarray*}
\tilde I (q_0) 
&=&
\int_1^{{q_0^{-1}}} \frac{\dd \eta}{\eta} \;
\left(
J_{[\eta,q_0^{-1}+\eta]}  -
J_{[2\eta,q_0^{-1}+2\eta]} 
\right)\\
&=&
\int_1^{{q_0^{-1}}} \frac{\dd \eta}{\eta} \;
\left(
J_{[\eta,2\eta]}  -
J_{[q_0^{-1}+\eta,q_0^{-1}+2\eta]} 
\right)
\end{eqnarray*}
with 
$$
J_A = 2 \int_A \frac{\dd \xi  }{\xi} \;\log \left(1+ \xi^2 \right) \ge 0 
\mbox{ for } A \subset [0,\infty)
$$
We have
\begin{eqnarray}\label{eq:Jdef}
J_{[a,b]} 
=
2 \int_a^b \frac{\dd \xi  }{\xi} \;\log \left(1+ \xi^2 \right)
=
\int_{a^2}^{b^2} 
\frac{\dd t}{t} \; \log (1+t) .
\end{eqnarray}
In our case both integration intervals for $J$ are subsets of $[1,\infty)$, 
so we can expand the logarithm, to get
\begin{eqnarray}\label{eq:Jcalc}
J_{[a,b]} 
&=&
\int_{a^2}^{b^2} 
\frac{\dd t}{t} \; \Big[\log (t) + \log \Big(1+ \frac{1}{t}\Big)\Big]
\nonumber\\
&=&
\frac12 
\left\lbrack
(\log b^2)^2 - (\log a^2)^2
\right\rbrack
-
\sum_{n \ge 1}
\frac{(-1)^n}{n^2}\;
(a^{-2n} - b^{-2n})
\nonumber\\
&=&
2 
\left\lbrack
(\log b)^2 - (\log a)^2
\right\rbrack
-
\sum_{n \ge 1}
\frac{(-1)^n}{n^2}\;
(a^{-2n} - b^{-2n})
\nonumber\\
&=&
2 
\log (ab) \; 
\log \frac{b}{a}
-
\sum_{n \ge 1}
\frac{(-1)^n}{n^2}\;
(a^{-2n} - b^{-2n})
\end{eqnarray}
The final integral over $\eta$ gives, for the first term,
\begin{eqnarray}
\int_1^{{q_0^{-1}}} \frac{\dd \eta}{\eta} \;
J_{[\eta,2\eta]} 
&=& 
2 \log 2 \; |\log q_0|^2 + 2 (\log 2)^2 \; |\log q_0|
\nonumber\\
&&-
\frac12 \sum_{n \ge 1} \frac{(-1)^n}{n^3} (1-4^{-n}) (1-q_0^{2n})
\nonumber
\end{eqnarray}
with the last term analytic, and hence bounded, for $|q_0| < 1$. 
The second term gives two contributions:
\begin{eqnarray}
\int_1^{{q_0^{-1}}} \frac{\dd \eta}{\eta} \;
J_{[q_0^{-1}+\eta,q_0^{-1}+2\eta]} 
&=&
W-R
\nonumber
\end{eqnarray}
with (here $M=q_0^{-1}$)
$$
R= 
\int_1^M \frac{\dd \eta}{\eta}\; X(\eta),
\qquad
X(\eta) =
\sum_{n \ge 1}
\frac{(-1)^n}{n^2}\;
\left[
(M+\eta)^{-2n}
-
(M+2\eta)^{-2n}
\right]
$$
and 
$$
W = 
2\int_1^M \frac{\dd \eta}{\eta}\;
\left[
\left(\log (M+\al \eta)\right)^2
\right]_{\al=1}^{\al=2} .
$$
For $|q_0| \le 1$ the series for $X$ converges absolutely and gives
$$
X= \sum_{n \ge 1}
\frac{(-1)^n}{n^2}\;
q_0^{2n}
\left[
(1+\eta q_0)^{-2n}
-
(1+2\eta q_0)^{-2n}
\right]
$$
so $|X| \le \sum_{n \ge 1} q_0^{2n}/n^2 \le 2 q_0^2$,
hence for $|q_0| \le 1$,
$$
R \le
2q_0^2 |\ln q_0| \le q_0 .
$$
In $W$, scaling back to $x=q_0\eta$ gives 
\begin{eqnarray}
W
&=&
2\int_{q_0}^1 \frac{\dd x}{x}\;
\left[
\left(\log q_0^{-1} + \log (1+ \al x) \right)^2
\right]_{\al=1}^{\al=2} 
\nonumber\\
&=&
2\int_{q_0}^1 \frac{\dd x}{x}\;
\left[
2 \log q_0^{-1}\, \log (1+ \al x) + \left(\log (1+ \al x) \right)^2
\right]_{\al=1}^{\al=2} .
\nonumber
\end{eqnarray}
Because $\al x \ge 0$, $0 \le \log (1+ \alpha x) \le \alpha x$, 
so $W$ is bounded by 
a constant times $\log q_0^{-1}$. The integral of the same function 
from $0$ to $q_0$ is of order $q_0 |\log q_0|$, therefore
$$
W = 4 \log q_0^{-1} \; 
\int_0^1 \frac{\dd x}{x}\; 
\log \left(
\frac{1+2x}{1+x}
\right)
+ \tilde B(q_0)
$$
where $\tilde B$ is a bounded function. 
\end{proof}

\subsection{First spatial derivatives}
In our model case, we have $E_2=xy$ and $E_3=x'y'$. Moreover we take 
$\sfrac{\bq}{2\pi}=(\xi,\eta)$ so that, 
fixing $k_1$ by momentum conservation, 
$E_1 = (\xi + x-x')(\eta+ y-y')$. It is clear from \Ref{eq:Si2} that 
the spatial derivatives also act on the Fermi function $f_\be(E_1)$, so we 
get two terms. The derivative of the Fermi function $\Feb (x)$ is minus 
the approximate delta $\Deb (x) = \beta/(4\cosh^2(\beta x/2))$ so that
\begin{equation}\label{fderiv}
-\del_i \Sigma_2 (q_0,q) 
=
\left\langle
(\del_i e)
\left[
-\Deb(E_1) \frac{f_2-f_3}{\I q_0 + \veps} 
+
\frac{(f_1+b_{23})(f_2-f_3)}{(\I q_0 +\veps)^{2}}
\right]
\right\rangle_q
\end{equation}
where $\del_i e$ has argument $q+k_2-k_3$, $f_i=\Feb(E_i)$, 
$b_{23}=\Beb(E_2-E_3)$ and, as before, 
$
\veps = E_2 - E_3 - E_1 .
$
At $q=0$ 
$$
-\frac{\del}{\del \xi} \Sigma_2 (q_0,0)
=
S_1(\beta,q_0) + S_2 (\beta,q_0)
$$
with 
\begin{eqnarray}
S_1(\beta,q_0) 
&=&
\int_{[-1,1]^4} \dd x \dd y \dd x' \dd y' \;\; (y-y')\;
\frac{\Feb(xy)-\Feb(x'y')}{\I q_0 + \veps}
\nonumber\\
&&
\hskip1.7in(-\Deb ) ((x-x')(y-y') ) 
\nonumber\\
S_2(\beta,q_0) 
&=&
\int_{[-1,1]^4} \dd x \dd y \dd x' \dd y' \;\;
 (y-y') \;
\frac{\Feb(xy)-\Feb(x'y') }{(\I q_0 + \veps)^2}
\nonumber\\
&&
\hskip1.7in[\Feb((x-x')(y-y'))+\Beb(xy-x'y')]
\nonumber
\end{eqnarray}
Here $\epsilon = xy-x'y'-(x-x')(y-y')
= xy'+x'y-2x'y'$.

Now consider $S_1$ and apply the reflection $(x,y,x',y')\to(-x,-y,-x',-y')$ 
to the integration variables. 
The domain of integration is invariant. 
The only noninvariant factor is $y-y'$
and it changes its sign. Thus $S_1$ vanishes.
By the same argument, $S_2$ vanishes as well.
By symmetry, the same holds for the $\eta$--derivative. Thus 
$$
\nabla \Si_2 (q_0,0) = 0 .
$$

\subsection{The second spatial derivatives}

Let $\sfrac{\bq}{2\pi}=(\xi,\eta)$. 
The real part of $\Si_2 (\xi,\eta)$ is a correction to $e(\xi,\eta) = \xi\eta$.
Since $\del_\xi\del_\et e (\xi,\eta) = 1$, we calculate  the correction that
$\Si_2$ gives to that quantity. 

\begin{lemma}\label{xietderiv}
For small $q_0 \ne 0$
\begin{equation}\label{eq:soerprais}
 \lim\limits_{\beta \to \infty}
\mbox{\rm  Re }
\frac{\del^2}{\del\xi\del\eta} \Si_2 (q_0,0)
 = 
(2+4 \log 2) \; (\log |q_0|)^2 + O (|\log|q_0||)
\end{equation}
\end{lemma}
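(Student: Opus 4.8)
\noindent
The plan is to differentiate the first--derivative formula (\ref{fderiv}) once more, with respect to $\eta$, pass to the zero--temperature limit, and reduce the outcome to two explicit scalar integrals, one of which is exactly $\lim_{\beta\to\infty}\mbox{Im}\,\del_0\Si_2(q_0,0)$ as evaluated in Lemma \ref{lowerbound}. After fixing $k_1$ by momentum conservation, the integrand of $\Si_2$ in (\ref{eq:Si2}) depends on the external momentum only through $E_1=e(\bq+\bk_2-\bk_3)=(\xi+x-x')(\eta+y-y')$, both via $\Feb(E_1)$ and via $\epsilon=E_2-E_3-E_1$. Writing that integrand as $g(E_1)$ and using $\del_\xi E_1|_0=y-y'$, $\del_\eta E_1|_0=x-x'$, $\del_\xi\del_\eta E_1=1$, the chain rule gives
\[
\del_\xi\del_\eta\Si_2(q_0,0)=-\big\langle\,(x-x')(y-y')\,g''(E_1)+g'(E_1)\,\big\rangle_0,
\]
the bracket now being evaluated at $E_1=(x-x')(y-y')$ and $\epsilon=xy'+x'y-2x'y'$.

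Differentiating $g(u)=(\Feb(u)+\Beb(E_2-E_3))(\Feb(E_2)-\Feb(E_3))/(\I q_0+E_2-E_3-u)$ twice, the bracket above becomes a combination (with numerical coefficients) of $\Deb(E_1)/(\I q_0+\epsilon)$, $E_1\Deb'(E_1)/(\I q_0+\epsilon)$, $E_1\Deb(E_1)/(\I q_0+\epsilon)^2$, $(\Feb(E_1)+\Beb(E_2-E_3))/(\I q_0+\epsilon)^2$ and $E_1(\Feb(E_1)+\Beb(E_2-E_3))/(\I q_0+\epsilon)^3$ (recall $\Deb=-\Feb'$), times the factor $\Feb(E_2)-\Feb(E_3)$. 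Since $q_0\neq0$ bounds $|\I q_0+\epsilon|\geq|q_0|$ away from zero on the compact domain, the limit $\beta\to\infty$ may be interchanged with the integral once the $\Deb$--terms are grouped: the piece with $E_1\Deb(E_1)$ drops out because $t\,\Deb(t)\to 0$, and the pieces with $\Deb(E_1)/(\I q_0+\epsilon)$ and $E_1\Deb'(E_1)/(\I q_0+\epsilon)$ cancel because $\Deb(t)+t\,\Deb'(t)=\frac{d}{dt}(t\,\Deb(t))\to 0$ (both as distributions). Using $\Feb(E)\to\Th(-E)$ and $\Beb(E)\to-\Th(-E)$ at $T=0$, the surviving terms give
\[
\lim_{\beta\to\infty}\del_\xi\del_\eta\Si_2(q_0,0)=-\Big\langle\,\frac{N}{(\I q_0+\epsilon)^2}+\frac{2E_1\,N}{(\I q_0+\epsilon)^3}\,\Big\rangle_0,
\]
where $N=(\Th(-E_1)-\Th(E_3-E_2))(\Th(-E_2)-\Th(-E_3))$ is exactly the zero--temperature numerator already analysed in the proof of Lemma \ref{lowerbound}; in particular $N=-\bbbone(E_1>0,E_2<0,E_3>0)-\bbbone(E_1<0,E_2>0,E_3<0)$, and on each of the two supporting regions $\epsilon$ and $E_1$ have opposite signs.

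Taking real parts and using $\mbox{Re}\,(\I q_0+\epsilon)^{-2}=\Phi_{q_0}(\epsilon)$ and $\mbox{Re}\,(\I q_0+\epsilon)^{-3}=-\frac12\Phi_{q_0}'(\epsilon)$ (derivative in the argument), both read off from (\ref{eq:ReId}), I obtain
\[
\lim_{\beta\to\infty}\mbox{Re}\,\del_\xi\del_\eta\Si_2(q_0,0)=-\big\langle N\,\Phi_{q_0}(\epsilon)\big\rangle_0+\big\langle E_1\,N\,\Phi_{q_0}'(\epsilon)\big\rangle_0.
\]
By (\ref{Imderiv}) the first bracket equals $\lim_{\beta\to\infty}\mbox{Im}\,\del_0\Si_2(q_0,0)$, which by Lemma \ref{lowerbound} is $-4\log2\,(\log|q_0|)^2-2C_1|\log q_0|+B(q_0)$; thus the first term contributes $4\log2\,(\log|q_0|)^2+O(|\log|q_0||)$, and the lemma reduces to the claim
\[
\big\langle E_1\,N\,\Phi_{q_0}'(\epsilon)\big\rangle_0=2\,(\log|q_0|)^2+O(|\log|q_0||).
\]

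For this last integral I would argue as in the proof of Lemma \ref{lowerbound}. Using the symmetries of the table in Appendix \ref{bloodysigns} (the relations (\ref{eq:n+8}), (\ref{eq:1-3}), (\ref{eq:1-2}), now together with the oddness of $\Phi_{q_0}'$ and the compensating sign of $E_1$), one rewrites $\langle E_1 N\Phi_{q_0}'(\epsilon)\rangle_0$ as a fixed numerical multiple of an iterated integral over $[0,1]^4$ in which, in the innermost variable $x\in[0,x']$, the argument of $\Phi_{q_0}'$ is affine while $E_1$ has degree one and vanishes at $x=x'$. Since $\Phi_{q_0}'(\epsilon)=(\del_x\epsilon)^{-1}\del_x\Phi_{q_0}(\epsilon)$ with $\del_x\epsilon$ independent of $x$, an integration by parts in $x$ produces a surface term that, because $E_1|_{x=x'}=0$, reduces to a single boundary contribution $\propto x'\Phi_{q_0}(x'(2y'+y))$, plus an integral of $\Phi_{q_0}$ of the type occurring in (\ref{Idef}), weighted by $(y+y')/y'$. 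Carrying out the remaining $x'$--, $y$-- and $y'$--integrations exactly as in Lemma \ref{lowerbound} --- treating separately the regions where $y$ and $y'$ are of order $|q_0|$ or smaller, and using the logarithm expansions (\ref{eq:Jdef})--(\ref{eq:Jcalc}) --- then yields the coefficient of $(\log|q_0|)^2$, and, as there, the would--be $(\log|q_0|)^3$ contributions cancel. Adding the two terms gives (\ref{eq:soerprais}). The main obstacle is precisely this last step: following the boundary contribution and the weighted $\Phi_{q_0}$--integral through the $x',y,y'$ integrations and verifying that, after the cancellation of the $(\log|q_0|)^3$ terms, the surviving $(\log|q_0|)^2$ coefficient is exactly $2$ --- the same delicacy as in Lemma \ref{lowerbound}. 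Everything upstream --- the chain--rule expansion, the distributional identities eliminating the $\Deb$--terms, and the identification of the first bracket with $\mbox{Im}\,\del_0\Si_2$ --- is routine.
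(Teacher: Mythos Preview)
Your decomposition and the identification of the first piece $-\langle N\Phi_{q_0}(\epsilon)\rangle_0$ with $-\lim_{\beta\to\infty}\mbox{Im}\,\del_0\Si_2(q_0,0)$ are exactly what the paper does; the split into $\zeta_1,\zeta_2,\zeta_3$ and then $\zeta_1=-(\zeta_{11}+\zeta_{12})$ is the same.

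There is, however, a genuine gap in your elimination of the $\Deb$--terms. Your claim that $\zeta_3$ (the $E_1\Deb(E_1)$ piece) vanishes is fine: $E_1\Deb(E_1)=u\,\delta_1(u)$ with $u=\beta E_1$ is uniformly bounded and tends to zero pointwise, so dominated convergence applies directly. But for $\zeta_2$ the combination $G_\beta(E_1)=-\big(E_1\Deb'(E_1)+\Deb(E_1)\big)=\beta\,G_1(\beta E_1)$ has size $O(\beta)$; the distributional statement $G_\beta\to 0$ on $\bR$ is true, yet to apply it inside the four--dimensional integral you must change variables so that $E_1$ becomes a coordinate, and this introduces the Jacobian $|y-y'|^{-1}$. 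The remaining factor $(f_2-f_3)/(\I q_0+\epsilon)$ is merely bounded, so there is no $\beta$--uniform domination. The paper fixes this by first exploiting the symmetry of $E_1$ under $(x,y)\leftrightarrow(x',y')$ to replace $(f_2-f_3)/(\I q_0+\epsilon)$ by a difference that contains an explicit factor $\tilde\epsilon-\epsilon=2[x(y'-y)+(x'-x)y']$, which cancels the Jacobian (one piece via $y'-y$, the other via $x'-x=-E_1/(y-y')$). Only after this symmetrization does the limiting argument go through.

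For the remaining piece $\langle E_1 N\,\Phi_{q_0}'(\epsilon)\rangle_0$, your integration by parts in $x$ is a legitimate alternative to the paper's route. The paper instead integrates the inner $u=x'-x$ integral in closed form (since $\int_0^{x'}u\,(\I q_0+b+y'u)^{-3}\,du$ is elementary), then performs the $x'$--integral, does a partial--fraction decomposition in the resulting rational function of $(y,y')$, and finally uses a total--derivative identity in $\eta=y'$ (see the paper's formula \Ref{eieiei}) to cancel an apparent $1/\eta^2$ singularity and read off the coefficient $2$. Your IBP would land in essentially the same place --- a boundary term plus a weighted $\Phi_{q_0}$--integral --- and you correctly flag the subsequent bookkeeping as the real work; there is no shortcut around that explicit extraction of the $(\log|q_0|)^2$ coefficient.
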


\begin{proof}
By symmetry it suffices to consider $q_0 > 0$.  
In general, the second order spatial derivatives are 
$$
- \del_i \del_j \Sigma_2 (q_0,q) 
=
Z_1^{(i,j)} + Z_2^{(i,j)} + Z_3^{(i,j)}
$$
where 
\begin{eqnarray*}
Z_1^{(i,j)}
&=&
\left\langle
\left(
(\del_i \del_j e)
+ 
2 \frac{(\del_i e) (\del_j e)}{\I q_0 + \veps}
\right)
\frac{(f_1+ b_{23})(f_2-f_3)}{(\I q_0 +\veps)^{2}}
\right\rangle_q
\\
Z_2^{(i,j)}
&=&
\left\langle
\Big((\del_i e) (\del_j e) \; (-\delta'_\beta) (E_1)
+(\del_i \del_j e)(-\Deb) (E_1)\Big)\; 
\frac{f_2-f_3}{\I q_0 + \veps}
\right\rangle_q
\\
Z_3^{(i,j)}
&=&
\left\langle
2 (\del_i e) (\del_j e)(-\Deb) (E_1)\; 
\frac{f_2-f_3}{(\I q_0 + \veps)^2}
\right\rangle_q
\end{eqnarray*}
Here $e=e(k_2-k_3 + q) = E_1$ and $\veps = E_2-E_3 -E_1$. Denote 
$$
\ze(q_0) 
=
- \frac{\del^2}{\del\xi\del\eta} \Si_2 (q_0,0,0)
$$
In the $xy$ case, the two integration momenta are denoted by 
$(x,y)$ and $(x',y')$, and $E_2 =xy$, $E_3=x'y'$, and 
$E_1 = e( k_2-k_3 + q)$ $=(\xi + x-x')(\eta+ y-y')$.
Since  $\del_\xi e\mid_{\xi=\et=0} = (y-y')$, $\del_\eta e\mid_{\xi=\et=0} = (x-x')$, 
and $\del^2_{\xi\eta} e = 1$,
$$
\ze = \ze_1+\ze_2+\ze_3
$$
with  
\begin{eqnarray*}
\ze_1
&=&
\int _{[-1,1]^4} \dd^4 X \; 
\left(
1
+ 
2 \sfrac{(x-x')(y-y')}{\I q_0 + \veps(x,y,x',y')}
\right)
\sfrac{(f_1+b_{23})(f_2-f_3)}{(\I q_0 +\veps(x,y,x',y'))^{2}}
\\
\ze_2
&=&
\int _{[-1,1]^4} \dd^4 X \; 
\left(
-E_1 \Deb' (E_1) - \Deb(E_1)
\right)
\; 
\sfrac{f_2-f_3}{\I q_0 + \veps(x,y,x',y')}
\\
\ze_3
&=&
\int _{[-1,1]^4} \dd^4 X \; 
2(-E_1) \Deb(E_1) 
\; 
\sfrac{f_2-f_3}{(\I q_0 + \veps(x,y,x',y'))^2}
\end{eqnarray*}
Here $X=(x,y,x',y')$ and $\dd^4 X = \dd x\; \dd y \; \dd x' \; \dd y'$,
and 
$$
\veps(x,y,x',y')
=
xy -x'y' -E_1
=
xy -x'y' - (x-x') (y-y') .
$$
Using the decomposition given in Appendix \ref{bloodysigns}, 
$$
\ze_1 \gtoas{\beta \to \infty}
- 2 \int_{[0,1]^4} \dd \rx \dd \ry \dd \rx' \dd \ry' \;
\sum_{j=1}^4
\left(
1 + 2 \frac{F_j}{\I q_0 + \veps_j}
\right)
\frac{1}{(\I q_0 + \veps_j)^2}
\True{\rho_j}
$$
(The limit can be taken under the integral. Decomposing according to the signs
of $x, x' \ldots$, only $j \in \{ 1,2,3,4,9,10,11,12\}$ can contribute. Using
the symmetry \Ref{eq:n+8}, one obtains the above. The minus sign arises from 
the combination of Fermi functions $f_\beta$, as discussed around \Ref{restri}.) 
We write $\zeta_1 = - ( \zeta_{11} + \zeta_{12} )$. 
The term involving the $1$ is
$$
\ze_{11}
=
2 \int_{[0,1]^4} \dd \rx \dd \ry \dd \rx' \dd \ry' \;
\sum_{i=1}^4
\frac{1}{(\I q_0 + \veps_i)^2}
\True{\rho_i}
$$
By \Ref{eq:1-3} and \Ref{eq:1-2},  
$$
\ze_{11}
=
8 \int_{[0,1]^4} \dd \rx \dd \ry \dd \rx' \dd \ry' \;
\mbox{ Re }
\frac{1}{(\I q_0 + \veps_1)^2}
\True{\rho_1} .
$$
This is the same (up to a sign) as \Ref{eq:SiI}, \Ref{Idef}, hence 
$\ze_{11} = 4 \log 2 \; (\log|q_0|)^2  + O(\abs{\log |q_0|}) $. 

The term $\zeta_{12}$ involving the $F_j$ gives another contribution, 
$$
\zeta_{12}
=
2 \int_{[0,1]^4} \dd \rx \dd \ry \dd \rx' \dd \ry' \;
\sum_{j=1}^4
 2 \frac{F_j}{(\I q_0 + \veps_j)^3}
\True{\rho_j}
$$
The summand for $j=2$ gives the same integral as that for $j=1$
because the integrand is related by the exchange $\rx \leftrightarrow \ry$
and $\rx' \leftrightarrow \ry'$. Ditto for $j=4$ and $j=3$. 
Thus 
$$
\zeta_{12}
=
8 \int_{[0,1]^4} \dd \rx \dd \ry \dd \rx' \dd \ry' \;
\sum_{j=1,3}
\frac{F_j}{(\I q_0 + \veps_j)^3}
\True{\rho_j}
$$
Because $F_1=-F_3$ and $\veps_1=-\veps_3$, and because 
$\rho_1 \Leftrightarrow \rho_3$,  
$$
\sum_{j=1,3} \frac{F_j}{(\I q_0 + \veps_j)^3}
\True{\rho_j}
=
\True{\rho_1} \; F_1 \;
2 \mbox{ Re } 
\frac{1}{(\I q_0 + \veps_1)^3} .
$$
Thus 
$$
\zeta_{12}
=
16 \mbox{ Re }
\int_0^1 \dd \ry \int_0^1\dd \ry' \int_0^1\dd \rx' \int_0^{\rx'}\dd \rx \;
\frac{(\rx-\rx')(\ry+\ry')}{(\I q_0 + \rx'(\ry+\ry') + \ry'(\rx'-\rx))^3}
$$
Let $b=\rx'(\ry+\ry')$, change integration variables from $\rx$ to
$u = \rx'-\rx \in [0,\rx']$, and use
$$
\int_0^{\rx'} \dd u \; 
\frac{u}{(\I q_0 + b + \ry' u)^3}
=
\frac{(\rx')^2}{2(\I q_0 + b + \rx'\ry')^2 (\I q_0 + b)}
$$ 
Renaming to $z=\rx'$, we have
$$
\zeta_{12}
=
- 8 
\int_0^1 \dd \ry \int_0^1\dd \ry' \int_0^1\dd z
\mbox{ Re }
\frac{(\ry+\ry') \; z^2}{(\I q_0 + z (\ry+\ry'))\; (\I q_0 + z (\ry+ 2\ry'))^2}
$$
The bound 
$$
\abs{
\frac{(\ry+\ry') \; z^2}{(\I q_0 + z (\ry+\ry'))\; (\I q_0 + z (\ry+ 2\ry'))^2}
}
\le
\frac{z}{q_0^2 + z^2 (\ry+ 2\ry')^2}
$$
for the integrand implies that
$$
\abs{\zeta_{12}} 
\le 
4
\int_0^1 \dd \ry \int_0^1\dd \ry' 
\frac{1}{(\ry+2\ry')^2} \;
\ln \left(
1+ \frac{(\ry+2\ry')^2}{q_0^2}
\right)
$$
which shows that $\abs{\zeta_{12}} \le \const [\log |q_0|]^2 $ 
for small $|q_0|$. 

We now calculate the coefficient of the $\log^2$. 
First rewrite 
$$
\frac{(\ry+\ry') \; z^2}{(\I q_0 + z (\ry+\ry'))\; (\I q_0 + z (\ry+ 2\ry'))^2}
=
\frac{1}{(\ry+2\ry')^2} \;
\frac{z^2}{(z+\I A)(z+ \I B)^2}
$$
with 
$$
A = \frac{q_0}{\ry+\ry'}
\qquad
B = \frac{q_0}{\ry+2\ry'} .
$$
Partial fractions give
$$
\frac{z^2}{(z+\I A)(z+\I B)^2}
=
\frac{\alpha}{z+\I A}
+
\frac{\beta}{z+ \I B}
+
\I \; 
\frac{b_2}{(z+ \I B )^2}
$$
with 
\begin{eqnarray*}
\alpha
&=&
\frac{A^2}{(B-A)^2}
=
\left(
\frac{\ry+2\ry'}{\ry'}
\right)^2
\\
\beta
&=&
\frac{B(B-2A)}{(B-A)^2}
=
1 
-
\left(
\frac{\ry+2\ry'}{\ry'}
\right)^2
\\
b_2 
&=&
\frac{B^2}{A-B} 
=
B \; \frac{\ry+\ry'}{\ry'}
.
\end{eqnarray*}
$\alpha, \beta$ and $b_2$ are real, so we need
$$
\mbox{ Re } 
\int_0^1 \frac{\dd z}{z + \I A}
=
\frac12 \ln (1 + A^{-2})
$$
and
$$
\mbox{ Re } 
\I b_2
\int_0^1 \frac{\dd z}{(z + \I B)^2}
=
\frac{b_2}{B(1+B^2)} .
$$
With this, we have
\begin{eqnarray*}
\zeta_{12} 
&=&
- 4
\int_0^1 \dd \ry \int_0^1\dd \ry'\  
\frac{1}{(\ry+2\ry')^2} \;
\nonumber\\
&&\qquad\qquad\qquad
\left\lbrack
\alpha \ln (1 + A^{-2})
+
\beta \ln (1 + B^{-2})
+ 
\frac{2 b_2}{B(1+B^2)} 
\right\rbrack.
\end{eqnarray*}
Collecting terms and renaming $\ry'=\eta$ gives
\begin{eqnarray*}
&&\zeta_{12} 
=
- 4
\int_0^1 \dd \ry \int_0^1\dd \eta \ \ 
\Big[
\frac{1}{\eta^2} \; \ln \frac{q_0^2 + (\ry+\eta)^2}{q_0^2 + (\ry+2\eta)^2}
\nonumber\\
&&\hskip1.7in+
2 \frac{\ry+\eta}{\eta} \; \frac{1}{q_0^2 + (\ry+2\eta)^2}
\nonumber\\
&&\hskip1.7in+
\frac{1}{(\ry+2\eta)^2} \;
\ln 
\left(
1+ \frac{(\ry+2\eta)^2}{q_0^2}
\right)
\Big]
\end{eqnarray*}
Although the first summands individually contain nonintegrable singularities 
at $\eta =0$, these singularities cancel in the sum. 
A convenient way to implement this is to use that 
\begin{eqnarray}\label{eieiei}
\frac{1}{\eta^2} \; \ln \frac{q_0^2 + (\ry+\eta)^2}{q_0^2 + (\ry+2\eta)^2}
&+&
2 \frac{\ry}{\eta} \; \frac{1}{q_0^2 + (\ry+2\eta)^2}
\nonumber\\
&=&
\frac{\del}{\del \eta}
\left\lbrack
- \frac{1}{\eta}\;
\ln \frac{q_0^2 + (\ry+\eta)^2}{q_0^2 + (\ry+2\eta)^2}
\right\rbrack
\nonumber\\
&&+
\frac{1}{\eta}
\left\lbrack
\frac{2(\ry+\eta)}{q_0^2+(\ry+\eta)^2}
-
\frac{2(\ry+4\eta)}{q_0^2+(\ry+2\eta)^2}
\right\rbrack .
\end{eqnarray}
Moreover
\begin{eqnarray*}
&&
\frac{1}{(\ry+2\eta)^2} \;
\ln 
\left(
1+ \frac{(\ry+2\eta)^2}{q_0^2}
\right)
\nonumber\\
&&\hskip0.5in=
\frac12 \;
\frac{\del}{\del \eta}
\left\lbrack
- \frac{1}{y+2\eta}\;
\ln
\left(
1+ \frac{(\ry+2\eta)^2}{q_0^2}
\right)
\right\rbrack
+ \frac{2}{q_0^2 + (\ry+2\eta)^2}
\end{eqnarray*}
Thus 
\begin{eqnarray*}
&&\hskip-13pt\zeta_{12} 
=
- 4
\int_0^1 \dd \ry 
\left\lbrack
- \frac{1}{\eta}\;
\ln \frac{q_0^2 + (\ry+\eta)^2}{q_0^2 + (\ry+2\eta)^2}
- \frac12
\frac{1}{y+2\eta}\;
\ln
\left(
1+ \frac{(\ry+2\eta)^2}{q_0^2}
\right)
\right\rbrack_{\eta=0}^{\eta=1}
\nonumber\\
&&\hskip0.25in-
4
\int_0^1 \dd \ry 
\int_0^1\dd \eta 
\;
\frac{2}{\eta}
\left\lbrack
\frac{\ry+\eta}{q_0^2+(\ry+\eta)^2}
-
\frac{\ry+2\eta}{q_0^2+(\ry+2\eta)^2}
\right\rbrack
\end{eqnarray*}
Evaluation at $\eta =1$ gives one bounded term and one term of order
$\log |q_0|$. The terms at $\eta =0$ give
$$
- 4
\int_0^1 \dd \ry 
\left\lbrack
\frac{2\ry}{q_0^2+\ry^2}
-
\frac{1}{2\ry} \ln \left(1+\frac{\ry^2}{q_0^2}\right)
\right\rbrack
$$
The first summand integrates to $O(\ln |q_0|)$. 
By \Ref{eq:Jcalc}, the second term gives 
$$
2 (\ln |q_0|)^2 + \mbox{ bounded }
$$
The remaining integrals are 
\begin{eqnarray*}
&&- 8
\int_0^1 \dd \ry 
\int_0^1\dd \eta 
\;
\left\lbrack
\frac{1}{q_0^2+(\ry+\eta)^2}
-
\frac{2}{q_0^2+(\ry+2\eta)^2}
\right\rbrack 
\\
&&= 
8 \int_0^1 \dd \ry 
\int_1^2\dd \eta 
\;
\frac{1}{q_0^2+(\ry+\eta)^2}
\le 8
\end{eqnarray*}
since the integrand is bounded by $1$, and 
$$
-4
\int_0^1 \dd \ry 
\int_0^1\dd \eta 
\ 
\frac{2}{\eta}
\left\lbrack
\frac{\ry}{q_0^2+(\ry+\eta)^2}
-
\frac{\ry}{q_0^2+(\ry+2\eta)^2}
\right\rbrack
$$
The integrand is bounded by a constant times $\frac{1}{q_0^2 + \ry^2+\eta^2}$
so the integral is of order $\log |q_0|$. 
Thus $\zeta_{12} = 2 (\log |q_0|)^2 + $ bounded terms, so that
$\zeta_1 = - (\zeta_{11} + \zeta_{12}) = - (4 \log 2 + 2)  (\log |q_0|)^2 +$
less singular terms. 

We now show that $\ze_2$ and $\ze_3$ vanish 
in the limit $\be \to \infty$. 
Consider $\ze_2$ first. Let $G_\be(E_1) = -E_1 \Deb' (E_1) - \Deb(E_1)$.
(Note that $G_\be(E)= \be G_1 (\be E)$.)
At $\xi = \et =0$,  $E_1 = (x-x') (y-y')$
is invariant under the exchange $(x,y) \leftrightarrow (x',y')$, so
\begin{eqnarray}
\ze_2
=
\int _{[-1,1]^4}\hskip-10pt \dd^4 X \; 
G_\be(E_1) \Feb (xy)
\left(
\frac{1}{\I q_0 + \veps(x,y,x',y')}
-
\frac{1}{\I q_0 + \veps(x',y',x,y)}
\right)
\nonumber
\end{eqnarray}
Because $\veps (x,y,x',y') = xy'+x'y - 2 x'y'$, 
$$
 \veps(x',y',x,y) - \veps(x,y,x',y')
=
2 (x'y' - xy) 
=
2 [ x (y'-y) + (x'-x) y']
$$
Hence
$$
\ze_2
=
2
\int _{[-1,1]^4} \dd^4 X \; 
G_\be(E_1)\; \Feb (xy)
\;
\frac{ x (y'-y) + (x'-x) y'}%
{(\I q_0 + \veps(x,y,x',y')) \; (\I q_0 + \veps(x',y',x,y)) }
$$
Consider first 
$$
T_1
=
2
\int _{[-1,1]^4} \dd^4 X \; 
G_\be(E_1)\; \Feb (xy)
\;
\frac{x (y'-y)}%
{(\I q_0 + \veps(x,y,x',y')) \; (\I q_0 + \veps(x',y',x,y)) }
$$
Because $[-1,1]^4$ and the integrand are invariant under the reflection 
$R$ which maps $(x,y,x',y')$ to $(-x,-y,-x',-y')$ and 
$$
R\{ (x,y,x',y')\ |\ y \ge y'\} = \{ (x,y,x',y')\ |\  y \le y'\}, 
$$
we can put in a factor $2 \True{y > y'}$. 
Changing variables from $x'$ to $E_1$, so that
$$
x' = x- \frac{E_1}{y-y'}\ ,
$$
gives
\begin{eqnarray}
T_1 
&=&
4
\int_{[-1,1]^3} \dd x \dd y \dd y' \;  \True{ y - y' > 0} 
 (-x)\;\Feb (xy)
\nonumber\\
&&\hskip0.5in
\int_{(x-1)(y-y')}^{(x+1)(y-y')} \dd E_1
\;
\frac{G_\be(E_1) }{(\I q_0 + \veps(x,y,x',y')) \; (\I q_0 + \veps(x',y',x,y)) }
\nonumber
\end{eqnarray}
Change variables one last time, from $E_1 $ to $u=\beta E_1$, to get
\begin{eqnarray}
T_1 
&=&
4\int_{[-1,1]^3} \dd x \dd y \dd y' \; \int_{\bR} \dd u
\nonumber\\
&&\hskip0.5in
 \True{ y - y' > 0} 
\True{\be (x-1)(y-y') \le u \le \be (x+1)(y-y')}
\nonumber\\
&&\hskip0.5in
 (-x) \Feb (xy)
\;
\frac{G_1(u) }{(\I q_0 + \veps(x,y,x',y')) \; (\I q_0 + \veps(x',y',x,y)) }
\nonumber
\end{eqnarray}
where $x' = x- \frac{u}{\be (y-y')}\rightarrow x$ as $\be\rightarrow\infty$. 
The integrand is bounded in magnitude by the $L^1$ function
$$
(x,y,y',u) \mapsto G_1(u) \frac{1}{q_0^2} 
$$
and it converges almost everywhere (namely, for $x\ne  \pm 1$, $xy \ne 0$) to 
$$
 \True{ y - y' > 0}  x  \; \Th (-xy) 
\frac{1}{q_0^2 + x^2 (y-y')^2}
\big[(- u) \de_1' (u)-\de_1(u)\big] .
$$
By dominated convergence, the limit $\be \to \infty$ can be taken 
under  the integral. The limiting range of integration for $u$ is $\bR$. 
By the fundamental theorem of calculus
$$
\int_\bR  \dd u \; G_1(u) 
=\int_\bR \dd u \;\sfrac{d\hfill}{du}\big[-u\de_1(u)\big]= 0,
$$
so $T_1$ vanishes as $\be \to \infty$. 
The calculation of the limit $\be \to \infty$ of 
$$
T_2
=
2
\int _{[-1,1]^4} \dd^4 X \; 
G_\be(E_1)\; \Feb (xy)
\;
\frac{(x'-x) y'}%
{(\I q_0 + \veps(x,y,x',y')) \; (\I q_0 + \veps(x',y',x,y)) }
$$
is similar and gives 0 as well.  Thus $\zeta_2 \to 0$ as $\beta \to \infty$. 
By the same arguments
$$
\ze_3
=
\int _{[-1,1]^4} \dd^4 X \; 
2(-E_1)\Deb (E_1) \; \Feb(xy)
\frac{2\I q_0 (\tilde\veps -\veps) + (\tilde \veps - \veps) (\tilde \veps + \veps)}%
{(\I q_0 + \veps)^2 \; (\I q_0 + \tilde \veps)^2 }
$$
where $\veps = \veps(x,y,x',y')$ and $\tilde \veps = \veps(x',y',x,y)$.
After the same limiting argument as before, the $u$--integral is now
$$
\int_\bR u \de_1 (u)\; \dd u = 0
$$
because the integrand is odd. Thus $\ze_3 \to 0$ as $\be \to \infty$, too.
\end{proof}

\begin{lemma}
The real part of the second derivative of the self--energy
with respect to $\xi$ grows at most logarithmically as $q_0 \to 0$: 
there are constants $A$ and $B$ such that for all $|q_0| < 1$
$$
\abs{
\mbox{ Re }
\frac{\del^2}{\del \xi^2} \Sigma_2  (q_0,0) }
\le
A + B \log \frac{1}{|q_0|}
$$
The same holds by symmetry for the second derivative with respect to $\eta$. 
\end{lemma}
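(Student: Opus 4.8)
The plan is to retrace the proof of Lemma~\ref{xietderiv}, taking advantage of the structural fact that for $e(\xi,\eta)=\xi\eta$ one has $\partial_\xi^2 e\equiv 0$. In the decomposition $-\partial_\xi^2\Sigma_2(q_0,0)=Z_1^{(\xi,\xi)}+Z_2^{(\xi,\xi)}+Z_3^{(\xi,\xi)}$, taken from the proof of Lemma~\ref{xietderiv}, this makes every $\partial_i\partial_j e$ term vanish; in particular the contribution $\zeta_{11}$, which produced the leading $(\log|q_0|)^2$ in the mixed case and came \emph{precisely} from the factor $\partial_\xi\partial_\eta e=1$, is absent here. What is left is $Z_1^{(\xi,\xi)}=2\langle (y-y')^2(f_1+b_{23})(f_2-f_3)(\I q_0+\veps)^{-3}\rangle_q$, together with $Z_2^{(\xi,\xi)}$ and $Z_3^{(\xi,\xi)}$, which carry the factors $(y-y')^2\,\delta_\beta'(E_1)$ and $(y-y')^2\,\delta_\beta(E_1)$ respectively. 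As in Lemma~\ref{xietderiv} we work in the limit $\beta\to\infty$.

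I would first dispose of $Z_2^{(\xi,\xi)}$ and $Z_3^{(\xi,\xi)}$ by showing that their real parts vanish as $\beta\to\infty$. For $Z_2^{(\xi,\xi)}$ one uses the exchange $(x,y)\leftrightarrow(x',y')$ on the difference $f_\beta(xy)-f_\beta(x'y')$ to convert one factor $(y-y')$ into $E_1$, so that the $\beta$ produced by $\delta_\beta'$ is cancelled; then the change of variables $x'\to E_1\to u=\beta E_1$ and dominated convergence, exactly as in the proof of Lemma~\ref{xietderiv}, yield a finite limit. Performing the $x$-integral in that limit, and using that $\mathrm{Re}\,(\I q_0+a)^{-2}$ depends only on $a^2$ and $q_0^2$, the remaining $(y,y')$-integral is odd under a parity reflection of an integration variable, so $\mathrm{Re}\,Z_2^{(\xi,\xi)}\to 0$; $Z_3^{(\xi,\xi)}$ is handled the same way but without the exchange step, since $\delta_\beta$ rather than $\delta_\beta'$ appears and the change of variables produces a $\beta$-free integrand directly. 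Only $Z_1^{(\xi,\xi)}$ then remains.

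For $Z_1^{(\xi,\xi)}$ I would run the octant decomposition of Appendix~\ref{bloodysigns} as for $\zeta_1$ in Lemma~\ref{xietderiv}: in the limit only the eight octants $n\in\cM$ contribute, the symmetries \Ref{eq:n+8}, \Ref{eq:1-3}, \Ref{eq:1-2} collapse them, and one is reduced to $\mathrm{Re}$ of a single integral over $[0,1]^4$ of the shape $\int L^2\,(\I q_0+\veps)^{-3}$, where $L$ is the linear form into which $\partial_\xi e=y-y'$ transforms and $\veps$ the transformed energy denominator --- that is, the $\zeta_{12}$-integrand with the product $(\partial_\xi e)(\partial_\eta e)$ replaced by the square $L^2$. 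Carrying out the two inner integrations exactly as in the derivation leading to \Ref{eieiei} (integrate one variable to split the cubic denominator into a linear times a quadratic factor, then a second variable, then use partial fractions), one is left with a two-dimensional integral; the extra factor of $L$ relative to the $\zeta_{12}$ computation supplies one more power of the ``small'' variable at the origin, which makes that integral converge there, so that only one logarithm rather than two survives. This gives $|\mathrm{Re}\,\partial_\xi^2\Sigma_2(q_0,0)|\le A+B\log\frac1{|q_0|}$, and the claim for $\partial_\eta^2$ follows from the symmetry $\xi\leftrightarrow\eta$, $x\leftrightarrow y$, $x'\leftrightarrow y'$.

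The main obstacle is the bookkeeping in this last step: one must track the transformed numerator $L^2$ through the octant decomposition and the partial-fraction manipulations, and verify carefully that every would-be $(\log|q_0|)^2$ contribution is annihilated by the additional vanishing of $L^2$. Here one must follow the cancellations as closely as in Lemma~\ref{lowerbound}, since the naive absolute-value estimates $|\mathrm{Re}\,(\I q_0+\veps)^{-2}|\le(\veps^2+q_0^2)^{-1}$ and $|\mathrm{Re}\,(\I q_0+\veps)^{-3}|\le(\veps^2+q_0^2)^{-3/2}$ overcount by powers of $q_0^{-1}$. A secondary technical point is justifying, in $Z_2^{(\xi,\xi)}$, the interchange of the limit $\beta\to\infty$ with the integration, where the limits of $\delta_\beta'(E_1)$ and of the Fermi functions interact.
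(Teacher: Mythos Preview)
Your decomposition is right, and so is the observation that $\partial_\xi^2 e=0$ kills the $\zeta_{11}$-type term. But you have misattributed the surviving logarithm: it comes from $Z_2^{(\xi,\xi)}$, not from $Z_1^{(\xi,\xi)}$, and your treatment of both terms is off.

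For $Z_1^{(\xi,\xi)}$ you plan a $\zeta_{12}$-style explicit integration, claiming an ``extra factor of $L$'' tames the second log. The numerator here is $(y-y')^2$, not $(x-x')(y-y')$ with an additional linear factor, so the analogy is wrong. More importantly, no computation is needed: in the octant decomposition the factor $D_n^2$ satisfies $D_1^2=D_3^2$, $D_2^2=D_4^2$, while $\veps_3=-\veps_1$, $\veps_4=-\veps_2$, so pairing $n=1$ with $n=3$ (and $n=2$ with $n=4$) gives $D_1^2\bigl[(\I q_0+\veps_1)^{-3}+(\I q_0-\veps_1)^{-3}\bigr]=2\I\,D_1^2\,\mathrm{Im}\,(\I q_0+\veps_1)^{-3}$. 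Thus $\mathrm{Re}\,Z_1^{(\xi,\xi)}=0$ at $\beta=\infty$, full stop.

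For $Z_2^{(\xi,\xi)}$ your claim that $\mathrm{Re}\,Z_2^{(\xi,\xi)}\to 0$ is false; this term carries the entire $O(\log|q_0|)$. Your proposed mechanism --- exchange symmetry ``converting one factor $(y-y')$ into $E_1$'' followed by dominated convergence --- does not work: $\delta_\beta'(E_1)=\beta^2\delta_1'(\beta E_1)$ carries \emph{two} powers of $\beta$, and the substitution $u=\beta E_1$ absorbs only one, so the integrand is not dominated uniformly in $\beta$. The paper instead integrates by parts in $x$, using $(y-y')\,\delta_\beta'(E_1)=\partial_x\delta_\beta(E_1)$; this trades $\delta_\beta'$ for $\delta_\beta$ and produces a boundary term plus two bulk terms. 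After the exchange-and-reflection rewriting and the limit $\beta\to\infty$ (now legitimate), the boundary term is real and equals $4\int_0^2\ln(1+\eta^2/q_0^2)\,d\eta=O(\log|q_0|)$, one bulk term vanishes, and the other has real part that reduces to a one-dimensional integral of size $O(|\log|q_0||)$. So $Z_2^{(\xi,\xi)}$ is where the single logarithm lives.

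Your treatment of $Z_3^{(\xi,\xi)}$ is essentially right: after the exchange step and Lemma~\ref{1001} the limit is purely imaginary.
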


\begin{proof}
We need to bound 
$$
\del^2_\xi
\left\langle
\frac{(\Feb(E_1) + \Beb(E_2-E_3)) \; (\Feb(E_2) - \Feb(E_3))}%
{\I q_0 + \veps}
\right\rangle
$$
at $\xi=\et=0$ where all $\xi$--dependence is in $E_1 = (\xi + x-x') (\et+y-y')$
and in $\veps = E_2 - E_3 - E_1=xy'+x'y-2x'y'$.  We proceed as for the mixed 
derivative $\del^2/\del \xi \del \eta$, but now some terms are different because
$\del^2_\xi E_1 = 0$. We obtain 
$$
\frac{\del^2 \Sigma_2}{\del\xi^2} (q_0,0,0)
=
\bX_1 + \bX_2 + \bX_3
$$
with
\begin{eqnarray}
\bX_1 
&=&
2
\left\langle
(y-y')^2 \; 
\frac{(\Feb(E_1) + \Beb(E_2-E_3)) \; (\Feb(E_2) - \Feb(E_3))}
{(\I q_0 + \veps)^3}
\right\rangle
\nonumber\\
\bX_2
&=&
\left\langle
\frac{\Deb'(E_1) \; (y-y')^2}{\I q_0 + \veps} \;
(\Feb(E_2) - \Feb(E_3))
\right\rangle
\nonumber\\
\bX_3
&=&
2
\left\langle
\frac{-\Deb(E_1) \; (y-y')^2}{(\I q_0 + \veps)^2} \;
(\Feb(E_2) - \Feb(E_3))
\right\rangle
\nonumber
\end{eqnarray}
We first calculate the zero--temperature limit of $\bX_1$. 
Using the notations of Appendix \ref{bloodysigns}, 
$$
\bX_1^{(0)}
=
\lim\limits_{\beta\to\infty} \bX_1
=
- 4 \int_{[0,1]^4} \dd \rx \dd \rx' \dd \ry \dd \ry' \;
\sum_{n=1}^4
\frac{D_n^2}{(\I q_0 + \veps_n)^3} \;
\True{\rho_n}
$$
We use that $D_1^2=D_3^2$ and $\veps_3 = -\veps_1$,
to combine $n=1$ and $n=3$ in one term, and $n=2$ and $n=4$ 
in another, so that $\bX_1^{(0)} = \bX_{1,1}^{(0)} + \bX_{1,2}^{(0)}$
with
\begin{eqnarray}
\bX_{1,1}^{(0)}
&=&
- 8 \I
\int_0^1 \dd\ry \int_0^1 \dd \ry' \;(\ry'+\ry)^2 \;
\int_0^1 \dd\rx' \int_0^{\rx'} \dd\rx\;
\mbox{Im } \sfrac{1}{(\I q_0 + \rx' \ry +\ry' (2\rx'-\rx))^3}
\nonumber\\
\bX_{1,2}^{(0)}
&=&
- 8 \I
\int_0^1 \dd\ry' \int_0^{\ry'} \dd \ry \;(\ry'-\ry)^2 \;
\int_0^1 \dd\rx' \int_0^1 \dd\rx\;
\mbox{Im } \sfrac{1}{(\I q_0 + \rx\ry'+\rx'(2\ry'-\ry))^3}
\nonumber
\end{eqnarray}
Thus  $\bX_1$ does not contribute to Re $\del^2_\xi \Sigma_2$. 

We now show that $\bX_3$ is imaginary in the limit $\beta \to \infty$, 
because after a rewriting of terms, the difference $\Feb(E_2) - \Feb(E_3)$ 
effectively implies taking the imaginary part. 
Because $(y-y')^2 \Deb(E_1) $ is invariant under $(x,y) \leftrightarrow (x',y')$
(and denoting $\tilde\veps (x,y,x',y') = \veps(x',y',x,y)$)
\begin{eqnarray}
\bX_3 
&=&
-2
\int \dd^4 X \; 
(y-y')^2 \; \Deb (E_1) \; \Feb(xy) \;
\left[
\frac{1}{(\I q_0 + \veps)^2}
-
\frac{1}{(\I q_0 + \tilde\veps)^2}
\right]
\nonumber\\
&=&
-4 \int \dd y \int \dd y' \True{y-y' > 0}
\int \dd x \; \Feb (xy) \; (y-y')
\nonumber\\
&&\hskip1cm
\int \dd x' \; 
(y-y') \Deb(E_1) 
\left[
\frac{1}{(\I q_0 + \veps)^2}
-
\frac{1}{(\I q_0 + \tilde\veps)^2}
\right]
\nonumber
\end{eqnarray}
For the second equality, we used invariance under the reflection 
$(x,y,x',y')\rightarrow (-x,-y,-x',-y')$.
The convergence argument used in the analysis of the $T_1$ contribution to 
$\del^2_{\xi\eta} \Sigma_2$  can be summarized in the following Lemma.

\begin{lemma}\label{1001}
Let $\beta_0 \ge 0$ and $F : [\beta_0,\infty)\times[-1,1]^4 \to \bC$ be bounded, and 
$$
\lim\limits_{\beta \to \infty} F(\beta , x,y,x-\sfrac{u}{\be(y-y')},y' ) 
= f(x,y,y')
$$ 
a.e. in $(x,y,u,y')$. 
Let $ E_1 = (x-x') (y-y')$. Then 
\begin{eqnarray}
&&
\lim\limits_{\beta \to \infty} 
\int F (\beta, X) \; 
\Deb (E_1) \; (y-y') \;
\True{y>y'}
\dd^4 X
\nonumber\\
&=&
\int \dd y \int \dd y' \True{y>y'} \int \dd x \; 
f(x,y,y')\nonumber
\end{eqnarray}
\end{lemma}

\bigskip\noindent
Applying Lemma \ref{1001}, and using that $\veps (x,y,x,y') = x (y-y') = -\tilde\veps (x,y,x'y')$, 
we get
\begin{eqnarray}
\lim\limits_{\beta\to\infty} \bX_3 
&=&
-4 
\int\limits_{-1}^{1}\dd y \int\limits_{-1}^{y} \dd y'  \; (y-y') 
\int\limits_{-1}^{1} \dd x \; \Theta_{\frac12} (- xy) 
\; 
\cI (x,y,y') 
\nonumber
\end{eqnarray}
where $\Theta_{\frac12} (x) = \lim\limits_{\beta \to \infty} \Feb (-x)$
is the Heaviside function, except that $\Theta_{\frac12} (0)= 1/2$,
\begin{eqnarray*}
\cI (x,y,y') 
&=&
\left[
\frac{1}{(\I q_0 + x (y-y'))^2}
-
\frac{1}{(\I q_0 - x (y-y'))^2}
\right]\\
&=&
2 \I 
\mbox{ Im }
\frac{1}{(\I q_0 + x (y-y'))^2} .
\end{eqnarray*}
Thus, in the limit $\beta \to \infty$, $\bX_3$ does not contribute to 
the real part of $\del^2_\xi \Sigma_2 $ either. 

\bigskip\noindent
It remains to bound $\bX_2$. Here we integrate by parts, 
to remove the derivative from the approximate delta function, 
and then take $\beta \to \infty$. 

We first do the standard rewriting
\begin{eqnarray}
\bX_2
&=&
2 
\int \dd^4 X \True{y-y' > 0} (y-y')^2
\nonumber\\
&&\hskip1cm
\Feb (xy)\;
\Deb'((x-x')(y-y') ) \; 
\left[
\frac{1}{\I q_0 + \veps} 
-
\frac{1}{\I q_0 + \tilde\veps} 
\right]
\nonumber
\end{eqnarray}
We integrate by parts in $x$, using that 
$
(y-y') \Deb'((x-x')(y-y') )
=
\frac{\del}{\del x} \Deb ((x-x')(y-y') )
$. 
This gives the boundary term 
\begin{eqnarray*}
B_{1/\beta}
&=&
2 \sum_{x = \pm 1} x \;
\int \dd y \int \dd y' \True{y-y' > 0} \int \dd x' \; \Feb (xy)
\\
&&\hskip1cm
(y-y') \Deb((x-x')(y-y') ) \; 
\left[
\frac{1}{\I q_0 + \veps} 
-
\frac{1}{\I q_0 + \tilde\veps} 
\right]
\end{eqnarray*}
and two integral terms, namely 
\begin{eqnarray*}
I_1
&=&
2
\int \dd y \int \dd y' \True{y-y' > 0} \int \dd x \int \dd x' \; 
 y \Deb (xy)
\\
&&\hskip1cm
(y-y') \Deb((x-x')(y-y') ) \; 
\left[
\frac{1}{\I q_0 + \veps} 
-
\frac{1}{\I q_0 + \tilde\veps} 
\right]
\end{eqnarray*}
and 
\begin{eqnarray*}
I_2
&=&
 2
\int \dd y \int \dd y' \True{y-y' > 0} \int \dd x \int \dd x' \; 
\Feb (xy)
\\
&&\hskip1cm
(y-y') \Deb((x-x')(y-y') ) \; 
\left[
\frac{y'}{(\I q_0 + \veps)^2} 
-
\frac{y'-2y}{(\I q_0 + \tilde\veps)^2} 
\right]
\end{eqnarray*}
An obvious variant of Lemma \ref{1001}, where $x$ is summed over $\pm 1$ 
instead of integrated, applies to the boundary term and gives 
\begin{eqnarray*}
B_0
=
\lim\limits_{\beta \to \infty} B_{1/\beta}
&=&
2 \sum_{x = \pm 1} x \;
\int\limits_{-1}^{1} \dd y \int\limits_{-1}^{y} \dd y' \;
\Theta_{\frac12} (-xy)
\\
&&\hskip1cm
\left[
\frac{1}{\I q_0 + x(y-y')} 
-
\frac{1}{\I q_0 - x(y-y')} 
\right]
\end{eqnarray*}
Thus $B_0$ is real. Because $\Theta_{\frac12} (y) + \Theta_{\frac12} (-y) =1$, 
we get
\begin{eqnarray*}
B_0
&=&
4 
\int\limits_{-1}^{1} \dd y \int\limits_{-1}^{y} \dd y' \; 
\frac{y-y'}{q_0^2 + (y-y')^2}
\end{eqnarray*}
With $z=y-y'$, we have
\begin{eqnarray*}
B_0 
&=&
4
\int\limits_{-1}^{1} \dd y 
\int\limits_0^{1+y} \dd z \; \frac{z}{q_0^2 + z^2}
=
4
\int\limits_{-1}^{1} \dd y \;
\frac 12 
\ln (q_0^2 + z^2 ) \Big\vert_{0}^{1+y}
\nonumber\\
&=&
2
\int\limits_{-1}^{1} \dd y \;
\ln \left( 1 + \frac{(1+y)^2}{q_0^2} \right)
=
2 \int\limits_0^2 \dd \eta \;
\ln \left( 1 + \frac{\eta^2}{q_0^2} \right)
\nonumber\\
&=&
O(\log |q_0|) .
\end{eqnarray*}
In $I_1$, we change variables from $(x',x)$ to $(u,v)$ where
$u = \beta (x-x') (y-y')$ and $v=\beta x|y|$ and get
\begin{eqnarray}
I_1 
&=&
2\int\limits_{-1}^{1} \dd y \int\limits_{-1}^{y} \dd y'   \mbox{ sgn} (y) \;
\int\limits_{-\beta |y|}^{\beta |y|} \dd v\; \delta_1 (v)
\int\limits_{(\frac{v}{|y|}-\beta)(y-y')}^{(\frac{v}{|y|}+\beta) (y-y')}\dd u\; \delta_1 (u)
\nonumber\\
&&\hskip0.3cm
\left[
\frac{1}{\I q_0 + \veps(\frac{v}{\beta |y|}, y, \frac{v}{\beta |y|} - \frac{u}{\beta(y-y')},y') } 
-
\frac{1}{\I q_0 + \veps(\frac{v}{\beta |y|} - \frac{u}{\beta(y-y')},y',\frac{v}{\beta |y|}, y)} 
\right]
\nonumber
\end{eqnarray}
The integral converges in the limit $\beta \to \infty$ by dominated convergence. 
The last factor vanishes in that limit, so $I_1 \to 0$ as $\beta \to \infty$.

Finally, Lemma \ref{1001} implies that  $I_2^{(0)} = \lim_{\beta\to\infty} I_2$
exists and equals
\begin{eqnarray*}
I_2^{(0)}
&=&
2
\int\limits_{-1}^{1} \dd y \int\limits_{-1}^{y}\dd y' \int\limits_{-1}^{1} \dd x \; \Theta_{\frac12}(-xy)
\nonumber\\
&&\hskip1cm
\left[
\frac{y'}{(\I q_0 + x(y-y'))^2} 
-
\frac{y'-2y}{(\I q_0 - x(y-y'))^2} 
\right]
\end{eqnarray*}
In the real part, the terms with $y'$ in the numerator cancel, 
so that 
\begin{eqnarray}
\mbox{Re } I_2^{(0)}
&=&
2
\int\limits_{-1}^1\dd y \int\limits_{-1}^y \dd y' \int\limits_{-1}^1 \dd x \; \Theta_{\frac12}(-xy) 
\mbox{ Re }
\frac{2y}{(\I q_0 - x (y-y'))^2}
\nonumber
\\
&=&
2
\int\limits_{-1}^1\dd y  \int\limits_{-1}^1 \dd x \; 
\Theta_{\frac12}(-xy) 
\mbox{ Re } 
\int_{0}^{1+y} \dd z
\frac{2y}{(\I q_0 - x z)^2}
\nonumber\\
&=&
-4 \int\limits_{-1}^1 y \;\dd y  \int\limits_{-1}^1 \dd x \; 
\Theta_{\frac12}(-xy) 
\frac{1+y}{q_0^2+x^2(1+y)^2}
\nonumber\\
&=&
-4 \int\limits_{-1}^1 y \;\dd y  \int\limits_{-(1+y)}^{1+y} \dd v \; 
\Theta_{\frac12}
\left(
- \frac{v y }{1+ y}
\right) \;
\frac{1}{q_0^2+v^2}
\nonumber\\
&=&
-4
\int\limits_{0}^1 y \;\dd y  \int\limits_{1-y}^{1+y} 
\frac{\dd v }{q_0^2+v^2}
\nonumber
\end{eqnarray}
The contribution from the region $v \ge 1$ is obviously bounded. 
The remaining integral is 
\begin{eqnarray*}
\int\limits_{0}^1 y \;\dd y  \int\limits_{1-y}^{1} 
\frac{\dd v }{q_0^2+v^2}
&=&
\int\limits_{0}^1 
\frac{\dd v }{q_0^2+v^2}
\int\limits_{1-v}^{1} y \;\dd y 
\nonumber\\
&=&
\int\limits_{0}^1 
\frac{\dd v }{q_0^2+v^2}
\left(v - \frac{v^2}{2}\right)
=
O(|\log |q_0|\,|)
\end{eqnarray*}
\end{proof}

\subsection{One--loop integrals for the xy case}
For the discussion in Section \ref{discussion}, it is useful 
to calculate the lowest order contributions to the four--point function, 
the so--called bubble integrals. Again we restrict to the $xy$ 
type singularity. Because the fermionic bubble integrals
are not continuous at zero temperature, it is best to calculate
them by setting $q_0=0$ first, then letting the spatial part $q$ tend to $0$, 
all at a fixed inverse temperature $\beta$, and then calculate
the asymptotics as $\beta \to \infty$. 
$$
\figplace{phbubble}{0 in}{0.0in}\qquad\qquad\qquad
\figplace{ppbubble}{0 in}{0.0in}
$$

\subsubsection{The particle--hole bubble}
Write $x'=x+\xi$, $y'=y+\et$.
 The bubble is 
\begin{eqnarray}\label{Bph}
B_{\rm ph}(q_0,\xi,\et) 
&=&
\int_{[-1,1]^2} \dd x \dd y \; 
\frac{1}{\be} \sum_\omega
\frac{1}{\I \omega - xy}
\frac{1}{\I (q_0 +\omega) - x'y'}
\nonumber\\
&=&
\int_{[-1,1]^2} \dd x \dd y \; 
\frac{\Feb (xy) - \Feb(x'y')}{\I q_0 + xy - x'y'}
\nonumber\\
&=&
\int_0^1 \dd t
\int_{[-1,1]^2} \dd x \dd y \; 
\frac{xy-x'y'}{\I q_0 + xy - x'y'}
(-\Deb) (t xy  + (1-t) x'y')
\nonumber\\
\end{eqnarray}
Obviously, at $q_0 \ne 0$, $B_{\rm ph}(q_0,\xi,\et) \to 0$ as $(\xi,\et) \to 0$. 
So set $q_0 =0$, i.e.\ consider
$$
B_{\rm ph}^{(0)} = \lim\limits_{(\xi,\et) \to 0} B_{\rm ph}(0,\xi,\et) .
$$
\begin{lemma}
The large $\beta $ asymptotics of $B_{\rm ph}^{(0)}$ is
$$
B_{\rm ph}^{(0)} =
-2 \ln \be  + 2 K + O(\E^{-\be})
$$
where $K = \int_0^\infty \frac{\dd u}{2 \cosh^2 \frac{u}{2}} \ln u$.
\end{lemma}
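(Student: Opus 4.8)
Since $B_{\rm ph}^{(0)}$ is defined as $\lim_{(\xi,\eta)\to 0}B_{\rm ph}(0,\xi,\eta)$ at a fixed inverse temperature $\beta$, the first step is to simplify \Ref{Bph} at $q_0=0$. In the third line of \Ref{Bph} the prefactor $\frac{xy-x'y'}{\I q_0+xy-x'y'}$ equals $1$ whenever $xy\neq x'y'$, hence almost everywhere, so
$$
B_{\rm ph}(0,\xi,\eta)=-\int_0^1\dd t\int_{[-1,1]^2}\dd x\,\dd y\ \Deb\big(t\,xy+(1-t)x'y'\big)\,.
$$
For fixed $\beta$ the function $\Deb$ is bounded and uniformly continuous, and $x'=x+\xi\to x$, $y'=y+\eta\to y$ uniformly on the compact domain, so by dominated (indeed uniform) convergence the limit may be taken under the integral, giving
$$
B_{\rm ph}^{(0)}=-\int_{[-1,1]^2}\dd x\,\dd y\ \Deb(xy)=-\int_{[-1,1]^2}\dd x\,\dd y\ \frac{\beta}{4\cosh^2(\beta xy/2)}\,.
$$

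The plan is then to reduce this to a one-dimensional integral. The integrand depends only on $|xy|$, so the integral over $[-1,1]^2$ is four times the one over $[0,1]^2$. Performing the $y$--integral for fixed $x\in(0,1]$ by the substitution $u=\beta xy$ and using $\int_0^a\frac{\dd u}{\cosh^2(u/2)}=2\tanh(a/2)$ yields $\int_0^1\Deb(xy)\,\dd y=\frac{\tanh(\beta x/2)}{2x}$, whence
$$
B_{\rm ph}^{(0)}=-2\int_0^1\frac{\tanh(\beta x/2)}{x}\,\dd x\,.
$$

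The asymptotics now follow from a single integration by parts. Integrating by parts in $x$ with antiderivative $\ln x$ of $1/x$, the boundary terms vanish because $\tanh(\beta x/2)\ln x\to 0$ at both $x=1$ and $x=0$ (there $\tanh(\beta x/2)\sim\beta x/2$). This gives $\int_0^1\frac{\tanh(\beta x/2)}{x}\dd x=-\int_0^1\ln x\,\frac{\beta/2}{\cosh^2(\beta x/2)}\,\dd x$; substituting $u=\beta x$ and writing $\ln(u/\beta)=\ln u-\ln\beta$ produces
$$
\int_0^1\frac{\tanh(\beta x/2)}{x}\dd x=\ln\beta\;\tanh(\tfrac{\beta}{2})-\int_0^\beta\frac{\ln u}{2\cosh^2(u/2)}\,\dd u\,.
$$
Since $\tanh(\beta/2)=1+O(\E^{-\beta})$, and since $\frac{1}{2\cosh^2(u/2)}\le 2\E^{-u}$ so that $\int_\beta^\infty\frac{\ln u}{2\cosh^2(u/2)}\dd u$ is exponentially small, the right side equals $\ln\beta-K+O(\E^{-\beta})$ with $K=\int_0^\infty\frac{\ln u}{2\cosh^2(u/2)}\dd u$. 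Multiplying by $-2$ gives $B_{\rm ph}^{(0)}=-2\ln\beta+2K+O(\E^{-\beta})$, as claimed.

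I do not expect a genuine obstacle here: the computation is elementary. The only points that merit explicit justification are the interchange of the limit $(\xi,\eta)\to 0$ with the integral (immediate at fixed $\beta$ by uniform continuity on a compact set) and the convergence at $x=0$ of $\int_0^1\frac{\tanh(\beta x/2)}{x}\dd x$ together with the vanishing of the boundary terms in the integration by parts, both of which follow from $\tanh(\beta x/2)/x\to\beta/2$ as $x\to 0$.
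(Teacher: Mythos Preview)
Your proof is correct and essentially the same as the paper's: both reduce the double integral to $-\int_0^\beta \frac{\ln(\beta/u)}{\cosh^2(u/2)}\,\dd u$ and then extend to $\infty$; where the paper swaps the order of integration directly, you evaluate the inner integral as $\tanh(\beta x/2)$ and then integrate by parts in $x$, which is the same Fubini step in disguise. One small imprecision: your two remainder terms $\ln\beta\,(\tanh(\tfrac{\beta}{2})-1)$ and $\int_\beta^\infty\frac{\ln u}{2\cosh^2(u/2)}\,\dd u$ are each only $O(\ln\beta\cdot\E^{-\beta})$, not $O(\E^{-\beta})$; however they combine exactly to $-\tfrac12\int_\beta^\infty\frac{\ln(u/\beta)}{\cosh^2(u/2)}\,\dd u$, which \emph{is} $O(\E^{-\beta})$ because $\ln(u/\beta)$ vanishes at the lower endpoint --- this is how the paper organizes the error.
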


\begin{proof}
By \Ref{Bph},
\begin{eqnarray}
B_{\rm ph}^{(0)}
&=&
\int_{[-1,1]^2} \dd x \dd y \; 
(-\Deb) (xy)
=
4
\int_{[0,1]^2} \dd x \dd y \; 
(-\Deb) (xy)
\nonumber\\
&=&
-4
\int_0^1 \frac{\dd x}{x}\;
\int_0^{\be x} \frac{\dd u}{4 \cosh^2 \frac{u}{2}}
=
-4
\int_0^{\be } \frac{\dd u}{4 \cosh^2 \frac{u}{2}}
\; \ln \frac{\be}{u}
\nonumber\\
&=&
-2 \ln \be
\int_0^\infty \frac{\dd u}{2 \cosh^2 \frac{u}{2}}
+ 2 K 
- \int_\be^\infty
\frac{\dd u}{\cosh^2 \frac{u}{2}}\;
\ln \frac{u}{\be} .
\nonumber
\end{eqnarray}
The last integral is exponentially small in $\be $ because of the decay of 
$1/\cosh^2$. 
\end{proof}

\subsubsection{The particle--particle bubble}
This time write $x'=x-\xi$, $y'=y-\et$. The bubble is 
\begin{eqnarray}
B_{\rm pp}(q_0,\xi,\et) 
&=&
\int_{[-1,1]^2} \dd x \dd y \; 
\frac{1}{\be} \sum_\omega
\frac{1}{\I \omega - xy}
\frac{1}{\I (q_0 -\omega) - x'y'}
\nonumber\\
&=&
\int_{[-1,1]^2} \dd x \dd y \; 
\frac{\Feb (-xy) - \Feb(x'y')}{-\I q_0 + xy + x'y'}
\nonumber\\
&\hskip-15pt\gtoas{(\xi,\et) \to 0}\hskip-15pt&
\int_{[-1,1]^2} \dd x \dd y \; 
\frac{1}{-\I q_0 + 2 xy}\;
\tanh \left(\frac{\be}{2} xy\right)
\nonumber
\end{eqnarray}
Again, we set $q_0 =0$ and keep $\be < \infty$. Then 
\begin{eqnarray}
B_{\rm pp}^{(0)}
&=&
B_{\rm pp}(0,0,0)
\nonumber\\
&=&
\int_{[-1,1]^2} \dd x \dd y \; 
\frac{1}{2 xy}\;
\tanh \left(\frac{\be}{2} xy\right)
=
2
\int_{[0,1]^2} \dd x \dd y \; 
\frac{1}{xy}\;
\tanh \left(\frac{\be}{2} xy\right)
\nonumber\\
&=&
2
\int_0^1 \frac{\dd x}{x}
\int_0^x \frac{\dd E}{E}\; \tanh \frac{\be}{2} E
=
-2 \int_0^1 \dd E\; \frac{\ln E }{E}\; \tanh \frac{\be}{2} E
\nonumber\\
&=&
\left[
- (\ln E)^2 \; \tanh \frac{\be}{2} E
\right]_0^1
+
 \int_0^1 \dd E\; (\ln E )^2\; \frac{\be}{2 \cosh^2 \frac{\be}{2} E}
\nonumber\\
&=&
\int_0^{\be/2} \dd v \; 
\left(
\ln \frac{2v}{\be}
\right)^2
\frac{1}{ \cosh^2 v}
\nonumber\\
&=&
\int_0^{\infty} \dd v \; 
\left(
\ln \frac{2v}{\be}
\right)^2
\frac{1}{ \cosh^2 v}
-
\int_{\be/2}^\infty \dd v \; 
\left(
\ln \frac{2v}{\be}
\right)^2
\frac{1}{ \cosh^2 v}
\nonumber
\end{eqnarray}
Thus we have

\begin{lemma}
$$
B_{\rm pp}^{(0)}
=
(\ln \be)^2 -2K\ln \be+K' 
+
O\left(\E^{-\be}\right)
$$
where $K=\int_0^\infty \sfrac{\ln(2v)}{\cosh^2 v}dv$ 
and $K'=\int_0^\infty \sfrac{(\ln(2v))^2}{\cosh^2 v}dv$.
\end{lemma}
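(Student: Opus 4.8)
The plan is simply to read off the asymptotics from the closed form for $B_{\rm pp}^{(0)}$ already obtained in the lines immediately preceding the lemma, namely
$$
B_{\rm pp}^{(0)}
=
\int_0^{\infty} \dd v \;
\left(
\ln \sfrac{2v}{\be}
\right)^2
\frac{1}{ \cosh^2 v}
\;-\;
\int_{\be/2}^\infty \dd v \;
\left(
\ln \sfrac{2v}{\be}
\right)^2
\frac{1}{ \cosh^2 v} .
$$
This expression is exact at finite $\be$, so no limit interchange is at issue; all of the $\be$--dependence in the answer comes from the first integral, which I would evaluate in closed form, while the second will be shown to be exponentially small.

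First I would expand the square in the first integrand using $\ln\sfrac{2v}{\be}=\ln(2v)-\ln\be$, so that
$$
\left(\ln\sfrac{2v}{\be}\right)^2
=(\ln(2v))^2-2\,\ln\be\,\ln(2v)+(\ln\be)^2 ,
$$
and then integrate term by term against $\frac{1}{\cosh^2 v}$ over $[0,\infty)$. Using $\int_0^\infty \frac{\dd v}{\cosh^2 v}=\big[\tanh v\big]_0^\infty=1$ together with the definitions $K=\int_0^\infty\sfrac{\ln(2v)}{\cosh^2 v}\dd v$ and $K'=\int_0^\infty\sfrac{(\ln(2v))^2}{\cosh^2 v}\dd v$ of the constants in the statement, the first integral equals exactly $(\ln\be)^2-2K\ln\be+K'$. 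One should record here that $K$ and $K'$ are finite: near $v=0$ the integrands are integrable because $(\log v)^2$ is, and for large $v$ the factor $\cosh^{-2}v$ decays like $4\,\E^{-2v}$, which dominates the at most logarithmically growing remaining factors.

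It then remains only to estimate the tail. For $v\ge\be/2$ one has $\cosh v\ge\sfrac12\,\E^{v}$, hence $\cosh^{-2}v\le 4\,\E^{-2v}=4\,\E^{-\be}\E^{-2(v-\be/2)}$, and, writing $w=v-\be/2\ge 0$, also $0\le\ln\sfrac{2v}{\be}=\ln\big(1+\sfrac{2w}{\be}\big)\le\sfrac{2w}{\be}$. Hence
$$
0\le
\int_{\be/2}^\infty \dd v\;
\left(\ln\sfrac{2v}{\be}\right)^2\frac{1}{\cosh^2 v}
\le
\frac{16\,\E^{-\be}}{\be^2}\int_0^\infty w^2\,\E^{-2w}\,\dd w
=\frac{4\,\E^{-\be}}{\be^2}
=O(\E^{-\be}),
$$
which is the claimed remainder, and the proof is complete.

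Since the closed form is already in hand and everything that follows is an elementary one--dimensional manipulation, there is no genuine obstacle here; the only points needing a little care are the tail bound above and the verification that the constants $K$ and $K'$ converge. (As a side remark, the substitution $u=2v$ shows that this $K$ coincides with the constant $K$ appearing in the particle--hole bubble lemma, so the two lemmas share the same $K$.)
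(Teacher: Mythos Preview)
Your proof is correct and follows exactly the approach implicit in the paper: the computation preceding the lemma derives the integral representation, and the lemma simply records the result of expanding $(\ln(2v/\be))^2$ and using $\int_0^\infty\cosh^{-2}v\,\dd v=1$ together with the exponential decay of $\cosh^{-2}v$ on $[\be/2,\infty)$. Your tail estimate and the remark that the two $K$'s coincide (via $u=2v$) are both accurate.
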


\section{Interpretation} 
\label{discussion}

\noindent
Let us discuss these results a bit more informally.
The above calculations for the $xy$ case can be summarized as follows. 
Evidently, the derivatives we were looking at diverge in the limit $q_0 \to 0$
at zero temperature. To proceed, we discuss positive temperatures and 
replace $q_0 $ by $\pi/\beta$, and thus translate 
everything into $\beta$--dependent quantities.
We have proven that to all orders $r$ in $\la$,
\begin{equation}
|\nabla \Sigma_r|
\le  \const
\end{equation}
(where the constant depends on the order $r$ in $\la$). 
In the model 
computations of the last section, we have seen
that to second order in the coupling constant $\la$
\begin{eqnarray*}
\mbox{ Im } \del_0 \Sigma 
&\sim& 
-4 \ln 2 \; (\la \ln \beta)^2
\\
\mbox{ Re }\sfrac{\del^2}{\del\xi\del\et} \Sigma
&\sim& 
(2 + 4 \ln 2) \; (\la \ln \beta)^2
\\
\mbox{ Re }\sfrac{\del^2}{\del\xi^2} \Sigma
&\sim& 
O (\la^2 \ln \beta)
\\
B_{\rm ph} (0)
&\sim& 
-2 \; \la \ln \beta
\\
B_{\rm pp} (0)
&\sim& 
 \la \; (\ln \beta)^2 
- \frac{\al}{2} (\la \ln \beta)^2 
\end{eqnarray*}
We have redefined $B_{\rm ph}$ and $B_{\rm pp}$ to include the appropriate
coupling constant dependence.
The line for $B_{\rm pp} $ includes second order corrections 
to the superconducting vertex, where the coefficient given by the
loop integral is $\al$. If $\la $ is negative (attractive interaction),
the superconducting instability driven by the $\la (\ln \beta)^2$ term
is always strongest, but if $\la >0$ (repulsive bare interaction), 
the $\la (\ln \beta)^2$ term suppresses the leading order Cooper 
pair interaction,  
and the higher order term proportional to $\alpha$ is only of order
$(\la \ln \beta)^2$. In this case, all terms that can drive instabilities
are linear or quadratic in $\la \ln \beta$. 
A first attempt to weigh the relative strength of these 
divergences is to look at the prefactors. 
Here it seems 
that the second derivative gets the largest contribution. The asymmetry between 
this logarithmic divergence and the boundedness of the gradient 
is striking. 

In the following we discuss the physical significance of the above
findings. This discussion is not rigorous, but it reveals some interesting 
possibilities that should be studied further. 

We first note that the above results were achieved in renormalized 
perturbation theory, that is, the above all--order results are in the 
context of an expansion where a counterterm is used to fix the Fermi 
surface. (The second order explicit calculations assume only 
that the first order corrections can be taken into account by a shift 
in $\mu$, which is true because the interaction is local.) 
Using a counterterm is the only way to get an all--order expansion 
that is well--defined in the limit of zero temperature. A scheme where 
the scale decomposition adapts to the Fermi surface movement was outlined 
in \cite{PeSaICMP} and developed mathematically in \cite{PePhD} and 
\cite{PeSa}; but in any such scheme the expansions have to be done iteratively 
and cannot be cast in the form of a single renormalized expansion, because the 
singularity moves in every adjustment of the Fermi surface \cite{PeSaICMP}.
The meaning of the counterterm was explained in detail in 
\cite{FST1,FST2, FST3,FST4}. In brief, the dispersion relation we are using in 
our propagators is not the bare one, but the renormalized one, whose zero 
level set is the Fermi surface of the interacting system. 
Thus, we have in fact made the assumption that

\begin{quote}
the Fermi surface of the {\it interacting} system has the properties \Ha--\Hf.
In particular it  contains singular points and these singular points are 
nondegenerate. 
\end{quote}

\noindent
We shall first discuss our results under this assumption and then 
speculate about how commonly it will be valid. 
In the case of a  nonsingular, strictly convex, curved Fermi surface, 
there was a similar assumption, which was, however, mathematically justified 
by our proof of an inversion theorem \cite{FST4} that gives a bijective 
relation between the free and interacting dispersion relation. 

\subsection{Asymmetry and Fermi velocity suppression}
First note that the regularized (discrete--time) 
functional integral for many--body systems
has a symmetry that allows one to make an arbitrary {\em nonzero} rescaling
of the field variables. This is based on the behaviour of the measure under 
$\psi_i \to g_i \psi_i$ and $\psq_i \to \tilde g_i \psq_i$. The result is
\begin{equation}\label{ggt}
\int
\prod_i \dd \psq_i \dd \ps_i \; \E^{- \cA (\psq,\ps)}
=
\left(\prod_i g_i \tilde g_i \right)^{-1}
\prod_i \dd \psq_i \dd \ps_i \; \E^{- \cA (\tilde g^{-1}\psq, g^{-1}\ps)}
\end{equation}
Source terms get a similar rescaling. This can, of course, be used to 
remove a factor $Z^{-1}$  from the quadratic term in the fields,
but the factor $Z$ (see \Ref{Zq}) will then reappear in the interaction and source terms.
Note that $Z$ depends on momentum. 
In the limit of very small $Z$, some terms
may get greatly enhanced or suppressed. 

The Fermi velocity is defined as $v_F(p) = Z \nabla \Si$ on the Fermi surface. 
If one extrapolates the above formula for $\del_0 \Si_2$ to $\beta \to \infty$,
the $Z$ factor, defined in \Ref{Zq}, becomes 
zero at the Van Hove points. 

There is a crucial difference between the one-- and two--dimensional
cases. In dimension one, both $\del_0 \Sigma_2$ {\em and} $\nabla \Sigma_2$ 
behave like $\la^2 \log \beta$, so that, after extracting the 
field strength renormalization, the Fermi velocity retains its original value. 
In our two--dimensional situation, however, 
$\nabla \Sigma$ remains bounded, and thus the Fermi velocity gets 
suppressed in a neighbourhood of the Van Hove singularity because it contains 
a factor of $Z$.  
Because the time derivative term in the action is 
$Z^{-1} \psq \I k_0 \ps$ and $k_0$ is an odd multiple of the temperature 
$T= \beta^{-1}$, 
one can also interpret $Z(k)^{-1} T =  T(k)$ heuristically as a 
``momentum--dependent temperature'' that varies over the Fermi surface
and that increases as one approaches the Van Hove points (``hot spots'').
This behaviour is illustrated in Figure \ref{fig1}.  

\begin{figure}

\epsfxsize=14cm
\centerline{\epsffile{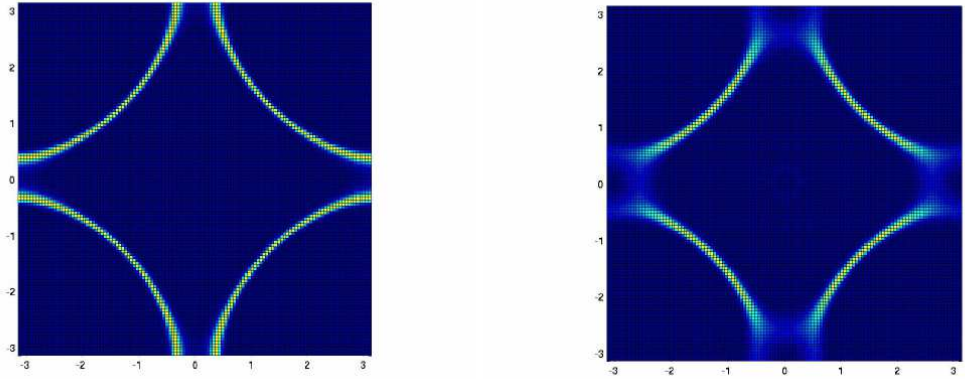}}

\caption{Comparison of the derivative of the Fermi function, 
where on the right, the inverse temperature $\beta$ is scaled 
by an angle--dependent factor that vanishes
near the points $(\pi,0)$ and $(0,\pi)$. The dark regions correspond 
to small derivatives, the bright ones to large derivatives.
}
\label{fig1}
\end{figure}

\subsection{Inversion problem}
There is, however, also a more basic problem that is exhibited 
by these results. It arises when one starts 
questioning the assumption that the {\it interacting} Fermi surface
contains singularities. 
The second derivative of the self--energy 
in spatial directions is divergent at zero temperature. 
Thus even the lowest nontrivial correction may change the structure
of $e$ significantly. This is related to the inversion problem, 
which was solved for strictly convex Fermi surfaces in \cite{FST4}.

In all of the following discussion, we concentrate on two dimensions and
assume that by adjusting $\mu$, we can arrange things such that the 
interacting Fermi surface still contains a point where the gradient 
of  the dispersion relation vanishes. 

In the general theory of expansions for many--fermion systems, 
the most singular terms are those created by the movement
of the Fermi surface. Counterterms need to be used to make
perturbation theory well-defined, or an adaptive scale decomposition
has to be chosen. To show that the model with counterterms 
corresponds to a  bare model in some desired class, one needs to prove 
an inversion theorem, which requires regularity estimates. 
To get an idea, it is instructive to consider a neighbourhood 
of a  regular point on the Fermi surface.  By an affine transformation
to coordinates $(u,v)$ in momentum space, 
i.e. shifting the origin to that point and rotating so that  the 
tangent plane to the Fermi surface  is given by $u=0$, the 
function $e$ can be transformed to a function 
$\tilde e (u,v) = u + \frac{\kappa}{2} v^2 + \ldots$, 
where $\kappa$ denotes the curvature and $\ldots$ the 
higher order terms. (By an additional change of coordinates, 
the error terms can be made to vanish, but this is not important here.) 
Let $\Sigma_0 ( p)$ be the self--energy at frequency zero 
(or $\pm \pi/\beta$).  Then 
$$
(e+\Sigma_0)^{\sim} (u,v)
=
u + \frac{\kappa}{2} v^2 +  \la \si (u,v) .
$$
In order for the correction $\la\si$ not to overwhelm the zeroth 
order contribution $e$, we need $\del_u \si$ to be bounded and 
$\del_v^2 \si$ to be bounded. Then the correction is small when $\la$
is small. Note that different regularity is needed in normal 
and tangential directions. Indeed, even in the absence of Van Hove 
singularities, the function $\si$ is not twice differentiable in $u$ at 
$T=0$, but it is $C^2$ in $v$ if $e$ has the same property \cite{FST2}.

Thus, the mere divergence of a derivative does not tell us about a potential
problem with renormalization; the relevant question is whether the correction
is larger than the zeroth order term.  In the strictly convex, curved case, 
there is no problem.  However, in two dimensions, the normal form for the 
Van Hove singularity, $\tilde e (uv) = uv$ gets changed significantly 
because 
\begin{equation}
\tze_2
=
\sfrac{\del^2\ }{\del u \del v}
\left[
e(u,v) + \tilde \si (u,v)
\right]\big|_{u=v=0}
\end{equation}
diverges at zero temperature. 
To leading order in $q_0$, $u$ and $v$,
our result for the inverse full propagator to second order 
takes the form
$
\ze_2 \; \I q_0  - \tze_2 \; uv
$
with 
\begin{eqnarray*}
\ze_2
&=&
1+ \vth_2 (\la \ln\beta)^2
\\
\tze_2
&=&
1+ \tilde\vth_2  (\la \ln\beta)^2 .
\end{eqnarray*}
Because $\tze_2$ diverges in the zero--temperature limit, 
it is not even in second order consistent to assume that
the type of $(u,v) = (0,0)$ as a critical point  of
$e$ and $E = e + \tilde \sigma$ is the same.
One can attempt to fix this problem by rescaling the fields by $\tze^{1/2}$ .
Then the $\I q_0$ term gets rescaled similarly because $\ze$ depends
on the same combination of $\la $ and $\ln \beta$ as $\tze$. 
It should, however, be noted that,
as in the above--discussed suppression of the Fermi velocity,
a corresponding rescaling of all interaction and source terms also occurs. 
In particular, $\si$, which gives the Fermi surface shift, 
itself gets replaced by $\si \tze^{-1}$. 
Because $\si$ remains bounded (only its derivatives diverge),
this would imply that the Fermi surface shift gets scaled down in 
the rescaling transformation, 
indicating a ``pinning'' of the Fermi surface at the Van Hove points. 

In our case, $\vth_2 = 4 \ln 2$ and $\tilde \vth_2 = 2 + 4 \ln 2$. 
With these numerical values, $\tze_2$ is larger than $\ze_2$, 
so $\tze_2$ appears as the natural factor for the rescaling. 
In the Hubbard model case, the transformations leading to the 
normal form depend on the parameter $\theta$ in \Ref{hubdisp}, 
so that one can expect 
$\ze$ and $\tze$ to depend on $\theta$. The study of these
dependencies, as well as the interplay between the two critical
points that contribute, is left to future work. 

One point of criticism of the Van Hove scenario has always been 
that it appears nongeneric because the logarithmic singularity
gets weakened very fast as one moves away from the Van Hove 
densities. 
If the above speculations can be substantiated by careful studies, 
the singular Fermi surface scenario may turn out to be much more natural 
than one would naively assume. Moreover, there is a natural way how extended
Van Hove singularities may arise by interaction effects. 

\appendix

\section{Interval Lemma}

The following standard result is included for the convenience of the reader.

\begin{lemma}\label{le:interval}
Let $\ep,\ \eta$ be strictly positive real numbers and $k$
be a strictly positive integer. Let $I\subset\bR$ be an interval (not
necessarily compact) and $f$ a $C^k$ function on $I$ obeying
$$
|f^{(k)}(x)|\ge\et\qquad{\rm for\ all\ }x\in I
$$ 
Then
$$
{\rm Vol}\set{x\in I}{|f(x)|\le\ep}\le 2^{k+1}\big(\sfrac{\ep}{\et}\big)^{1/k} \; .
$$
\end{lemma}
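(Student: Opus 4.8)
The plan is to prove the bound by induction on the order $k$, introducing an auxiliary threshold $\delta>0$ during the induction and optimizing over it only at the very end. For the base case $k=1$: since $f'$ is continuous and $|f'|\ge\eta>0$ on $I$, the derivative has constant sign, so $f$ is strictly monotone on $I$; hence $\{x\in I:|f(x)|\le\epsilon\}$ is again an interval, $f$ varies by at most $2\epsilon$ across it, and $|f(b)-f(a)|\ge\eta|b-a|$ there, so its length is at most $2\epsilon/\eta\le 2^{2}(\epsilon/\eta)$.

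For the inductive step, assume the statement for $k-1$ and let $|f^{(k)}|\ge\eta$ on $I$. Fix $\delta>0$ and use the inclusion
\[
\{x\in I:|f(x)|\le\epsilon\}\subseteq\{x\in I:|f'(x)|\le\delta\}\cup\{x\in I:|f(x)|\le\epsilon,\ |f'(x)|>\delta\}.
\]
Applying the induction hypothesis to $f'$, whose $(k-1)$-st derivative is $f^{(k)}$, the first set has measure at most $2^{k}(\delta/\eta)^{1/(k-1)}$. For the second set I would use that $f^{(k)}$ has no zero, so by iterated Rolle's theorem $f'$ has at most $k-1$ zeros; thus, up to a finite set, $I$ splits into at most $k$ open subintervals on each of which $f'$ has a fixed strict sign, hence $f$ is strictly monotone. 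On such a subinterval $J$, the set $\{x\in J:|f(x)|\le\epsilon\}$ is a subinterval on which $f$ changes by at most $2\epsilon$; writing this change as $\int f'$ and using $|f'|>\delta$ on its intersection with $\{|f'|>\delta\}$, that intersection has length at most $2\epsilon/\delta$. Summing over the at most $k$ subintervals bounds the second set by $2k\epsilon/\delta$.

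Combining the two bounds gives measure at most $2^{k}(\delta/\eta)^{1/(k-1)}+2k\epsilon/\delta$. Choosing $\delta=\eta^{1/k}\epsilon^{(k-1)/k}$ makes the two terms equal to $2^{k}(\epsilon/\eta)^{1/k}$ and $2k(\epsilon/\eta)^{1/k}$ respectively, so the total is $(2^{k}+2k)(\epsilon/\eta)^{1/k}\le 2^{k+1}(\epsilon/\eta)^{1/k}$, using $2k\le 2^{k}$ for all $k\ge1$. The only genuinely delicate point is the control on the number of monotonicity intervals of $f$ in the inductive step: one must confirm that $|f^{(k)}|\ge\eta$ forces $f'$ to have at most $k-1$ sign changes, so that "piecewise monotone with a bounded number of pieces" is legitimate; the remaining ingredients are the elementary estimate $\text{length}\le(\text{variation})/(\text{minimal slope})$ and the one-variable optimization of $\delta$, both routine.
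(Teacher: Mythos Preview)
Your argument is correct, but it takes a different route from the paper's. The paper's induction peels off one derivative at a time from the top: since $|f^{(k)}|\ge\eta$, the base case applied to $f^{(k-1)}$ shows that $\{|f^{(k-1)}|\le\eta\alpha\}$ (with $\alpha=(\epsilon/\eta)^{1/k}$) is contained in a \emph{single} subinterval $I_0$ of length at most $2\alpha$; on each of the at most two complementary subintervals one has $|f^{(k-1)}|\ge\eta\alpha$, and the $(k-1)$--case of the lemma (with $\eta$ replaced by $\eta\alpha$) applies directly there. The recursion $c_k=2+2c_{k-1}$, $c_1=2$, gives $c_k\le 2^{k+1}$. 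No auxiliary $\delta$, no optimization step, and no appeal to Rolle's theorem are needed.

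Your approach instead descends all the way to $f'$ in one shot via the inductive hypothesis, then has to control the complementary region $\{|f'|>\delta\}$ by counting monotonicity pieces of $f$; this is where Rolle (together with $f^{(k)}\ne 0$) enters to cap the number of zeros of $f'$ at $k-1$. That step is fine, and your subsequent optimization of $\delta$ is clean. The upshot is that both proofs reach the same constant $2^{k+1}$; the paper's version is a bit more economical in its ingredients, while yours makes more explicit the trade--off between the ``small derivative'' and ``large derivative'' regions.
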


\begin{proof}
Denote $\al =\big(\sfrac{\ep}{\et}\big)^{1/k}$. In terms $\al$, we must show
$$
|f^{(k)}(x)|\ge\sfrac{\ep}{\al^k}\quad{\rm for\ all\ }x\in I
\quad\Longrightarrow\quad
{\rm Vol}\set{x\in I}{|f(x)|\le\ep}\le 2^{k+1}\al
$$
Define $c_k$ inductively by $c_1=2$ and $c_k=2+2c_{k-1}$. Because
$b_k=2^{-k}c_k$ obeys $b_1=1$ and $b_k=2^{-k+1}+b_{k-1}$ we have 
$b_k\le 2$ and hence $c_k\le 2^{k+1}$. We shall prove
$$
|f^{(k)}(x)|\ge\sfrac{\ep}{\al^k}\quad{\rm for\ all\ }x\in I
\quad\Longrightarrow\quad
{\rm Vol}\set{x\in I}{|f(x)|\le\ep}\le c_k\al
$$
by induction on $k$.

Suppose that $k=1$ and let $x$ and $y$ be any two points in 
$\set{x\in I}{|f(x)|\le\ep}$. Then
$$
|x-y|=\sfrac{|x-y|}{|f(x)-f(y)|}|f(x)-f(y)|=\sfrac{|f(x)-f(y)|}{|f'(\ze)|}
\le \sfrac{2\ep}{|f'(\ze)|}
$$
for some $\ze\in I$. As $|f'(\ze)|\ge\sfrac{\ep}{\al}$ we have $|x-y|\le2\al$.
Thus $\set{x\in I}{|f(x)|\le\ep}$ is contained in an interval of length
at most $2\al$ as desired.

Now suppose that the induction hypothesis is satisfied for $k-1$ and that
$|f^{(k)}(x)|\ge\sfrac{\ep}{\al^k}$ on $I$. As in the last paragraph the
set $\set{x\in I}{|f^{(k-1)}(x)|\le\sfrac{\ep}{\al^{k-1}}}$ is contained
in a subinterval $I_0$ of $I$ of length at most $2\al$. Then $I$ is the
union of $I_0$ and at most two other intervals $I_+,I_-$ on which 
$|f^{(k-1)}(x)|\ge\sfrac{\ep}{\al^{k-1}}$. By the inductive hypothesis
\begin{eqnarray*}
{\rm Vol}\set{x\in I}{|f(x)|\le\ep}
&\le&
{\rm Vol}(I_0)+\sum_{i=\pm}{\rm Vol}\set{x\in I_i}{|f(x)|\le\ep}
\nonumber\\
&\le& 
2\al+2c_{k-1}\al=c_k\al
\nonumber
\end{eqnarray*}
\end{proof}

\section{Signs etc.} 
\label{bloodysigns}
The restrictions \Ref{restri} are summarized in the following table 
\begin{equation}\label{restri2}
\begin{array}{ccc|c}
E_2=xy&E_3=x'y'&E_1=(x-x')(y-y')&s_f\\
\hline
+&-&-& (-1) \\
-&+&+& (-1)
\end{array}
\end{equation}
In both cases, the product of indicator functions resulting from 
the limit of Fermi functions is $-1$. 

Let $R$ be the reflection at zero, 
\begin{equation}\label{reflat0}
R(x,y,x',y') = (-x,-y,-x',-y'). 
\end{equation}
The function
$$
\veps 
=
\veps (x,y,x',y')
=
xy'+x'y - 2x'y'
$$
satisfies
$$
\veps (x,y,x',y')
=
\veps (-x,-y,-x',-y'). 
$$
The function $D(x,y,x',y') = y-y'$ satisfies
$$
D(R(x,y,x',y')) = - D (x,y,x',y')
$$
The function $F(x,y,x',y') = (x-x') (y-y')$ is invariant under $R$. 

In the following we list all cases of signs for $x$, $x'$, $y$ and $y'$, 
together with $\veps, D, F$ written as functions of 
$$
\rx=|x|, \ry=|y|, \rx'=|x'|, \ry'=|y'|. 
$$
to be able to restrict the integrals to $[0,1]$ whenever this is 
convenient (by transforming to $\rx, \ldots, \ry'$ as integration variables),
and to exhibit some important sign changes. 
In the last column, we list the condition $\rho_n$ obtained from the 
restriction on the sign of $(x-x')(y-y')$ in \Ref{restri2}.

$$
\begin{array}{rc|l|l|r|c}
n&x\,y\;x'\,y'&\veps_n=xy'+x'y - 2x'y'&D_n=y-y'& F_n & \rho_n\\
\hline
1&+++-& \veps_1=\rx'\ry+(2\rx'-\rx)\ry' & D_1=\ry+\ry'& (\rx-\rx')(\ry+\ry')   & \rx < \rx'\\
2&++-+& \veps_2=\rx \ry'+(2\ry'-\ry) \rx'& D_2=\ry-\ry'& (\rx+\rx')(\ry-\ry')   & \ry < \ry'\\
3&+-++& \veps_3=\rx \ry'-(2\ry'+\ry)\rx'& D_3=-(\ry+\ry')& - (\rx-\rx')(\ry+\ry') & \rx < \rx'\\
4&+---& \veps_4=\rx'\ry-(2\rx'+\rx)\ry'& D_4=-(\ry-\ry')& - (\rx+\rx')(\ry-\ry') & \ry < \ry'\\
\hline
5&++++& \veps_5=   \rx\ry'-\rx'(2\ry'-\ry) & & & \\
6&++--& \veps_6= -(\rx\ry'+\rx'(2\ry'+\ry)) & & & \\
7&+-+-& \veps_7= -(\rx\ry'-\rx'(2\ry'-\ry)) & & & \\
8&+--+& \veps_8=   \rx\ry'+\rx'(2\ry'+\ry) & & & \\ 
\hline
9 &---+&    \veps_1  & - D_1 & & \rho_1 \\ 
10&--+-&   \veps_2 & - D_2 &  & \rho_2 \\
11&-+--&   \veps_3 & - D_3 &  & \rho_3 \\
12&-+++& \veps_4 & - D_4 &  & \rho_4 \\
\hline
13&----& \veps_5 & & & \\
14&--++& \veps_6 & & & \\
15&-+-+& \veps_7 & & & \\
16&-++-& \veps_8 & & & \\
\end{array}
$$
Cases 1-4 and 9-12 obey the restrictions \Ref{restri2}. 
Cases 5-8 and 13-16 do not because there, the signs of $xy$ and $x'y'$ are the same. 
They are used to discuss some terms at finite $\beta$. Since the first two restrictions
are not satisfied, the column for the last restriction, $\rho$, is left empty 
in these cases. Case $n+8$ is obtained from $n$ by the reflection $R$. 
Thus 
\begin{equation}\label{eq:n+8}
\veps_{n+8} = \veps_n, \quad D_{n+8} = -D_n, \quad F_{n+8} = F_n, \quad \rho_{n+8} = \rho_n
\end{equation}
Moreover, 
\begin{equation}\label{eq:1-3}
\veps_1= - \veps_3 \mbox{ and } \veps_2= - \veps_4
\end{equation}
and 
\begin{equation}\label{eq:1-2}
\veps_2 (\ry,\rx,\ry',\rx')   
=
\veps_1 (\rx,\ry,\rx',\ry')
\mbox{ and }
\rho_2 (\ry,\rx,\ry',\rx')  
\Leftrightarrow
\rho_1 (\rx,\ry,\rx',\ry') .
\end{equation}

\end{document}